\newcommand{\rmv}[1]{}
\newcommand{\MP}[1]{}
\newcommand{\RMP}[1]{}
\newcommand{\VMP}[1]{}
\newcommand{\REG}{\textbf{R}}
\newcommand{\wu}{\textsc{Write}}
\newcommand{\ru}{\textsc{Read}}
\newcommand{\W}{\textsc{w}}
\newcommand{\R}{\textsc{r}}
\newcommand{\AW}{\mathcal{A}}
\newcommand{\pv}{\textrm{communication}}
\newcommand{\MM}{\mathcal{M}}
\newcommand{\UM}{\mathcal{U}}
\newcommand{\wrt}[1]{\mathit{write}(#1)}
\newcommand{\rd}[1]{\mathit{read}(#1)}
\newcommand{\rinit}{u_0}
\newcounter{NewCounter}
\newcounter{ReallyNewCounter}
\newtheorem{property}[ReallyNewCounter]{Property}
\newtheorem{theorem}[NewCounter]{Theorem}
\newtheorem{claim}{Claim}[NewCounter]
\newtheorem{corollary}[NewCounter]{Corollary}
\newtheorem{lemma}[NewCounter]{Lemma}
 \newtheorem{observation}[NewCounter]{Observation}
 \newtheorem{assumption}[NewCounter]{Assumption}
\newtheorem{definition}[NewCounter]{Definition}
\title{On Atomic Registers and Randomized Consensus\\ in M\&M Systems\footnote{This work is an extension of results presented in preliminary form in~\cite{vassos2019}.}}
\author{Vassos Hadzilacos \qquad Xing Hu \qquad Sam Toueg\\~\\Department of Computer Science\\University of Toronto\\Canada}
\date{\today}
\begin{document}
\maketitle 
\begin{abstract}
Motivated by recent distributed systems technology,
	Aguilera \emph{et al.} introduced a hybrid model of distributed computing,
	called \emph{message-and-memory model} or \emph{m\&m model} for short~\cite{aguilera-etal18}.
In this model, processes can communicate by message passing
	and also by accessing some shared memory (e.g., through some RDMA connections).
We first consider the basic problem of implementing an atomic single-writer multi-reader (SWMR)
	register shared by \emph{all} the processes in m\&m systems.
Specifically, we give an algorithm that implements such a register in m\&m systems
	and show that it is optimal in the number of process crashes that it can tolerate.
This generalizes the well-known implementation of an atomic SWMR register
	in a pure message-passing system~\cite{attiya1995sharing}.
We then combine our register implementation for m\&m systems
	with the well-known randomized consensus algorithm of Aspnes and Herlihy~\cite{AH1990consensus},
	and obtain a randomized consensus algorithm for m\&m systems that is
	also optimal in the number of process crashes that it can tolerate.
%
%
Finally, we determine the minimum number of RDMA connections that is sufficient
	to implement a SWMR register, or solve randomized consensus, in an m\&m system with $t$ process crashes, for any given $t$.
\end{abstract}

\section{Introduction}
Motivated by recent distributed systems technology~\cite{genz,genzDRAM,rdma,iwarp,disaggmem,roce},
	Aguilera \emph{et al.} introduced a hybrid model of distributed computing,
	called \emph{message-and-memory model} or \emph{m\&m model} for short~\cite{aguilera-etal18}.
In this model processes can communicate by message passing
	and also by accessing some shared memory.
Since it is impractical to share memory among \emph{all} processes in large distributed systems \cite{farm1,herd,fasst,lite2017},
	the m\&m model allows us to specify 
	which subsets of processes share which sets of registers.
Among other results,
	Aguilera \emph{et al.} show that it is possible to leverage the advantages
	of the two communication mechanisms (message-passing and share-memory)
	to improve the fault-tolerance of randomized consensus algorithms
	compared to a pure message-passing system.

In this paper, we first consider the basic problem of implementing an atomic single-writer multi-reader (SWMR)
	register shared by \emph{all} the processes in m\&m systems, and 
	we give an algorithm that is optimal in the number of process crashes that it can tolerate.
This generalizes the fundamental ABD algorithm of Attiya, Bar-Noy, and Dolev that implements an atomic SWMR register
	in a pure message-passing system~\cite{attiya1995sharing}.
We then combine our register implementation for m\&m systems
	with the randomized consensus algorithm of Aspnes and Herlihy~\cite{AH1990consensus},
	and obtain a randomized consensus algorithm for m\&m systems that is
	also optimal in the number of process crashes that it can tolerate.
We now describe our results in more detail.

A \emph{general} m\&m system $\mathcal{S}_L$
	is specified by a set of $n$ asynchronous processes 
	that can send messages to each other over asynchronous reliable links,
	and by a collection $L=\{S_1, S_2,\dots, S_m\}$ where each $S_i$ is a subset of processes:
	for each $S_i$, there is a set of atomic registers that can be shared by processes in $S_i$ and only by them.
Even though the m\&m model allows the collection $L$ to be arbitrary,
	in practice hardware technology 
 	imposes a structure on $L$ \cite{farm1,herd}:
	for processes to share memory,
	they must establish a connection between them (e.g., an RDMA connection).
These connections are naturally modeled by an undirected \emph{shared-memory graph} $G$
	whose nodes are the processes and whose edges are shared-memory connections~\cite{aguilera-etal18}.
Such a graph $G$ defines what Aguilera \emph{et al.} call a \emph{uniform} m\&m system $\mathcal{S}_G$,
	where each process has atomic registers that it can share with its neighbours in $G$ (and only with them).
Note that $\mathcal{S}_G$ is the instance of the general m\&m system
	$\mathcal{S}_L$ with
	$L=\{S_1, S_2,\dots, S_n\}$ where
	each $S_i$ consists of a process and its neighbours in $G$.
Furthermore, if $G$ is the trivial graph with $n$ nodes but no edges,
	the m\&m system $\mathcal{S}_G$ that $G$ induces
	is just a pure message-passing asynchronous system with $n$ processes.

We consider the implementation of an atomic SWMR
	register $\REG$, shared by all the processes, in both general and uniform m\&m systems.
For each general m\&m system $\mathcal{S}_L$,
	we determine the maximum number of crashes $t_L$
	for which it is possible to implement $\REG$~in~$\mathcal{S}_L$:
	we give an algorithm that tolerates $t_L$ crashes
	and prove that no algorithm can tolerate more than $t_L$ crashes.
Similarly, for each shared-memory graph $G$ and its corresponding uniform m\&m system $\mathcal{S}_G$,
	we use the topology of $G$
	to determine the maximum number of crashes~$t_G$
	for which it is possible to implement $\REG$ in $\mathcal{S}_G$.
By specifying $t_G$ in terms of the topology of $G$,
	one can leverage results from graph theory to design m\&m systems
	that can implement $\REG$ with high fault tolerance
	and relatively few RDMA connections per process.
For example, it allows us to design an m\&m system with 50 processes
	that can implement a \emph{wait-free} $\REG$ (i.e., this implementation can tolerate \emph{any} number of process crashes)
	with only 7 RDMA connections per process;
	as explained in Section~\ref{noname}, this is optimal in some precise sense.

We then show how to solve randomized consensus in m\&m systems with optimal fault-tolerance.\footnote{This algorithm tolerates more failures than the one given for m\&m systems in~\cite{aguilera-etal18}.}
	This algorithm is derived in a simple way:
	we just substitute the atomic SWMR registers used by
	a known randomized consensus algorithm
	with our implementation of such registers for m\&m systems.
It is \emph{not} obvious, however, that one can always obtain a correct randomized consensus
	this way: it was recently shown that replacing the atomic registers of a randomized algorithm with register implementations
	that are \emph{only} linearizable
	may kill the termination property of that algorithm against a strong adversary~\cite{sltermination}.
So here we use a \emph{specific}
	randomized consensus algorithm, namely the one by Aspnes and Herlihy in~\cite{AH1990consensus},
	because it was shown that this algorithm works against a strong adversary even with \emph{regular} registers~\cite{hadzilacos2020randomized}.

We conclude the paper by determining the minimum number of RDMA connections required to
	achieve any desired degree of fault-tolerance
	when implementing SWMR registers or solving consensus in uniform m\&m systems.
Note that \emph{without any} RDMA connections, i.e., in a pure message-passing system,
	one can implement a SWMR atomic register,
	and solve randomized consensus, for up to $\lceil \frac{n}{2}\rceil -1$ process crashes (where $n$ is the number of processes).
We show here that the minimum number of RDMA connections required
	to tolerate $t > \lceil \frac{n}{2}\rceil -1$ crashes in a uniform m\&m system is simply $t$.

An important remark is now in order.
In this paper we consider RDMA systems where process crashes
	do not affect the accessibility of the shared registers of that system.
This is the case in systems where the CPU, the DRAM (main memory), and the NIC
	(Network Interface Controller) are separate entities:
	for example, in the InfiniBand cluster evaluated in~\cite{DARE}, the crash of a CPU, and of the processes that it hosts,
	does not prevent other processes from accessing its DRAM
	because it can use the NIC without involving the CPU; see also~\cite{farm1,genzDRAM,Orion}.

\section{Model outline}\label{model}
We consider m\&m systems with a set of $n$ asynchronous processes $\Pi = \{ p_1 , p_2 , \ldots,  p_n \}$ that may \emph{crash}.
To define these systems,
	we first recall the definition of atomic SWMR registers,
	and what it means to implement such registers.

\subsection{Atomic SWMR registers}

A register $R$ is \emph{atomic} if its read and write operations are \emph{instantaneous} (i.e., indivisible);
	each read 
	must return the value of the last write 
	that precedes it, or the initial value of $R$ if no such write exists.
A SWMR register $R$ is \emph{shared by a set $S \subseteq \Pi$ of processes} if it can be written
	(sequentially) by exactly one process $w \in S$
	and can be read by all processes~in~$S$; we say that $w$ is the \emph{writer} of $R$~\cite{Lam86}.	

\subsection{Implementation of atomic SWMR registers}\label{SWMR-Properties}

As in~\cite{attiya1995sharing}, we are interested in implementing atomic SWMR registers.
By implementation, we mean a \emph{linearizable} implementation of such registers,
	as we now explain.
In an object implementation,
	each operation spans an interval that starts
	with an \emph{invocation} and terminates with a \emph{response}.

\begin{definition}
Let $o$ and $o'$ be any two operations.
\begin{itemize}

\item $o$ \emph{precedes} $o'$ if the response of $o$ occurs 
	before \mbox{the invocation of $o'$.}

\item $o$ \emph{is concurrent with} $o'$ if neither precedes the other.

\end{itemize}
\end{definition}

Roughly speaking,
	an object implementation is \emph{linearizable}~\cite{HerlihyWing1990}
	if, although operations can be concurrent,
	operations behave as if they occur in a \emph{sequential} order (called ``linearization order'')
	that is consistent with the order in which operations actually occur:
	if an operation $o$ precedes an operation $o'$, then $o$ is before $o'$ in the linearization order
	(the precise definition is given in~\cite{HerlihyWing1990}).
	
It is well-known that linearizable implementations of atomic SWMR registers
	are characterized by two simple properties.
To define these properties, assume, without loss of generality, that the values successively
	written by the single writer $w$ of a SWMR register $R$ are distinct,
	and different from
	the initial value of $R$.\footnote{This can be ensured by the writer $w$
	writing values of the form $\langle sn , v \rangle$ where $sn$
	is the value of a counter that $w$ increments before each write.}
Let $v_0$ denote the \emph{initial value} of $R$,
	and $v_k$ denote the value written by the $k$-th write operation of $w$.
We say that a write operation \emph{$w$ immediately precedes a read operation $r$}
	if $w$ precedes $r$, and there is no write operation $w'$
	such that $w$ precedes $w'$ and $w'$ precedes $r$.
An atomic SWMR register implementation
	is linearizable if and only if it satisfies the following two properties.

\begin{property}
\label{p1}
If a read operation $r$ returns the value $v$
	then:
	
\begin{itemize}

\item there is a write $v$ operation that immediately precedes $r$ or is concurrent with $r$, or	
\item no write operation precedes $r$ and $v = v_0$.
\end{itemize}
\end{property}

\begin{property}
\label{p2}
If two read operations $r$ and $r'$ return values $v_k$ and $v_{k'}$, respectively,
	and  $r$ precedes $r'$, then $k \le k'$.
\end{property}

Henceforth, by ``implementation of an atomic SWMR register'',
	we mean a \emph{linearizable} implementation of such a register,
	i.e., one that satisfies the above \mbox{two~properties.}

\subsection{m\&m systems}
In this section, we define three types of m\&m systems.
In all models processes can communicate by sending messages and also by sharing some objects.
In the first model, processes can share objects of arbitrary types;
	in the second one, they can share any type of atomic registers (e.g., SWMR or MWMR registers);
	and in the third model, they can only share SWMR atomic registers.
More precisely, let $L=\{S_1, S_2,\dots, S_m\}$ be any bag
	of non-empty subsets of $\Pi = \{ p_1 , p_2 , \ldots, p_n \}$.

\begin{definition}\label{UniversalModel}
$\UM_{L}$ is the class of m\&m systems (induced by $L$), each consisting of:

\begin{enumerate}
\item\label{UMuno} The processes in $\Pi$.

\item\label{UMdue} Reliable asynchronous communication links between every pair of processes in~$\Pi$.

\item\label{UMtre}
The following set of shared objects:
	For each subset of processes $S_i$ in $L$, a non-empty set of objects that are shared
	by the processes in~$S_i$ (and only by them).
\end{enumerate}

\end{definition}

\begin{definition}\label{GrandModel}
$\MM_{L}$ is the class of m\&m systems (induced by $L$), each consisting of:

\begin{enumerate}
\item\label{Muno} The processes in $\Pi$.

\item\label{Mdue} Reliable asynchronous communication links between every pair of processes in~$\Pi$.

\item\label{Mtre} The following set of registers:
For each subset of processes $S_i$ in $L$, a non-empty set of atomic registers that are shared
	by the processes in $S_i$ (and only by them).
	
\end{enumerate}

\end{definition}

Note that $\MM_{L}$ includes m\&m systems that differ by the type and number of registers shared by the processes in each $S_i$.
Since we are interested in implementing atomic SWMR registers (shared by \emph{all} processes in the system),
	here we focus on an m\&m system of $\MM_{L}$
	in which the set of registers shared by the processes in each set $S_i$ are atomic SWMR registers.
More precisely, we focus on the m\&m system~$\mathcal{S}_{L}$ defined below:

\begin{definition}\label{gmm}
The general m\&m system $\mathcal{S}_{L}$ (induced by $L$) consists of:

\begin{enumerate}
\item\label{uno} The processes in $\Pi$.

\item\label{due} Reliable asynchronous communication links between every pair of processes in~$\Pi$.

\item\label{tre} The following set of registers:
	For each subset of processes $S_i$ in $L$ and each process $p \in S_i$, an atomic SWMR register,
	denoted $R_i[p]$, that can be written by $p$ and read by all processes in $S_i$ (and only by them).

\end{enumerate}

\end{definition}

In this paper, for every $L$, we give an algorithm that
	implements atomic SWMR registers shared by all processes
	in the m\&m system $\mathcal{S}_{L}$,
	and we show that this algorithm is optimal in the number of process crashes that can
	be tolerated.
In fact we prove a stronger result, any algorithm that implements such registers in \emph{any}
	m\&m system in $\UM_{L}$
	(where processes can shared arbitrary object, not just registers)
	cannot tolerate more crashes.

Without loss of generality we assume the following:

\begin{assumption}\label{singleton}
The bag $L=\{S_1, S_2,\dots, S_m\}$ of subsets of
	$\Pi = \{ p_1 , p_2 , \ldots, p_n \}$ is such that every process in $\Pi$ is in at least one of the subsets $S_j$ of $L$.
\end{assumption}

This assumption can be made without loss of generality
	because it does not strengthen the system $\mathcal{S}_{L}$ induced by $L$.
In fact, given a bag $L$ that does \emph{not} satisfy the above assumption,
	we can construct a bag that satisfies the assumption as follows:
	for every process $p_i$~in~$\Pi$ that is not contained in any $S_j$ of $L$,
	we can add the singleton set $\{ p_i \}$  to $L$.
Let $L'$ be the resulting bag.
By Definition~\ref{gmm}(\ref{tre}) above, adding  $\{ p_i \}$ to $L$ results in adding only a \emph{local} register to the induced system $\mathcal{S}_{L}$, namely, an atomic register that $p_i$ (trivially) shares only with itself.
So $\mathcal{S}_{L'}$ is just $\mathcal{S}_{L}$ with some additional local registers.
Note that a pure message-passing system (with no shared memory) with $n$ processes $p_1 , p_2 , \ldots, p_n$
	is modeled by the system $\mathcal{S}_{L}$ where
	$L = \{ \{ p_1 \} , \{ p_2 \} , \ldots ,\{ p_n \} \}$.

\subsection{Uniform m\&m systems}

Let $G=(V,E)$ be an undirected graph such that $V = \Pi$, i.e., the nodes of $G$ are the $n$ processes $p_1 , p_2 , \ldots ,p_n$ of the system.
For each $p_i \in V$, let $N(p_i)=\{p_j~|~(p_i,p_j)\in E\}$ be the \emph{neighbours} of $p_i$ in $G$,
	and let $N^+(p_i) = N(p_i)\cup \{p_i\}$.

\begin{definition}\label{SG-def}
The uniform m\&m system $\mathcal{S}_{G}$ (induced by $G$) is the m\&m system $\mathcal{S}_{L}$
	where~$L =  \{  S_1, S_2,\dots, S_n \}$ with $S_i = N^+(p_i)$.\footnote{Note that $L$ satisfies Assumption~\ref{singleton} because each $S_i = N^+(p_i)$ contains $p_i$.} 

\end{definition}

The graph $G$ induces the uniform m\&m system $\mathcal{S}_{G}$ where processes can communicate by message passing (via reliable asynchronous communication links), and also by shared memory as follows:
	for each process $p_i$, and every neighbour $p$ of $p_i$ in $G$ (including $p_i$)
	there is an atomic SWMR register $R_i[p]$
	that can be written by $p$ and read by every neighbour of $p_i$ in~$G$ (including~$p_i$).
We can think of the registers $R_i[*]$ as being physically located in the DRAM of the host of $p_i$,
	and every neighbour of $p_i$ accessing these registers over its
	RDMA connection to $p_i$ (which is modeled by an edge of $G$).\footnote{As we mentioned in the introduction,
	we assume that the crash of $p_i$
	does not prevent the neighbours~of~$p_i$
	from accessing the shared registers $R_i[*]$.}

\begin{figure}
\centering
\begin{minipage}[b]{.35\textwidth}
  \centering
 \hbox{\vspace{1.1cm} \hspace{-.5cm} \includegraphics[width=.70\linewidth]{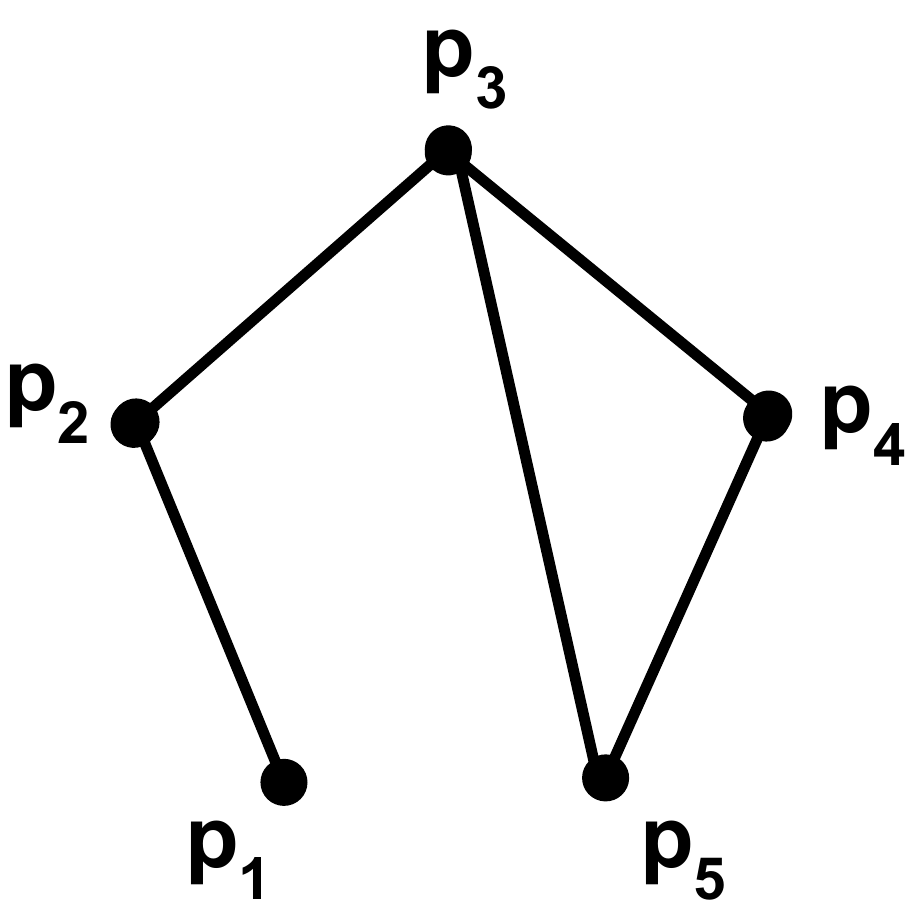}}
  \caption{A graph $G$}
  \label{G-fig}
\end{minipage}%
\begin{minipage}[b]{.6\textwidth}
  \centering
   \hbox{\hspace{-.75cm} \includegraphics[width=1.1\linewidth]{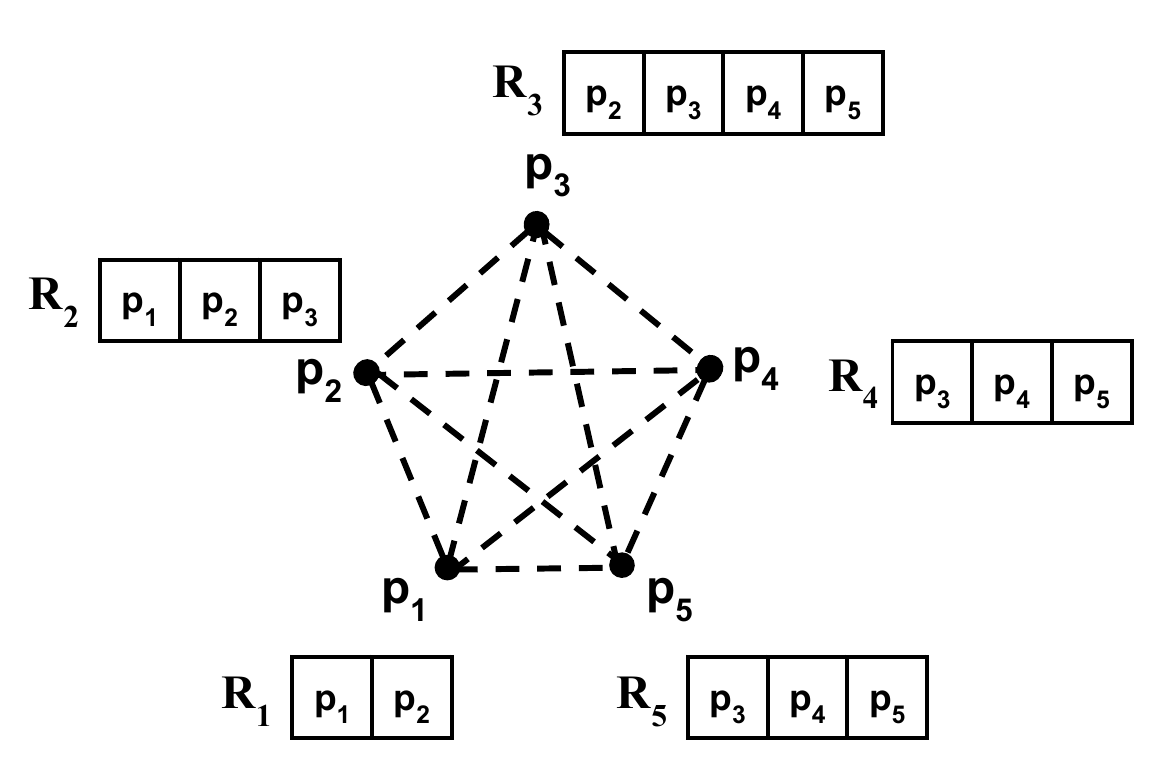}}
  \caption{The uniform m\&m system $\mathcal{S}_G$ induced by $G$}
  \label{SG-fig}
\end{minipage}
\end{figure}

For example, in Figures~\ref{G-fig} and~\ref{SG-fig}
	we show a graph $G$ and the uniform m\&m system $\mathcal{S}_G$ induced by $G$.
Here $G$ has five nodes representing processes $p_1, p_2, p_3, p_4, p_5$;
	the edges of $G$ represent the RDMA connections that allow these processes to share registers.
The uniform m\&m system $\mathcal{S}_G$ induced by $G$
	is the system $\mathcal{S}_L $ for $L= \{S_1, S_2, S_3, S_4, S_5\}$
	where each~$S_i$ consists of $p_i$ and its neighbours in $G$:
	specifically, $S_1 = \{p_1,p_2\}, S_2 = \{p_1,p_2,p_3\}$, $S_3 = \{p_2,p_3,p_4,p_5\}$,
	and $S_4 = S_5 = \{p_3,p_4,p_5\}$.
The box adjacent to each process $p_i$ in $\mathcal{S}_G$
	represents the atomic SWMR registers that are shared among $p_i$ and its neighbours in $G$
	(intuitively these registers are located at $p_i$'s host).
 For example, in the box adjacent to process $p_2$, the component labeled $p_1$ represents the register $R_2[p_1]$ that can be written by $p_1$ and read by all the neighbours of $p_2$ in $G$, namely $p_1,p_2$, and $p_3$.
 Similarly, registers $R_2[p_2]$ and $R_2[p_3]$ can be written by $p_2$ and $p_3$, respectively,
 	and read by $p_1,p_2$,~and~$p_3$.
The dashed lines in Figure~\ref{SG-fig} represent the asynchronous message-passing links between the~processes~of~$\mathcal{S}_G$.

\section{Atomic SWMR register implementation in general m\&m systems}\label{regsforgensys}

Let $\mathcal{S}_{L}$ be the general m\&m system induced by a bag
	$L = \{S_1,\dots ,S_m\}$ of subsets of $\Pi = \{ p_1 , p_2 , \ldots, p_n \}$.
Recall that in system $\mathcal{S}_{L}$, for every $S_i$ in $L$, the processes in $S_i$
	share some atomic SWMR registers that can be read \emph{only} by the processes in $S_i$.
In this section, we determine the maximum number of process crashes $t_L$
	that may occur in $\mathcal{S}_{L}$ such that
	it is possible to \emph{implement} in $\mathcal{S}_{L}$
	a shared atomic SWMR register readable by \emph{all} processes in $\mathcal{S}_L$.
Intuitively,
	from the definition of $t_L$:
	(a) if $t \le t_L$ processes may crash, then any two subsets of processes of size $n-t$ either intersect,
	or they each contain a process that can communicate with the other via a shared
	SWMR register that it can write and the other can read;
	and
	(b) if  $t > t_L$ processes may crash, then there are two subsets of processes of size $n-t$ that are disjoint
	and cannot communicate via  shared
	SWMR register.

\begin{definition}\label{def-tl}
Given a bag $L = \{S_1,\dots ,S_m\}$ of subsets of $\Pi = \{ p_1 , p_2 , \ldots, p_n \}$,
	$t_L$ is the maximum integer $t$ such that the following condition holds:
For all disjoint subsets $P$ and $P'$ of $\Pi$ of size $n-t$ each, 
	some set $S_i$ in $L$ contains both a process in $P$ and a process in $P'$.
\end{definition}

\noindent
Note that $t_L \le n-1$ because
	the maximum $t$ such that sets $P$ and $P'$ of size $n-t$ contain at least one node each
	must be less than $n$.
Moreover, if $t \le \lceil n/2\rceil -1$ then there are \emph{no} disjoint subsets $P$ and $P'$ of $\Pi$ of size $n-t$ each,
	and so the above condition is vacuously true.
Therefore $t_L  \ge \lceil n/2\rceil -1$.
Recall that for a pure message-passing system,
	$L = \{ \{ p_1 \} , \{ p_2 \} , \ldots, \{ p_n \} \}$, so in this system $t_L  = \lceil n/2\rceil -1$.

To illustrate Definition~\ref{def-tl},
	suppose $\Pi = \{ p_1 , p_2 , p_3, p_4 , p_5 \}$
	and $L = \{ S_1, S_2 , S_3 \}$
	where $S_1 = \{ p_1, p_2 \}$, $S_2 = \{ p_4, p_5 \}$, and $S_3 = \{ p_2, p_3, p_4 \}$.
By the definition of $t_L$:
	(1) $t_L \ge  3$ because for any two disjoint subsets of $\Pi$ of size $5 - 3 = 2$ each,
	there exists a set $S_i$ in $L$ that intersects both subsets;
	e.g., for subsets $\{ p_1, p_5\}$ and $\{p_3, p_4\}$, the set $S_2 = \{p_4, p_5 \}$ intersects both of them.
	(2) $t_L < 4$ because there are two disjoint subsets $\{ p_1 \}, \{p_5\}$ of size $5-4 = 1$ each,
	 such that no set $S_i$ in $L$ contains both $p_1$ and $p_5$.
So in this example $n = 5$ and $t_L = 3 >  \lceil n/2\rceil -1 =2$.

\smallskip
\noindent
We now prove that in the general m\&m system $\mathcal{S}_L$,
	it is possible to implement an atomic SWMR register readable by all processes
	\emph{if and only if} at most $t_L$ processes may crash~in~$\mathcal{S}_L$.	
More precisely:

\begin{theorem}\label{thm0}
Let $\mathcal{S}_{L}$ be the general m\&m system induced by a bag $L = \{S_1,\dots ,S_m\}$ of subsets of $\Pi = \{ p_1 , p_2 , \ldots, p_n \}$.

\begin{itemize}

\item If at most $t_L$ processes may crash in $\mathcal{S}_L$,
	then
	for every process $w$ in~$\mathcal{S}_L$,
	it is possible to implement an atomic SWMR register writable by $w$ and
	readable by all processes in~$\mathcal{S}_L$.

\item If $t_L+1 <n$ processes may crash in $\mathcal{S}_L$,
	then
	for every process $w$ in $\mathcal{S}_L$,
	it is impossible to implement an atomic SWMR register writable by $w$ and
	readable by all processes in~$\mathcal{S}_L$.

\end{itemize}

\end{theorem}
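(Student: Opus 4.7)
For the positive direction, I would generalize the Attiya--Bar-Noy--Dolev implementation so that messages and the registers of $\mathcal{S}_L$ are used in tandem. Each process keeps a local pair $(\mathit{ts},v)$, initially $(0,v_0)$. To write a value $v$, the writer $w$ increments $\mathit{ts}$, writes $(\mathit{ts},v)$ into each register $R_i[w]$ with $S_i \ni w$, sends $(\mathrm{WRITE},\mathit{ts},v)$ to every process, and waits for acknowledgements from a set $A$ with $|A| \ge n - t_L$. A process $p$ receiving a fresher $\mathrm{WRITE}$ adopts the new pair, writes it into each $R_j[p]$ with $S_j \ni p$, and only then acknowledges. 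A reader broadcasts $\mathrm{READ}$; a process $p$ handling a $\mathrm{READ}$ first refreshes its local pair by reading every register $R_j[q]$ accessible to it (i.e., every $R_j[q]$ with $p,q \in S_j$) and adopting the largest timestamp seen, and only then replies. The reader waits for replies from a set $T$ with $|T| \ge n - t_L$, picks the pair with the maximum timestamp, performs a write-back phase with the same structure as a write, and returns the corresponding value.

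Correctness hinges on showing that if a write $W$ with timestamp $\mathit{ts}$ completes before a read $R$ begins, $R$ observes a pair of timestamp at least $\mathit{ts}$. Applying the defining property of $t_L$ to $A$ (the writer's ack set) and $T$ (the reader's reply set), both of size at least $n - t_L$, one of two cases must hold. Either some $p \in A \cap T$, and then $p$ processed $W$ before acknowledging and hence before replying to $R$, so $p$'s reply carries timestamp $\ge \mathit{ts}$; or some $S_i$ contains $p \in A$ and $q \in T$, and then $p$ has written $(\mathit{ts},v)$ into $R_i[p]$ before $W$ completed (hence before $R$ began), and $q \in S_i$ refreshes from $R_i[p]$ before replying, so $q$'s reply also carries timestamp $\ge \mathit{ts}$. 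Property~\ref{p1} then follows by applying this argument to the last write preceding or concurrent with $R$, with $v_0$ as a special case; Property~\ref{p2} follows by applying the same argument to the write-back phase of the earlier read.

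For the negative direction, take $t = t_L + 1$. By the definition of $t_L$ there exist subsets $P, P' \subseteq \Pi$ of size $n - t$ with $P \cap P' = \emptyset$ and no $S_i \in L$ containing both a process of $P$ and a process of $P'$. Fix any $w \in P$ as the writer and any $r \in P'$ as a reader, and consider three executions of an alleged $t$-resilient algorithm with writer $w$. In $E_1$, all processes in $\Pi \setminus P$ crash initially and $w$ writes $v_1$, which must complete. In $E_2$, all processes in $\Pi \setminus P'$ crash initially, so $w$ is dead and no write ever occurs, and $r$ reads and must return $v_0$. In $E_3$, only the $2t - n \le t$ processes in $\Pi \setminus (P \cup P')$ crash initially, every message between $P$ and $P'$ is delayed past the end of the execution, $w$ writes $v_1$, and then $r$ reads. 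Since no $S_i$ bridges $P$ and $P'$, no register is jointly accessible to the two sides, so the only possible $P$--$P'$ communication is by message and is stalled; hence the $P$-processes behave in $E_3$ exactly as in $E_1$ (so the write of $v_1$ completes) and the $P'$-processes behave exactly as in $E_2$ (so $r$ returns $v_0$), even though the write precedes $r$'s read, violating Property~\ref{p1}.

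The step I expect to need the most care is the indistinguishability argument in $E_3$: one must formally verify that the ``no bridging $S_i$'' condition really does prevent any register from being jointly accessible to $P$ and $P'$, so that stalling the $P$--$P'$ messages suffices to isolate the two sides. On the positive side, the delicate design point is that a process must refresh from its accessible shared registers in the very step in which it replies to a $\mathrm{READ}$; without this, the bridging case of the quorum-intersection argument would fail to propagate the written value from $A$ to $T$ through shared memory.
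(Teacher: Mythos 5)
Your proposal is correct and takes essentially the same approach as the paper in both directions: the algorithm is the same generalization of Attiya--Bar-Noy--Dolev (writers propagate the new pair into every register their quorum can write; readers have each responder scan every register it can read before replying, which is exactly the design point you flag), the correctness proof is the same two-case quorum/bridging-set analysis from the definition of $t_L$, and the lower bound is the same three-execution partition argument using the fact that no $S_i$ bridges $P$ and $P'$. The only cosmetic difference is that your $E_3$ crashes the $2t-n \le t$ processes of $Q$ outright, whereas the paper keeps them alive but silent until after the read, obtaining a crash-free contradicting execution.
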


\noindent
The above theorem is a direct corollary of Theorem~\ref{thm-algo} (Section~\ref{algo}) and Theorem~\ref{thm-lb} (Section~\ref{lb}).

\subsection{Algorithm}\label{algo}

We now show how to implement an atomic SWMR register $\REG$,
 	that can be written by an arbitrary fixed process $w$
	and read by all processes,
	in an m\&m system $\mathcal{S}_{L}$ where at most $t_L$ processes may crash.
This implementation is given in terms of the procedures $\wu()$ and $\ru()$ shown in Algorithm~1.

Without loss of generality
	we assume that for all $i \ge 1$, the $i$-th value that the writer writes is of the form $\langle i , val \rangle$,
	and the initial value of the register $\REG$ is $\langle 0 , u_0\rangle$.
To write $\langle i , val \rangle$ into $\REG$, the writer $w$ calls the procedure $\wu(\langle i , val \rangle)$.
To read $\REG$, a process $q$ calls the procedure $\ru()$ that returns a value of the form  $\langle i, val\rangle$.
The sequence number $i$ makes the values written to $\REG$ unique.

\begin{algorithm}[!ht]
\caption{ Implementation of an atomic SWMR register writable by process~$w$ and readable by all processes in $ \mathcal{S}_L$, provided that at most $t_L$ processes crash.}
\label{algo1}
~\\
\textsc{Shared variables}

\vspace{2mm}

 For all $S_i \in L$ and all $p \in S_i$:

\vspace{1mm}
$R_i[p]:$ atomic SWMR register writable by $p$ and readable by every process in $S_i \in L$;\\
\hspace*{1.1cm}initialized to $\langle 0 , \rinit \rangle$.

\vspace{5mm}

\begin{algorithmic}[1]

\Statex

\textsc{\wu($\langle sn_w,u \rangle$):} \Comment{executed by the writer $w$}
\Indent
\State $\textbf{send } \langle \textrm{W},\langle sn_w, u\rangle \rangle \textbf{ to every process } p\mbox{ in } \mathcal{S}_L$\label{communicate}
\State \textbf{wait for} $\langle \textrm{ACK-W},sn_w \rangle \textbf{ messages from } n-t_{L} \textbf{ distinct processes }$\label{wait1}
\State \Return
\EndIndent
\Statex
\Statex

\Comment{executed by every process $p \mbox{ in } \mathcal{S}_L$}\\
\textbf{upon receipt of a }$\langle \textrm{W},\langle sn_w, u\rangle \rangle$ \textbf{message from process $w$}:
\Indent
\State \textbf{for every }{$i$ in $\{1,\dots,m\}$ such that $p\in S_i$}\textbf{ do}\label{for}
\Indent
\State\label{wr1} $\langle sn,-\rangle \gets R_i[p]$
\State\label{wr2} \textbf{if} {$sn_w> sn$}\textbf{ then}
\Indent
\State\label{wr3} $R_i[p]\gets \langle sn_w,u\rangle$\label{forend}
\EndIndent
\EndIndent
\State \textbf{send} $\langle\textrm{ACK-W}, sn_w \rangle$ \textbf{to process} $w$
\EndIndent
\Statex
\Statex

\textsc{\ru():}\Comment{executed by any process $q$}
\Indent
\State $sn_r\gets sn_r+1$
\State $\textbf{send } \langle \textrm{R}, sn_r\rangle \textbf{ to every process } p \mbox{ in } \mathcal{S}_L$
\State \textbf{wait for} $\langle\textrm{ACK-R}, sn_r, \langle -, - \rangle \rangle$ \textbf{messages from} $n-t_{L}$ \textbf{distinct processes}\label{wait2}
\State $\langle seq, val \rangle \gets \max\{ \langle r\_sn,r\_u \rangle~|$ received a $\langle\textrm{ACK-R}, sn_r,\langle r\_sn,r\_u \rangle \rangle$ message\} \label{adopt}
\State\label{callWUbyReader} \textsc{\wu}{$(\langle seq, val \rangle)$}
\State \Return $\langle seq, val \rangle$
\EndIndent
\Statex
\Statex

\Comment{executed by every process $p \mbox{ in } \mathcal{S}_L$}\\
\textbf{upon receipt of a} $ \langle \textrm{R}, sn_r\rangle$ \textbf{message from a process $q$}:
\Indent
\State $\langle r\_sn,r\_u \rangle  \gets \max\{\langle sn,u\rangle ~|~ \exists i\in \{1,\dots, m\}: p\in S_i$ and
	$\exists p' \in S_i : R_i[p']=\langle sn,u\rangle\}$\label{readall}
\State \textbf{send} $\langle \textrm{ACK-R}, sn_r,\langle r\_sn,r\_u \rangle\rangle$ \textbf{to process} $q$\label{sendtoreader}
\EndIndent
\Statex
\end{algorithmic}
\end{algorithm} 
 
Algorithm~\ref{algo1} generalizes the well-known ABD implementation of an atomic SWMR
	register in pure message-passing systems by Attiya, Bar-Noy and Dolev~\cite{attiya1995sharing}.
	\footnote{The ABD algorithm is the special case of Algorithm~\ref{algo1} for $\mathcal{S}_{L}$ where
	$L = \{ \{ p_1 \} , \{ p_2 \} , \ldots, \{ p_n \} \}$.}
To write a new value into $\REG$, the writer $w$ sends messages to all processes asking them to
	write the new value into all the shared SWMR registers that they can write in $\mathcal{S}_{L}$.
The writer then waits for acknowledgments from  $n-t_{L}$ processes
	indicating that they have done so.
To read $\REG$, a process sends messages to all processes asking them
	for the most up-to-date value that they can find in all the shared SWMR registers
	that they can read.
The reader waits for $n-t_{L}$ responses,
	selects the most up-to-date value among them,
	writes back that value (using the same procedure that the writer uses),
	and returns it.
From the definition of $t_L$ it follows that every write of $\REG$ ``intersects''
	with every read of $\REG$ at some shared SWMR register of $\mathcal{S}_{L}$.
Note that since at most $t_L$ processes crash, the waiting mentioned above does not block any process. 

We now show that the procedure $\wu()$, called by the writer $w$,
	and the procedure $\ru()$, called by any process $q$ in $\mathcal{S}_{L}$,
	implement an atomic SWMR register $\REG$.
To do so, we show that the calls of $\wu()$\emph{ by $w$} and of $\ru()$ by any process satisfy Properties 1 and 2 of atomic SWMR registers given in Section~\ref{SWMR-Properties}.

\begin{definition}
The operation $\wrt{v}$ is the execution of $\wu(v)$ \textbf{by the writer~$w$}
	for some tuple $v = \langle sn_w , u \rangle$:
	this operation starts when $w$ calls $\wu(v)$
	and it completes if and when this call returns.
An operation $\rd{v}$ is an execution of $\ru()$ that returns $v$ to some process $q$:
	this operation starts when $q$ calls $\ru()$
	and it completes when this call returns $v$ to $q$.
\end{definition}

Let $v_0 = \langle 0 , u_0\rangle$ be the initial value of the implemented register $\REG$,
	and, for $k\ge1$, let $v_k = \langle k , - \rangle$ denote the $k$-th value written by the writer $w$ on $\REG$.
Note that all $v_k$'s are distinct: for all $i \neq j \ge 0, v_i \neq v_j$.

Let $\mathcal{S}_{L}$ be the general m\&m system induced by
	a bag $L = \{S_1,\dots ,S_m\}$ of subsets of $\Pi = \{ p_1 , p_2 , \ldots, p_n \}$.
To prove the correctness of the SWMR implementation shown in Algorithm 1,
	we now consider an arbitrary execution of this implementation in $\mathcal{S}_{L}$.

\begin{lemma}\label{l0}
 If at most $t_L$ processes crash,
	then any $\rd{-}$ or $\wrt{-}$ operation executed by a
	process that does not crash completes.
\end{lemma}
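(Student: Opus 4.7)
The plan is to argue that each waiting step in both procedures eventually completes, using the assumption that at most $t_L$ processes crash together with the reliability of links and the finiteness of the local loops at the receivers.

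First I would observe that since at most $t_L$ processes crash, there are at least $n-t_L$ correct processes. Next I would show that every correct process that receives a $\langle \textrm{W},\langle sn_w,u\rangle\rangle$ message eventually sends back an $\langle\textrm{ACK-W},sn_w\rangle$ to the sender: the \textbf{for} loop at line~\ref{for} ranges over the finite set $\{ i : p \in S_i \}$, and in each iteration the process performs a bounded number of reads and writes to shared SWMR registers (which, being atomic, terminate). Hence the loop terminates and the ACK is sent. Similarly, any correct process that receives a $\langle \textrm{R},sn_r\rangle$ message computes, at line~\ref{readall}, the maximum over a finite collection of shared SWMR registers and then sends back an $\langle\textrm{ACK-R},sn_r,\langle r\_sn,r\_u\rangle\rangle$ message.

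With these two observations, the correctness of the waits follows from reliability of the links. For a $\wrt{-}$ operation executed by the writer $w$: the W messages sent at line~\ref{communicate} are received by all correct processes, each of which sends back an ACK-W, so at least $n-t_L$ distinct ACK-W messages eventually arrive at $w$, and the wait at line~\ref{wait1} terminates. For a $\rd{-}$ operation executed by a reader $q$: the same argument shows that the wait at line~\ref{wait2} terminates. The reader then invokes $\wu(\langle seq,val\rangle)$ at line~\ref{callWUbyReader}; since the code executed upon receipt of a W message does not depend on whether the sender is the designated writer $w$ or some reader, the very same argument shows that this nested $\wu$ call also terminates, and hence $\ru()$ returns.

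The only mildly delicate point, and the one I would be most careful about, is precisely this last observation: that the call to $\wu$ inside $\ru$ also completes. It is not a deep difficulty, but it requires noting that a process responding to a W message never checks the identity of the sender (beyond using it as the destination of the ACK), so the termination argument used for the writer's $\wu$ applies verbatim to any reader's nested $\wu$. Everything else is a straightforward combination of link reliability, finiteness of $L$, and the bound $t_L$ on crashes.
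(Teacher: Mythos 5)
Your proposal is correct and follows essentially the same route as the paper's proof: the only potential blocking points are the \textbf{wait} statements at lines~\ref{wait1} and~\ref{wait2}, and these cannot block because links are reliable and at least $n-t_L$ processes respond. Your treatment is more detailed than the paper's (which omits the explicit check that responders' local loops terminate and that the nested $\wu$ call inside $\ru$ also completes), but the underlying argument is the same.
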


\begin{proof}
The only statements that could prevent the completion of a $\rd{-}$ or $\wrt{-}$ operation are
	the \textbf{wait} statements of line \ref{wait1} and line \ref{wait2}.
But since communication links are reliable, these wait statements are for $n - t_L$ acknowledgments,
	and at most $t_L$ processes out of the $n$ processes of $\mathcal{S}_{L}$ may crash,
	it is clear that these wait statements cannot block.
\qed
\end{proof}

The proofs of the next two lemmas are straightforward and therefore omitted.
The first one states that every read operation returns some $v_k$ for $k \ge 0$.

\begin{lemma}\label{l1}
If $r$ is a $\rd{v}$ operation in the execution,
	then $v=v_k$ for some $k \ge 0$.
\end{lemma}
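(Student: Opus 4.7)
The plan is to establish an invariant: every value ever stored in any shared register $R_i[p]$ is of the form $v_k$ for some $k \ge 0$. Once this invariant is in hand, the lemma follows almost immediately, since a $\rd{v}$ operation returns the value $\langle seq, val \rangle$ chosen at line \ref{adopt} as the maximum of values received in \textsc{ACK-R} messages, and each such value was computed at line \ref{readall} as the maximum of values currently stored in some of the shared registers $R_i[q]$.

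To prove the invariant, I would induct on the events of the execution. For the base case, every $R_i[p]$ is initialized to $\langle 0, u_0 \rangle = v_0$, so the invariant holds trivially before any operation takes any step. For the inductive step, a register $R_i[p]$ is modified only at line \ref{wr3}, where $p$ assigns it a value $\langle sn_w, u \rangle$ carried by some $\langle \textrm{W}, \langle sn_w, u \rangle \rangle$ message. Such a message is sent at line \ref{communicate} in one of two ways: (i) by the writer $w$ from within an execution of $\wu(v_{sn_w})$, in which case $\langle sn_w, u \rangle = v_{sn_w}$ by our convention on the values written by $w$; or (ii) by a reader $q$ at line \ref{callWUbyReader}, in which case $\langle sn_w, u \rangle = \langle seq, val \rangle$ was selected at line \ref{adopt} as the maximum of values $\langle r\_sn, r\_u \rangle$ received in \textsc{ACK-R} messages. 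Each such received value was itself assembled at line \ref{readall} from the contents of shared registers at an earlier point in the execution, and by the inductive hypothesis these contents are all of the form $v_k$. Hence the value written into $R_i[p]$ at this step is also some $v_k$, preserving the invariant.

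Given the invariant, consider an arbitrary $\rd{v}$ operation. The value $v = \langle seq, val \rangle$ it returns was the maximum, at line \ref{adopt}, over a set of pairs received in \textsc{ACK-R} messages; by the same reasoning as above, each of these pairs equals some $v_k$, so the maximum is some $v_k$ as well. This gives $v = v_k$ for some $k \ge 0$, as required.

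The main (minor) obstacle is making the induction formal: the events to induct over are write steps to shared registers, but the justification for case (ii) refers to \textsc{ACK-R} messages that report reads of registers made \emph{strictly before} the write at line \ref{wr3} happens, so the inductive hypothesis must be phrased over all events preceding the current write step, not merely those of the same operation. Other than this bookkeeping, the argument is routine.
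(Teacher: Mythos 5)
Your proof is correct; the paper simply omits this proof as ``straightforward,'' and your induction on the temporal order of register-write events (with the invariant that every shared register always holds some $v_k$) is exactly the routine argument the authors had in mind. The bookkeeping point you flag at the end is handled correctly by phrasing the inductive hypothesis over all earlier events, so there is no gap.
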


The next lemma says that no read operation can read a ``future'' value, i.e., a  value that is written after the read completes.

\begin{lemma}\label{l2}
If $r$ is a $\rd{v}$ operation in the execution,
	then either $v = v_0$, or $v=v_k$ such that the operation $\wrt{v_k}$ precedes $r$ or is concurrent with $r$.
\end{lemma}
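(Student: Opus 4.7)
The plan is to argue by contradiction, using the fact that the value $v_k$ (for $k \ge 1$) simply does not exist anywhere in the system before the writer $w$ actually starts the operation $\wrt{v_k}$.

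By Lemma~\ref{l1}, the value $v$ returned by $r$ equals $v_k$ for some $k \ge 0$. If $k = 0$ we are done, so suppose $k \ge 1$ and, for contradiction, that $\wrt{v_k}$ starts \emph{strictly after} $r$ completes. The central step is the following claim: at every point in time before $w$ invokes $\wu(v_k)$ at line~\ref{communicate} (i.e., before $\wrt{v_k}$ starts), no process holds $v_k$ in any of its local variables, no shared register $R_i[p]$ contains $v_k$, and no message in transit carries $v_k$ as a payload.

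I would prove this claim by induction on the prefix of the execution. The inspection of Algorithm~1 gives the base case trivially (all registers are initialized to $\langle 0, u_0 \rangle \neq v_k$, no messages are in transit, and local variables do not contain $v_k$) and reduces the inductive step to inspecting each type of event: (i)~a process writes $R_i[p]$ at line~\ref{wr3} only with a payload it just received in a $\langle \textrm{W}, \cdot \rangle$ message, (ii)~a process puts a value into $r\_sn, r\_u$ at line~\ref{readall} only by copying it from a register it can read, (iii)~a process updates $seq, val$ at line~\ref{adopt} only by copying it from an incoming $\langle \textrm{ACK-R}, \cdot, \cdot \rangle$ message, (iv)~all outgoing W messages carry either the writer's argument in his own call to $\wu$, or a value that a reader had just adopted via line~\ref{adopt} (inside the write-back at line~\ref{callWUbyReader}), and (v)~ACK-R messages carry the value just computed at line~\ref{readall}. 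Every one of these events reuses a value that, by the induction hypothesis, was already in the system; the only event that can \emph{introduce} a brand-new payload into the system is $w$ passing $v_k$ as argument to $\wu$ and sending it out at line~\ref{communicate}, which is precisely the start of $\wrt{v_k}$.

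Granted the claim, the contradiction is immediate. Since $r$ returns $v_k$, the reader $q$ executing $r$ must have set $\langle seq, val\rangle \gets v_k$ at line~\ref{adopt}, and this happened strictly before $r$ completes. By line~\ref{adopt}, $v_k$ came from some $\langle\textrm{ACK-R}, sn_r, v_k\rangle$ message that some process sent and that $q$ received, again before $r$ completes. Thus $v_k$ was present in the system (inside a message or a register) strictly before $r$ completes, and hence strictly before $\wrt{v_k}$ starts, contradicting the claim. Therefore $\wrt{v_k}$ does not start after $r$ completes, i.e., $\wrt{v_k}$ either precedes $r$ or is concurrent with $r$.

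I expect the only real obstacle to be the bookkeeping in the inductive claim: because the reader's write-back at line~\ref{callWUbyReader} also emits $\langle\textrm{W},\cdot\rangle$ messages, one has to check that no reader could ever emit such a message carrying $v_k$ without first having adopted $v_k$ from an earlier ACK-R---so the chain of ``who first saw $v_k$'' traces all the way back to $w$'s own call to $\wu(v_k)$. Once this is spelled out event-by-event on the pseudocode, the rest of the proof is a one-line contradiction.
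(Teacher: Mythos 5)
The paper omits the proof of this lemma as ``straightforward,'' and your provenance argument --- that $v_k$ cannot appear in any register, message, or local variable before $w$ invokes $\wu(v_k)$, established by induction over the execution prefix and a case analysis of the events that can introduce a value into the system --- is a correct and complete way to supply it. One small point of bookkeeping: the negation of the lemma's conclusion also includes the case that $\wrt{v_k}$ never occurs at all (not only the case that it starts after $r$ completes), but your inductive claim disposes of this automatically, since then $v_k$ never enters the system and $r$ could not return it.
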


 Note that the guard in lines~\ref{wr2}-\ref{wr3} (which is the only place where the shared SWMR registers are updated)
 	ensures that
	the content of each shared SWMR register in $\mathcal{S}_{L}$
	is non-decreasing in the following sense:
	
\begin{observation}\label{un}
[Register monotonicity] For all $1 \le i \le m$ and all $p \in S_i$,
	if $R_i[p] = \langle k, - \rangle$ at some time $t$ and
	$R_i[p] = \langle k', - \rangle$ at some time $t' \ge t$
	then
	$k' \ge k$.
\end{observation}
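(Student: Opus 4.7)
The plan is to observe that $R_i[p]$ is a SWMR register whose sole writer is $p$ itself, so the only way its value can change is through $p$ executing the update on line \ref{wr3}. Since each process handles incoming messages sequentially, the three-line block at lines \ref{wr1}--\ref{wr3} for a single iteration of the for-loop executes without interference from another handler on $p$, and in particular without $R_i[p]$ being modified between the read on line \ref{wr1} and the (conditional) write on line \ref{wr3}.

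Given this, I would argue that every write $R_i[p] \gets \langle sn_w, u \rangle$ on line \ref{wr3} strictly increases the sequence number stored in $R_i[p]$. Just before the write, the value of $R_i[p]$ is the same $\langle sn, - \rangle$ read on line \ref{wr1} (by the no-interference argument above), and the guard on line \ref{wr2} has established that $sn_w > sn$. So every successful update replaces $\langle sn, - \rangle$ by $\langle sn_w, - \rangle$ with $sn_w > sn$.

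From this, monotonicity follows immediately. Consider times $t \le t'$ with $R_i[p] = \langle k, - \rangle$ at $t$ and $R_i[p] = \langle k', - \rangle$ at $t'$. If no update to $R_i[p]$ occurs in the interval $[t, t']$, then $k' = k$. Otherwise, list the updates in order; by the previous paragraph the sequence numbers strictly increase along this list, starting from a value that is at least $k$ (namely, the value just before the first update in $[t, t']$, which equals the value at $t$). Therefore $k' \ge k$.

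I do not anticipate a real obstacle: the proof is essentially a direct reading of the guard on lines \ref{wr1}--\ref{wr3} together with sequential per-process event handling. The only subtle point worth stating explicitly is why no one can change $R_i[p]$ between lines \ref{wr1} and \ref{wr3}, which follows because $p$ is the unique writer and processes events one at a time.
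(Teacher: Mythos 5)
Your proof is correct and matches the paper's justification, which simply notes that the guard in lines~\ref{wr1}--\ref{wr3} (the only place the shared registers are updated) makes each register's sequence number non-decreasing. You make explicit the two facts the paper leaves implicit --- that $p$ is the sole writer of $R_i[p]$ and that it processes events sequentially, so nothing intervenes between the read on line~\ref{wr1} and the write on line~\ref{wr3} --- which is a reasonable elaboration rather than a different approach.
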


\begin{lemma}\label{l3}
For all $k \ge 1$, if a call to the procedure $\wu(v_k)$ returns before a $\rd{v}$ operation starts,
	then $v = v_{\ell}$ for some $\ell \ge k$.
\end{lemma}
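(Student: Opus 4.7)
My plan is to exploit the intersection property built into the definition of $t_L$, together with register monotonicity (Observation~\ref{un}), to show that whenever the read $r$ starts after $\wu(v_k)$ returns, the reader $q$ is guaranteed to see a value with sequence number $\ge k$ in at least one of its ACK-R responses, and therefore (by line~\ref{adopt}) adopts and returns such a value.

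First I would fix notation. Let $W$ be the set of $n-t_L$ processes whose ACK-W messages caused $\wu(v_k)$ to exit the \textbf{wait} at line~\ref{wait1}, and let $R$ be the set of $n-t_L$ processes whose ACK-R messages cause the read $r$ to exit the \textbf{wait} at line~\ref{wait2}. Since $\wu(v_k)$ returns before $r$ starts, every $p \in W$ has already executed the \textbf{for} loop at lines~\ref{for}--\ref{forend} for the $v_k$-write before $r$ begins. Combined with the guard in lines~\ref{wr2}--\ref{wr3} and Observation~\ref{un}, this means that from that point on, for every $p\in W$ and every $S_i$ with $p\in S_i$, the register $R_i[p]$ permanently holds a tuple whose sequence number is $\ge k$.

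Next I would apply the definition of $t_L$ to the pair $(W,R)$, both of size $n-t_L$. There are two cases. In the first, $W \cap R \neq \emptyset$: pick $p$ in the intersection. Since $p \in W$, it wrote sequence number $\ge k$ into each $R_i[p]$ (with $p \in S_i$) before $r$ started; since $p \in R$, it processes the R message after $r$ starts, hence after those writes, and by monotonicity the maximum it computes on line~\ref{readall} has sequence number $\ge k$. In the second case, there exist $i$, $p_1 \in W \cap S_i$ and $p_2 \in R \cap S_i$: then $R_i[p_1]$ has been set to a tuple with sequence number $\ge k$ before $r$ starts, and $p_2$, when it handles the R message after $r$ starts, includes $R_i[p_1]$ in the maximum at line~\ref{readall} (since $p_2 \in S_i$ and $p_1 \in S_i$), so its ACK-R carries a sequence number $\ge k$.

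In either case, at least one ACK-R received by $q$ carries a value $v_\ell$ with $\ell \ge k$. The assignment on line~\ref{adopt} therefore sets $\langle seq,val\rangle$ to some $v_\ell$ with $\ell \ge k$, and this is exactly the value $v$ returned, giving the conclusion. The main obstacle is keeping the temporal chain straight, namely that ``$p \in W$'' yields an event strictly before $r$ starts while ``$p \in R$'' yields an event strictly after $r$ starts, so that monotonicity can legitimately be invoked; once this scheduling is pinned down, the two cases from the definition of $t_L$ close the argument cleanly.
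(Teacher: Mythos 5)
Your proof is correct and follows essentially the same route as the paper's: the paper's sets $P$ and $P'$ are your $W$ and $R$, its Claim~\ref{deux} is your observation that every $R_i[p]$ with $p\in W$ permanently holds a sequence number $\ge k$, and the two cases from the definition of $t_L$ are handled identically via Observation~\ref{un}. The only detail you leave implicit in your first case is that the process $p\in W\cap R$ actually belongs to \emph{some} set $S_i$ (so that a register $R_i[p]$ exists for it to have written and to read back); the paper points out that this is guaranteed by Assumption~\ref{singleton}.
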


\begin{proof}
Suppose a call to $\wu(v_k)$ returns before a $\rd{v}$ operation starts;
	we must show that $v = v_{\ell}$ with $\ell \ge k$.
Note that before this call of $\wu(v_k)$ returns,
	$\langle \textrm{ACK-W}, k \rangle$ messages are received from a set $P$ of $n-t_L$ distinct processes
	(see line \ref{wait1} of the $\wu()$ procedure).
From lines~\ref{for}-\ref{wr3}, which are executed before
	 these $\langle \textrm{ACK-W}, k \rangle$ messages are sent,
	 and by Observation~\ref{un}, it is clear that the following holds:

 \begin{claim}\label{deux}
 By the time $\wu(v_k)$ returns,
	 every shared SWMR register in $\mathcal{S}_{L}$ that can be written by a process in $P$ contains
	 a tuple $\langle k', - \rangle$ with $k' \ge k$.
\end{claim}

Now consider the $\rd{v}$ operation, say it is by process $q$.
Recall that $\rd{v}$ is an execution of the $\ru()$ procedure that returns $v$ to $q$.
When $q$ calls $\ru()$, it increments a local counter $sn_r$
	and asks every process $p$ in $\mathcal{S}_{L}$ to do the following:
	(a) read every SWMR register that $p$ can read,
	and
	(b) reply to $q$ with a $\langle \textrm{ACK-R}, sn_r ,\langle r\_sn, r\_u \rangle\rangle$ message
	such that $\langle r\_sn, r\_u \rangle$ is the tuple with the maximum $r\_sn$ that $p$ read.
By line \ref{wait2} of the $\ru()$ procedure,
	$q$ waits to receive such $\langle \textrm{ACK-R}, sn_r, \langle  - , - \rangle\rangle$ messages
 	from a set $P'$ of $n-t_L$ distinct processes,
	and $q$ uses these messages to select the value $v$ as follows:
$$ v \gets \max\{ (r\_sn,r\_u)~|~ q \mbox{ received some } \langle\textrm{ACK-R}, sn_r,\langle r\_sn,r\_u \rangle \rangle \mbox{ from a process in } P' \}$$

\smallskip\noindent
Thus, by Lemma~\ref{l1}, it is clear that:

\begin{claim}\label{trois}
 $v= v_{\ell}$ where $\ell = \max \{ j  ~|$ $q$ received
	a $\langle\mbox{\emph{ACK-R}}, sn_r , \langle j ,- \rangle \rangle$ message from a process in $P' \}$.
\end{claim}

	\noindent
	
\rmv{
\hrule
Let $\mathcal{R}$ be the set of all the SWMR registers in $\mathcal{S}_{L}$ that can be read by processes in $P'$,
	i.e., $\mathcal{R} = \{ R_i[q] ~|~ 1 \le i \le m \mbox{ and } q \in S_i  \mbox{ and } p \in P' \cap S_i \}$.
Let $\mathcal{V}$ be the set of all the values of the registers in $\cal{R}$ that processes in $P'$ read
	during the execution of $\rd{v}$.
We claim that $v = v_{\ell}$ such that $\ell = \max\limits_{j} \{ v_j = \langle j , r\_u \rangle \in \cal{V} \}$.

\hrule
}

\begin{claim}\label{quatre}
 Some set $S_i$ in $L$ contains both a process in $P$ and a process in $P'$.
\end{claim}

\begin{proof}
If $P$ and $P'$ are disjoint, the claim follows directly from Definition~\ref{def-tl} of $t_L$.
If $P$ and $P'$ intersect,
	let $p$ be a process in both $P$ and $P'$.
By Assumption~\ref{singleton},
	$p$ is in some set $S_i$ in $L$, and the claim follows.
\qed
\end{proof}

 By the above claim, some set $S_i$ in $L$ contains a process $p$ in $P$ and a process $p'$ in $P'$.
Since $p\in S_i$ and $p' \in S_i$, $R_i[p]$ is one of the SWMR registers that can be written by $p$ and read by $p'$.
From Claim~\ref{deux},
	 by the time the call to $\wu(v_k)$ returns,
	$R_i[p]$ contains a tuple $\langle k', - \rangle$ such that $k' \ge k$ (*).
Since $p' \in P'$, during the execution of $\rd{v}$ by $q$, $p'$
	reads all the shared SWMR registers that it can read, including $R_i[p]$.
Since $\rd{v}$ starts after $\wu(v_k)$ returns, $p'$ reads $R_i[p]$ after $\wu(v_k)$ returns.
Thus, by (*) and the monotonicity of $R_i[p]$ (Observation~\ref{un}), $p'$ reads from $R_i[p]$ a tuple $\langle r\_sn, - \rangle$
	with $r\_sn \ge k' \ge k$.
Then $p'$ selects  the tuple $\langle j, - \rangle$ with the maximum $sn$ among all the $\langle sn, - \rangle$ tuples that it read
	(see line~\ref{readall});
	note that $j \ge k$.
So the $\langle\textrm{ACK-R}, sn_r , \langle j ,- \rangle \rangle$ message that $p'$ sends to $q$, and $q$ uses to select $v$,
	is such that $j \ge k$.
So, by Claim~\ref{trois}, $v = v_{\ell}$ such that $\ell \ge j \ge k$.
\end{proof}

Lemma~\ref{l3} immediately implies the following:

\begin{corollary}\label{c3}
For all $k \ge 1$, if a $\wrt{v_k}$ operation precedes a $\rd{v}$ operation then $v = v_{\ell}$ with $\ell \ge k$.
\end{corollary}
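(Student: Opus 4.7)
The plan is to observe that this corollary is an essentially immediate consequence of Lemma~\ref{l3}, modulo unpacking two definitions. By the definition of a $\wrt{v_k}$ operation given just before Lemma~\ref{l0}, a $\wrt{v_k}$ operation \emph{is} precisely an execution of the procedure $\wu(v_k)$ by the writer $w$, which starts when $w$ calls $\wu(v_k)$ and completes when that call returns. Likewise, by the informal definition of ``precedes'' in Section~\ref{SWMR-Properties}, the statement ``$\wrt{v_k}$ precedes $\rd{v}$'' means that $\wrt{v_k}$ completes before $\rd{v}$ starts.

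From these two observations, the hypothesis of the corollary translates directly into the statement ``a call to the procedure $\wu(v_k)$ returns before the $\rd{v}$ operation starts,'' which is exactly the hypothesis of Lemma~\ref{l3}. Applying Lemma~\ref{l3} then yields $v = v_{\ell}$ with $\ell \ge k$, which is the desired conclusion.

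There is no real obstacle here: the corollary is simply a restatement of Lemma~\ref{l3} in the more convenient language of $\wrt{\cdot}$ and $\rd{\cdot}$ operations, so that later results (e.g., verifying Property~\ref{p1} and Property~\ref{p2}) can be phrased in terms of the operation-level precedence relation rather than at the level of procedure calls. Note that Lemma~\ref{l3} is deliberately stated in the more general form about \emph{any} call to $\wu(v_k)$ (including calls made inside $\ru()$ on line~\ref{callWUbyReader} by readers), and the corollary just specializes this to the case where the caller is the writer $w$, which is the only case captured by the $\wrt{\cdot}$ notation.
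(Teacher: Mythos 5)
Your proof is correct and matches the paper's treatment: the paper simply states that Corollary~\ref{c3} follows immediately from Lemma~\ref{l3}, and your unpacking of the definitions of a $\wrt{v_k}$ operation and of ``precedes'' is exactly the (implicit) argument. Nothing further is needed.
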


We now show that  Algorithm 1 satisfies Properties 1 and 2 of atomic SWMR registers.

\begin{lemma}\label{prop1}
The $\wrt{-}$ and $\rd{-}$ operations satisfy Property 1.
\end{lemma}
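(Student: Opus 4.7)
The plan is to prove Property~1 by case analysis on the index of the returned value, using Lemmas~\ref{l1} and~\ref{l2} for existence/concurrency and Corollary~\ref{c3} (which is the main technical content, already packaged from Lemma~\ref{l3}) to rule out overly old returned values.

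First I would apply Lemma~\ref{l1} to a $\rd{v}$ operation $r$ to write $v = v_k$ for some $k \ge 0$, and then split into two cases. For the case $k = 0$, I need to establish the second clause of Property~1, namely that no $\wrt{-}$ operation precedes $r$. The argument is by contradiction: if some write precedes $r$, then since the writer $w$ executes writes sequentially, the first write $\wrt{v_1}$ precedes $r$; Corollary~\ref{c3} then forces $v = v_\ell$ with $\ell \ge 1$, contradicting $v = v_0$.

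For the case $k \ge 1$, I need to show that $\wrt{v_k}$ either immediately precedes $r$ or is concurrent with $r$. Lemma~\ref{l2} already gives that $\wrt{v_k}$ precedes $r$ or is concurrent with $r$, so I only need to handle the ``precedes'' sub-case and argue that it is then \emph{immediately} preceding, i.e., there is no write $o'$ with $\wrt{v_k}$ preceding $o'$ preceding $r$. Since writes are sequential and each $v_j$ is distinct, such an $o'$ would have to be $\wrt{v_{k'}}$ for some $k' > k$, and then applying Corollary~\ref{c3} to $\wrt{v_{k'}}$ and $r$ would yield $v = v_\ell$ with $\ell \ge k' > k$, contradicting $v = v_k$.

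This proof is essentially routine once Corollary~\ref{c3} is in hand, so I do not anticipate a serious obstacle; the only subtlety to be careful about is invoking the sequentiality of $w$'s writes (so that ``some write precedes $r$'' really implies ``an earlier-indexed write precedes $r$'') and keeping the two clauses of Property~1 clearly separated in the write-up.
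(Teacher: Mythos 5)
Your proposal is correct and follows essentially the same route as the paper's proof: identify the returned value as some $v_k$ via Lemma~\ref{l1}/Lemma~\ref{l2}, split on $k=0$ versus $k\ge 1$, and in each case use Corollary~\ref{c3} (applied to a write known to precede $r$ --- in the second case the write immediately preceding $r$) to derive a contradiction with the index of the returned value. The only cosmetic difference is that the paper phrases the whole argument as a single proof by contradiction of the negation of Property~1, while you argue each clause directly; the detour through $\wrt{v_1}$ in your $k=0$ case is harmless but unnecessary, since Corollary~\ref{c3} applies directly to whichever $\wrt{v_k}$ is assumed to precede $r$.
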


\begin{proof}
Suppose for contradiction that Property 1 does not hold.
Thus there is a read operation $r = \rd{v}$ such that:

\begin{enumerate}[(a)]
\item there is no $\wrt{v}$ operation that immediately precedes $r$ or is concurrent with $r$,
	and

\item  some $\wrt{-}$ operation precedes $r$, or $v \neq v_0$.
\end{enumerate}

There are two cases.

\begin{enumerate}
\item $v = v_0$. By (b) above, some $\wrt{-}$ operation, say  $\wrt{v_k}$, precedes $r$.
Thus $\wrt{v_k}$ precedes $\rd{v_0}$.
Since $k\ge 1$ this contradicts Corollary~\ref{c3}.

\item $v \neq v_0$.
By Lemma~\ref{l2}, $v= v_k$ such that the operation $\wrt{v_k}$ precedes $r$,
	or  $\wrt{v_k}$ is concurrent with $r$.
By (a) above, $\wrt{v_k}$ does not \emph{immediately} precede $r$,
	and $\wrt{v_k}$ is not concurrent with $r$.
 Thus, $\wrt{v_k}$ precedes, but not \emph{immediately}, $r$.
 Let $\wrt{v_{k'}}$ be the write operation that immediately precedes $r$.
 Note that $\wrt{v_k}$ precedes $\wrt{v_{k'}}$, so $k <k'$.
 Since $\wrt{v_{k'}}$  precedes $r= \rd{v}$,
 	by Corollary~\ref{c3}, $v=v_{\ell}$ with $\ell \ge k'$, so $\ell > k$.
This contradicts that $v = v_k$.
\end{enumerate}
Since both cases lead to a contradiction, Property 1 holds.
\end{proof}

\begin{lemma}\label{prop2}
The $\wrt{-}$ and $\rd{-}$ operations satisfy Property 2.
\end{lemma}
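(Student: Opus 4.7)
The plan is to exploit the \emph{write-back} step of the $\ru()$ procedure (line~\ref{callWUbyReader}), which is present precisely to force atomicity between successive reads, together with Lemma~\ref{l3}. The key observation is that before a read operation $r$ returns its value $v_k$, the reader executing $r$ must first complete an internal call to $\wu(v_k)$. Once this call completes, Lemma~\ref{l3} tells us that any read that starts afterwards must return a value with sequence number at least $k$.

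More concretely, I would argue as follows. Let $r = \rd{v_k}$ and $r' = \rd{v_{k'}}$ with $r$ preceding $r'$; by Lemma~\ref{l1} both $k$ and $k'$ are nonnegative integers. If $k = 0$ the claim $k \le k'$ is immediate, so assume $k \ge 1$. Inspecting the $\ru()$ procedure, the value returned by $r$ is obtained at line~\ref{adopt}, then written back via the call $\wu(\langle seq, val \rangle) = \wu(v_k)$ on line~\ref{callWUbyReader}, and only then returned on the following line. Hence this call to $\wu(v_k)$ returns strictly before $r$ completes, and therefore, since $r$ precedes $r'$, also strictly before $r'$ starts.

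At this point I would apply Lemma~\ref{l3} to this completed $\wu(v_k)$ call to conclude $k' \ge k$. The one subtlety worth flagging explicitly is that Lemma~\ref{l3} is stated for arbitrary calls to the procedure $\wu(v_k)$, not only for the writer's calls (i.e.\ $\wrt{-}$ operations): its proof only uses that the call returns after collecting $n - t_L$ ACK-W messages from some set $P$, and is otherwise agnostic to the identity of the caller. So the lemma applies equally well to the write-back call issued by the reader inside $r$.

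I do not expect any serious obstacle: the heavy lifting (the intersection argument based on the definition of $t_L$, register monotonicity, and the write-back mechanism) has already been absorbed into Lemma~\ref{l3} and Corollary~\ref{c3}. The only thing to be careful about is making sure the reader's write-back call genuinely completes before $r$ returns (which is clear from the ordering of lines~\ref{callWUbyReader} and the following \textbf{return}) and, as noted above, that Lemma~\ref{l3} applies to $\wu$-calls performed by readers and not merely by $w$.
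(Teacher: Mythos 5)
Your proof is correct and follows essentially the same route as the paper: both rely on the fact that the write-back call to $\wu(v_k)$ on line~\ref{callWUbyReader} completes before $r$ returns (hence before $r'$ starts) and then invoke Lemma~\ref{l3}, which is deliberately stated for arbitrary calls to the procedure $\wu()$ rather than only for the writer's $\wrt{-}$ operations. Your explicit handling of the $k=0$ case and your remark about why Lemma~\ref{l3} applies to reader-issued calls are minor refinements of the paper's argument, not a different approach.
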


\begin{proof}
We have to show that if a $\rd{v_k}$ operation precedes a $\rd{v_{k'}}$ operation then $ k \le k'$.
Suppose $\rd{v_k}$ precedes $\rd{v_{k'}}$.
Note that during the $\rd{v_k}$ operation, namely in line~\ref{callWUbyReader}, there is a call to the procedure $\wu(v_k)$
	which returns before the $\rd{v_k}$ operation completes.
So this call to $\wu(v_k)$ returns before the $\rd{v_{k'}}$ operation starts.
By Lemma~\ref{l3}, $k \le k'$.
\end{proof}

Lemmas~\ref{l0}, \ref{prop1} and~\ref{prop2} immediately imply:

\begin{theorem}\label{thm-algo}
Let $\mathcal{S}_{L}$ be the general m\&m system induced by a bag $L = \{S_1,\dots ,S_m\}$ of subsets of $\Pi = \{ p_1 , p_2 , \ldots, p_n \}$.
If at most $t_L$ processes may crash in $\mathcal{S}_L$,
	for every process $w$ in~$\mathcal{S}_L$,
	Algorithm 1 implements an atomic SWMR register writable by $w$ and readable by all processes in~$\mathcal{S}_L$.
\end{theorem}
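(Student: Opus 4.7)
The plan is to derive Theorem~\ref{thm-algo} as an immediate corollary of the three lemmas already proved in the excerpt. Lemma~\ref{l0} supplies wait-freedom under the hypothesis that at most $t_L$ processes crash: every $\wrt{-}$ and $\rd{-}$ invocation by a correct process completes. Lemma~\ref{prop1} supplies Property~\ref{p1} and Lemma~\ref{prop2} supplies Property~\ref{p2}. By the definition of atomic SWMR register in Section~\ref{SWMR-Properties}, these three facts together say exactly that Algorithm~1 implements an atomic SWMR register writable by $w$ and readable by every process in $\mathcal{S}_L$. Since none of the three lemmas places any restriction on $w$ beyond it being the writer, the conclusion holds for every process $w$ in $\mathcal{S}_L$, as required by the theorem.

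It is worth flagging where the conceptual weight of the whole construction sits, even though it has already been discharged. The key step is Lemma~\ref{l3}, which converts the combinatorial condition defining $t_L$ into an operational guarantee: the set $P$ of $n - t_L$ processes acknowledging a write and the set $P'$ of $n - t_L$ processes acknowledging a read must, by definition of $t_L$, either share a process or share a common $S_i \in L$. In either case some register $R_i[p]$ is updated by a member of $P$ during the write and later read by a member of $P'$ during the read; register monotonicity (Observation~\ref{un}) then forces the reader's selected sequence number to be at least that of the write. Everything else, including the reader write-back on line~\ref{callWUbyReader} which is what actually powers Property~\ref{p2}, slots in around this single technical lemma via Corollary~\ref{c3}.

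Consequently there is no real obstacle to assembling the theorem statement itself; the plan reduces to citing the three lemmas and invoking the definition of atomic SWMR register. The only care needed is bookkeeping: verifying that the ``$w$ is the writer'' clause implicit in Lemma~\ref{l0}, Lemma~\ref{prop1}, and Lemma~\ref{prop2} matches the universal quantification over $w$ in the theorem statement, which it does because those lemmas apply to an arbitrary fixed execution of Algorithm~1 with an arbitrary writer.
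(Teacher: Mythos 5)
Your proposal matches the paper's own proof exactly: the paper derives Theorem~\ref{thm-algo} as an immediate consequence of Lemmas~\ref{l0}, \ref{prop1}, and~\ref{prop2}, which together give termination, Property~\ref{p1}, and Property~\ref{p2}, i.e., the definition of an atomic SWMR register. Your additional remarks about Lemma~\ref{l3} carrying the real technical weight are accurate but not needed for this step.
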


\subsection{Lower bound}\label{lb}

\begin{theorem}\label{thm-lb}
Let $\mathcal{S}_{L}$ be the general m\&m system induced by a bag $L =  \{S_1,\dots ,S_m\}$ of subsets of $\Pi = \{ p_1 , p_2 , \ldots, p_n \}$.
If $t_L +1 <n$ processes may crash in $\mathcal{S}_L$,
	then for every process $w$ in $\mathcal{S}_L$,
	there is no algorithm that implements an atomic SWMR register writable by~$w$ and readable by all processes in~$\mathcal{S}_L$.
\end{theorem}

\begin{proof}

Let $\mathcal{S}_{L}$ be the general m\&m system
	induced by a bag $L =  \{S_1,\dots ,S_m\}$ of subsets of $\Pi = \{ p_1 , p_2 , \ldots, p_n \}$.
Suppose for contradiction that
	$t  = t_L +1 <n$ processes may crash in $\mathcal{S}_L$,
	but
	for some process $w$ in $\mathcal{S}_L$,
	there is an algorithm $\AW$ that implements
	an atomic SWMR register writable by $w$
	and readable by all processes in~$\mathcal{S}_L$ (*).

\begin{figure}[!htb]
    \centering 
    \includegraphics[width=0.7\textwidth]{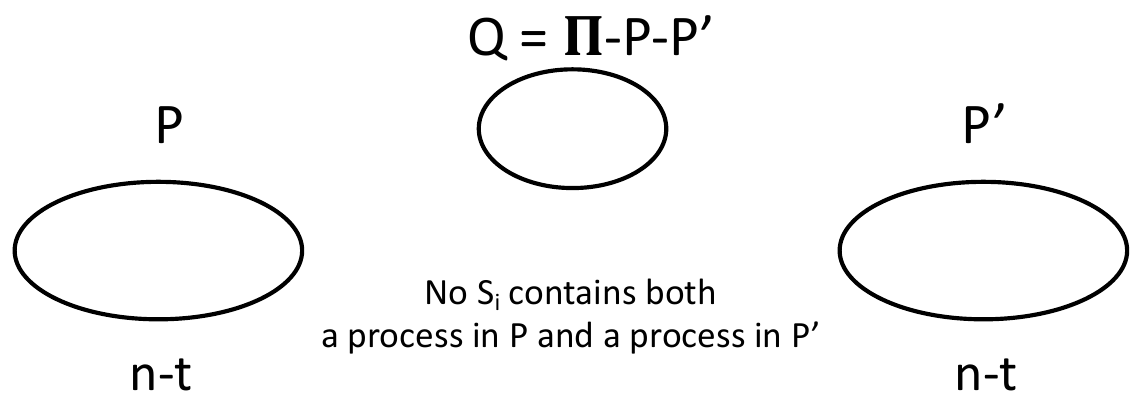}
    \caption{Partition of $\Pi$} 
    \label{gsx2}
\end{figure}

Since $t  > t_L$,
	by the Definition~\ref{def-tl} of $t_L$ there are two disjoint subsets $P$ and $P'$
	of $\Pi$,
	of size $n-t$ each, such that
	no set $S_i$ in $L$ contains both a process in $P$ and a process in $P'$~(**).
	
Since $P$ and $P'$ are disjoint, the sets $P$, $P'$, and $Q=\Pi-(P\cup P')$
	form a partition of $\Pi$ (see Figure~\ref{gsx2}).
Since $t <n$, each of $P$ and $P'$ contains at least one process, say $p \in P$ and $p' \in P'$.
Since $|P\cup Q \cup P'| = n$, clearly $|P\cup Q|= |P'\cup Q|= n - (n-t) =t$~($\dagger$).
Since algorithm $\AW$ tolerates $t$ crashes,
	it works correctly
	in every execution in which all the processes in $P\cup Q$ or in $P'\cup Q$ crash.
	
There are two cases.

\smallskip
\noindent
\textbf{Case 1:} $w\in P$ or $P'$.
Without loss of generality,
	assume $w\in P$.
	
 We now define three executions $E_1$, $E_2$, and $E_3$ of algorithm~$\mathcal{A}$.
These are illustrated in Figure \ref{g1}.

\begin{figure}[!htb]
    \centering 
    \includegraphics[width=0.5\textwidth]{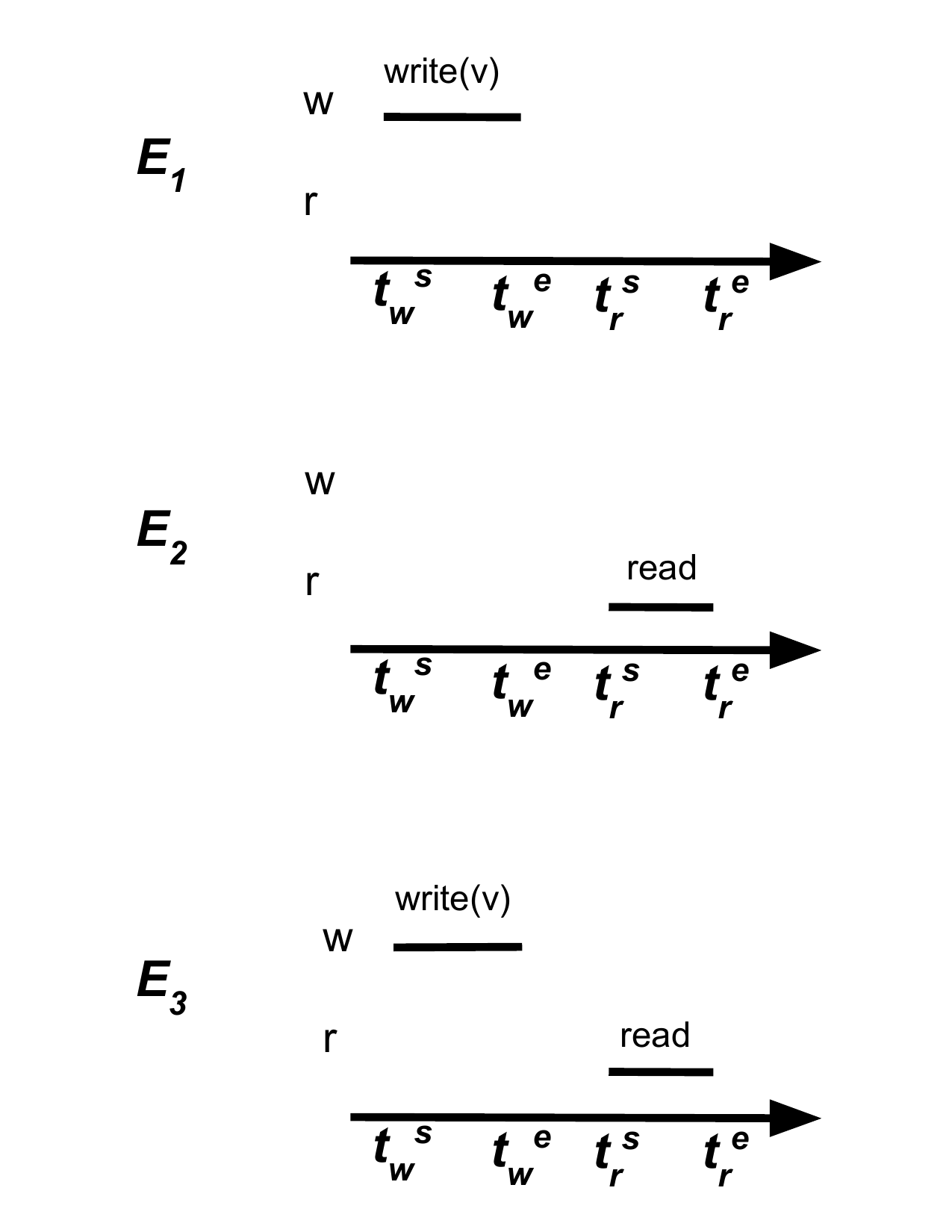}
    \caption{Scenarios for Theorem~\ref{thm-lb}} 
    \label{g1}
\end{figure}

Execution $E_1$ of algorithm $\mathcal{A}$ is defined as follows:
\begin{itemize}
\item The processes in $P' \cup Q$ crash from the beginning of the execution;
	they take no steps~in~$E_1$.
    
\item At some time $t_w^s$ the writer $w$ starts an operation to write the value $v$
	into the implemented register $\REG$, for some $v\ne v_0$,
	where $v_0$ is the initial value of $\REG$.
Since the number of processes that crash in $E_1$ is $|P'\cup Q|=t$,
	and the algorithm $\mathcal{A}$ tolerates $t$ crashes,
	this write operation eventually terminates,
	say at time~$t_w^e$.
	
\item After this write terminates, no process takes a step up to and including some time $t_r^s > t_w^e$.

\end{itemize}

\noindent
Note that in $E_1$, processes in $P$ are the only ones that take steps up to time $t_r^s$.

\smallskip
\noindent
Execution $E_2$ of algorithm $\mathcal{A}$ is defined as follows:
\begin{itemize}
\item The processes in $P \cup Q$ crash from the beginning of the execution;
	they take no steps~in~$E_2$.
	
\item Before time $t_r^s$, no process in $P'$ takes a step.

\item At time $t_r^s$,
	some process $r\in P'$
	starts a read operation on the implemented register~$\REG$.
Since the number of processes that crash in $E_2$ is $|P\cup Q|=t$,
	and the algorithm $\mathcal{A}$ tolerates $t$ crashes,
	this read operation terminates,
	say at time $t_r^e$.
\end{itemize}

Since no write operation precedes the read operation in $E_2$,
	Property~\ref{p1} of atomic SWMR registers implies:

\begin{claim}\label{E2}
At time $t_r^e$ in $E_2$
	the read operation returns the initial value $v_0$ of $\REG$.
\end{claim}

We now construct an execution $E_3$ of the algorithm $\mathcal{A}$ that merges $E_1$ and $E_2$,
	and contradicts the atomicity of the implemented $\REG$.
$E_3$ is identical to $E_1$ up to time $t_r^s$,
	and it is identical to $E_2$ from time $t_r^s$ to $t_r^e$ 
	(note that in $E_3$ processes in $Q$ can only take steps \emph{after} time $t_r^e$).
To obtain this merged run $E_3$, intuitively we delay
	the messages sent by processes in $P$ to processes in $P'$ until after time $t_r^e$,
	and we also use the fact that processes in $P'$ cannot read any of the shared registers
	in $\mathcal{S}_L$ that processes in $P$
	may have written by time $t_r^s$ (this is because of (**)).

\begin{claim}\label{lb-contra}
There is an execution $E_3$ of algorithm $\mathcal{A}$ such that
\begin{enumerate}[\noindent(a)]
\item up to and including time $t_w^e$, $E_3$ is indistinguishable from $E_1$
	to all processes.
\item up to and including time $t_r^e$, $E_3$  is indistinguishable from $E_2$
	to all processes in $P'$.
\item No process crashes in $E_3$.
\end{enumerate}
\end{claim}

\begin{proof}
Until time $t_r^s$, $E_3$ is identical to $E_1$.
We now show that it is possible to extend $E_3$ in the time interval $[t_r^s,t_r^e]$
	with the sequence of steps
	that the processes~in~$P'$ executed during the same time interval in $E_2$.\footnote{A \emph{step} of $\mathcal{A}$ executed by process $p$
		is one of the following:
		$p$ sending or receiving a message, or
		$p$ applying a write or a read operation to a shared register in $\mathcal{S}_L$.}
More precisely,
	let $s^1,s^2,\ldots,s^{\ell}$ be the sequence of steps
	executed during the time interval $[t_r^s,t_r^e]$ in $E_2$.
Since only processes in $P'$ take steps in $E_2$,
	$s^1,s^2,\ldots,s^{\ell}$ are all steps of processes in $P'$.
Let $C_2^0$ be the configuration of the system $\mathcal{S}_L$ at time $t_r^s$ in $E_2$,\footnote{The
	\emph{configuration} of $\mathcal{S}_L$ at time $t$ in execution $E$ consists of
	the state of each process,
	the set of messages sent but not yet received, and
	the value of each shared register in $\mathcal{S}_L$
	at time $t$ in $E$.}
	and let $C_2^i$ be the configuration that results
	from applying step $s^i$ to configuration $C_2^{i-1}$,
	for all $i$ such that $1\le i\le \ell$.
We will prove that
	there are configurations $C_3^0,C_3^1,\ldots,C_3^{\ell}$ of $\mathcal{S}_L$
	extending $E_3$ at time $t_r^s$ such that:
\begin{enumerate}[\noindent(i)]
\item every process in $P'$ has the same state in $C_3^i$ as in $C_2^i$;
\item the set of messages sent by processes in $P'$ to processes in $P'$,
	but not yet received, is the same in $C_3^i$ as in $C_2^i$;
\item every shared register readable by processes in $P'$
	has the same value in $C_3^i$ as in $C_2^i$; and
\item if $i\ne 0$, $C_3^i$ is the result of
	applying step $s^i$ to configuration $C_3^{i-1}$.
\end{enumerate}
This is shown by induction on $i$.

For the basis of the induction, $i=0$, we take $C_3^0$ to be the configuration of the system
	just before time $t_r^s$ in $E_3$.
Since no process in $P'$ takes a step before time $t_r^s$ in either $E_2$~or~$E_3$,
	$C_3^0$ satisfies properties~(i) and~(ii).

\begin{claim}\label{E3}
At time $t_r^s$ in $E_3$
	the shared registers that can be read by processes in $P'$
	have their initial values.
\end{claim}

\begin{proof}
Suppose, for contradiction, that at time $t_r^s$ in $E_3$,
	some shared register $R$ that can be read by a process $p'$ in $P'$
	does not have its initial value.
By construction, $E_3$ is identical to $E_1$ until time $t_r^s$, and so
	only processes in $P$ take steps before time $t_r^s$ in $E_3$.
Thus, register $R$ was written by some process $p$ in $P$ by time $t_r^s$ in $E_3$.
Since $R$ is readable by $p' \in P'$ and is written by $p \in P$, $R$ is shared by~both~$p$~and~$p'$.
Thus, there must be a set $S_i$ in $L$ that contains both $p$ and $p'$ --- a contradiction to (**).
\end{proof}

We now return to the proof of Claim \ref{lb-contra}. By Claim~\ref{E3}, the shared registers readable by processes in $P'$
	have the same value (namely, their initial value) in $C_3^0$ as in $C_2^0$.
So, $C_3^0$ also satisfies property~(iii).
Property~(iv) is vacuously true for $i=0$.

For the induction step, for each $i$ such that $1\le i\le\ell$,
	we consider separately the cases of $s^i$ being a step to
	send a message,
	receive a message,
	write a shared register, and
	read a shared register.
In each case, it is easy to verify that,
	assuming (inductively) that $C_3^{i-1}$ has properties (i)--(iv),
	step $s^i$ is applicable to $C_3^{i-1}$,
	and the resulting configuration $C_3^i$ has properties (i)--(iv).

To complete the definition of $E_3$,
	after time $t_r^e$
	we let processes take steps in round-robin fashion.
Whenever a process's step is to receive a message,
	it receives the oldest one sent to it; this ensures that all messages are eventually received.
Processes continue taking steps in this fashion
	according to algorithm~$\mathcal{A}$.

Since $E_3$ is identical to $E_1$ up to and including time $t_w^e$,
	$E_3$ is indistinguishable from $E_1$ up to and including time $t_w^e$
	to all processes in $P$.
This proves part~(a) of the claim.

Note that in $E_3$ and $E_2$, the processes in $P'$:
	(a)  take no steps before time $t_r^s$, and
	(b) during the time interval $[t_r^s,t_r^e]$,
	they execute exactly the same sequence of steps,
	and go through the same sequence of states.
Thus, up to and including time $t_r^e$, $E_3$ is indistinguishable from $E_2$
	to all processes in $P'$.
This proves part~(b) of the claim.

Finally, every process takes steps as required by the algorithm in $E_3$,
	so no process crashes.
This proves part~(c) of Claim \ref{lb-contra}.
\end{proof}

By Claim~\ref{lb-contra}(a),
	up to and including time $t_w^e$, $E_3$ is indistinguishable from $E_1$
	to the writer $w \in P$.
So $E_3$ contains the write operation that writes $v\ne v_0$ into $\REG$,
	which starts at time $t_w^s$ and ends at time $t_w^e$.
By Claim~\ref{lb-contra}(b),
	up to and including time $t_r^e$, $E_3$ is indistinguishable from $E_2$
	to $r \in P'$.
So $E_3$ contains the read operation that returns $v_0$,
	which starts at time $t_r^s$ and ends at time $t_r^e$.
Since $t_w^e<t_r^s$,
	this read operation violates Property~\ref{p1} of atomic SWMR registers.
As there are no process crashes in $E_3$ (by Claim~\ref{lb-contra}(c)),
	this contradicts the assumption
	that $\mathcal{A}$ is an implementation of an atomic SWMR register $\REG$
	that tolerates $t>t_L$ crashes.

\smallskip
\noindent
\textbf{Case 2:} $w \in Q$.

We now construct a sequence of executions of $\AW$ that leads to a contradiction.
In all these executions
	 all the processes in $Q$ except for $w$ are crashed from the start: they take no steps.

Let $E$ be the following execution of $\AW$ (Figure \ref{e}):

\begin{itemize}   

\item All the processes in $P'$ are crashed: they take no steps.

\item All the processes in $P$ are correct.

\item At some time $t_w^0$,
	$w$ starts an operation $write(v)$ to write $v\ne v_0$
	into the implemented register $\REG$,
	where $v_0$ is the initial value of $\REG$.
		
During this write operation, 
	$w$ executes the sequence of $\pv$ steps $s^1,...,s^k$,\MP{``$\pv$'' is a macro that can be changed} 
	say $s^i$ occurs at time $t_w^i$.
Recall that each step $s^i$ is one of the following:
	receiving messages,
	sending a message,
	reading a shared register, 
	or
	writing a shared register.
	
\item $w$ completes its $write(v)$ operation at time $t_w^{k+1}$. 

\item At some time $t_c> t_w^{k+1}$, process $w$ crashes.

Note that at this point all the processes in $P' \cup Q$ have crashed in $E$.
By ($\dagger$), this is a total of $t$ crashes.

\item At some time $r_s > t_c$, process $p$ starts reading $\REG$, and at time $r_e$ this operation
	completes and returns $v$.

\end{itemize}

\begin{figure}[H]
    \centering 
    \includegraphics[width=\textwidth]{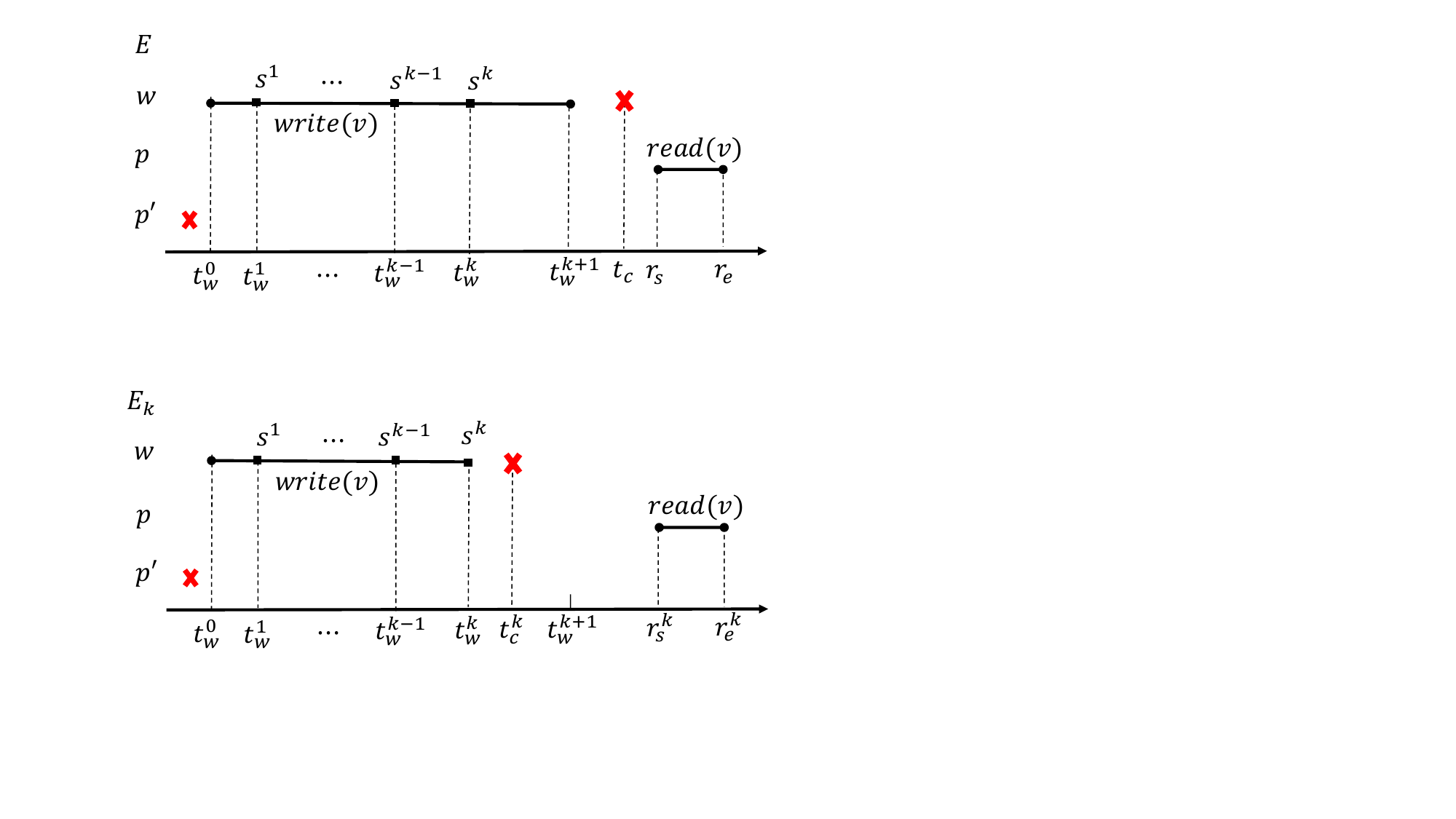}
    \caption{Execution $E$ (only the steps of $w$, $p$ and $p'$ are illustrated here)} 
    \label{e}
\end{figure}

\begin{figure}[H]
    \centering 
    \includegraphics[width=\textwidth]{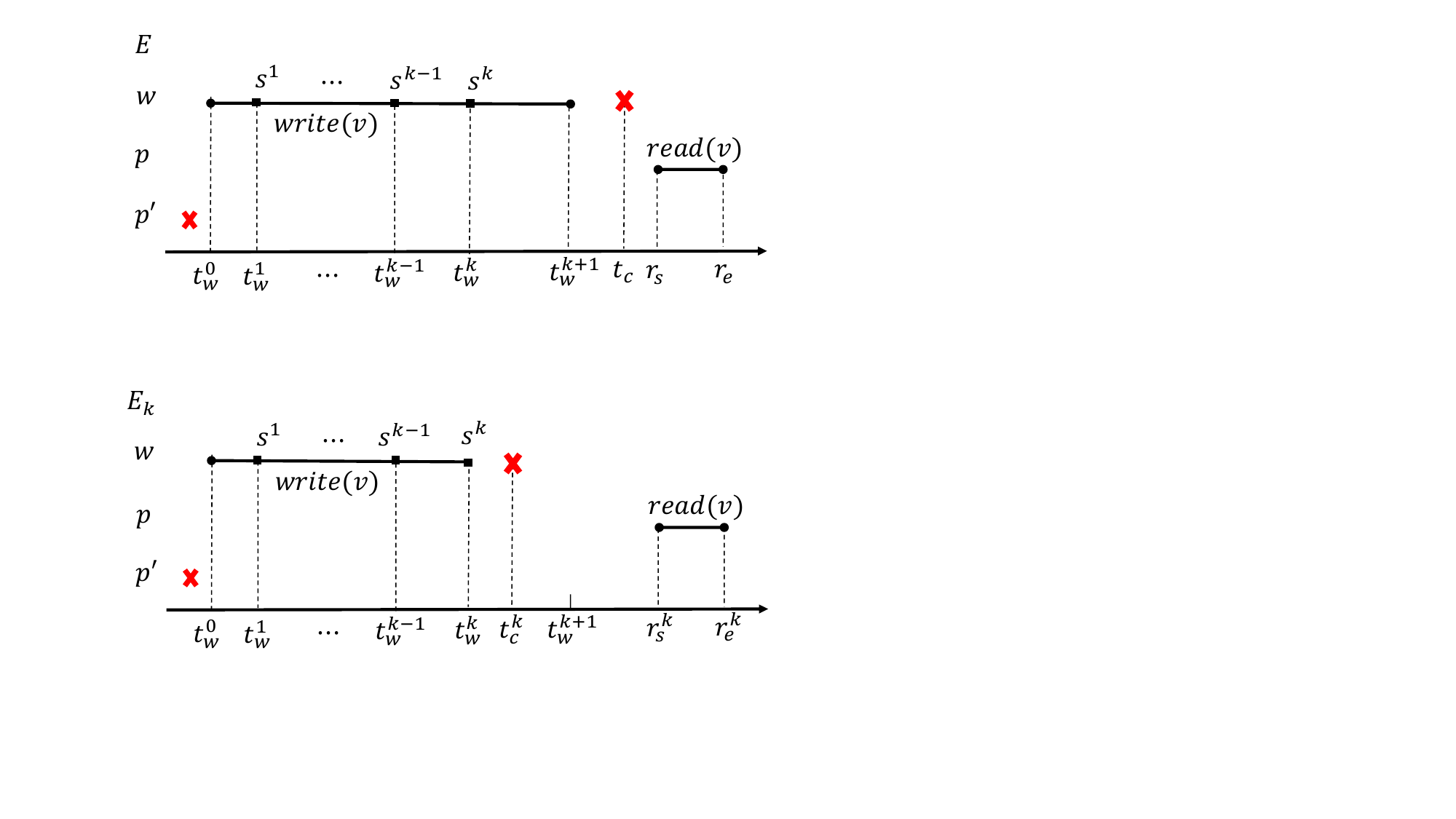}
    \caption{Execution $E_k$} 
    \label{ek}
\end{figure}


\begin{figure}[H]
    \centering 
    \includegraphics[width=\textwidth]{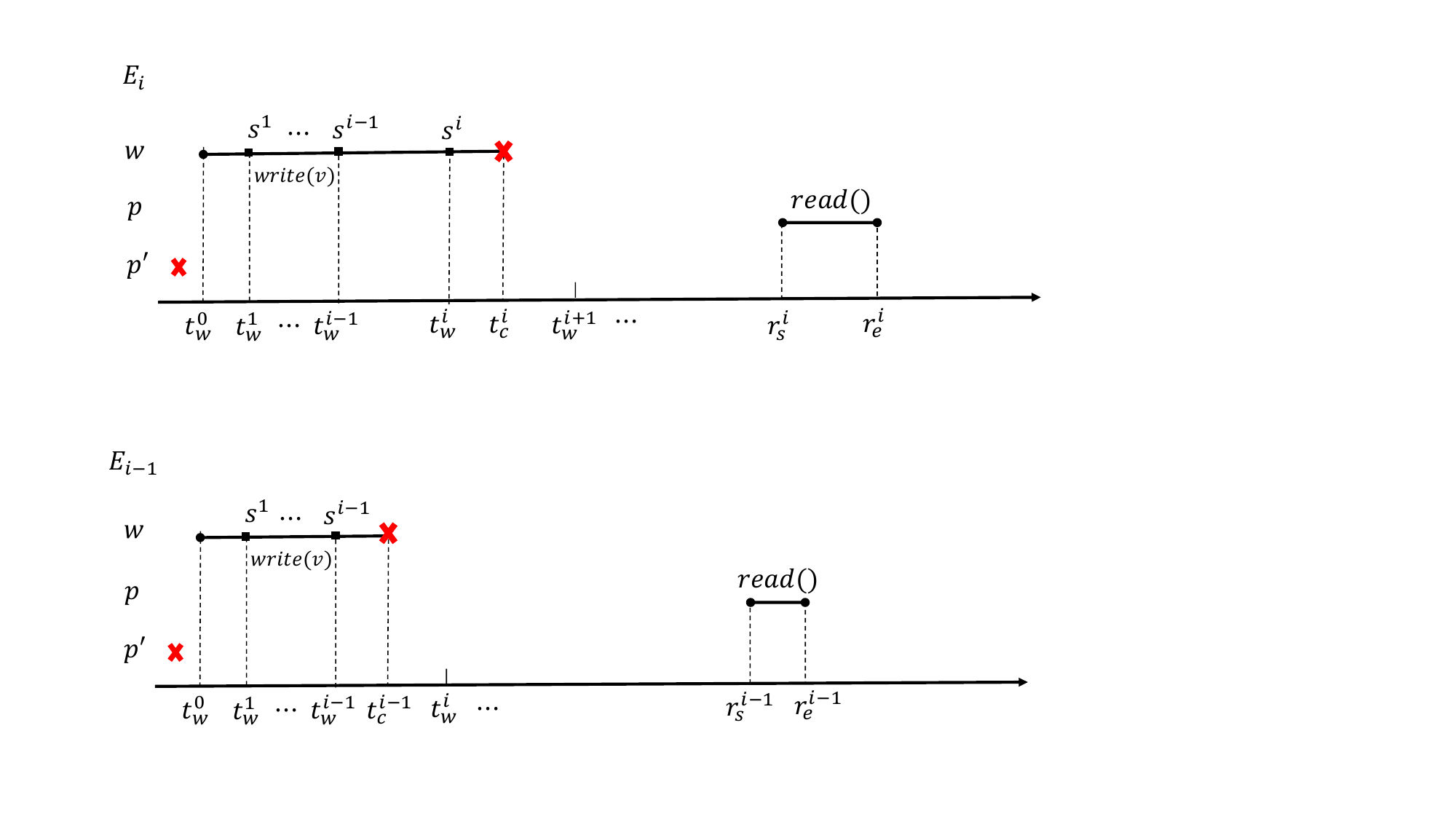}
    \caption{Execution $E_{i-1}$} 
    \label{ei-1}
\end{figure}

\begin{figure}[H]
    \centering 
    \includegraphics[width=\textwidth]{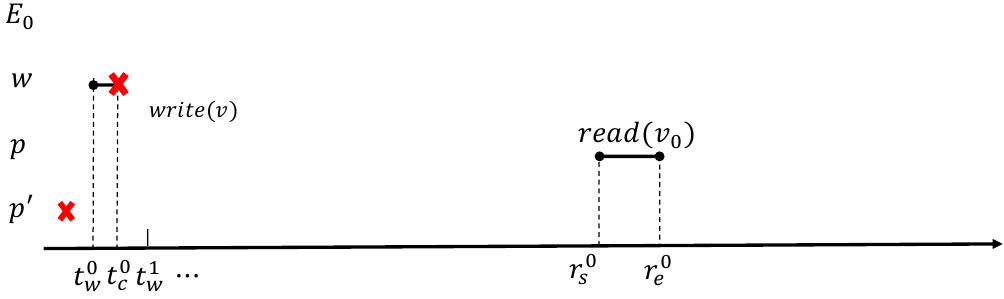}
    \caption{Execution $E_0$} 
    \label{e0}
\end{figure}

We now construct a sequence of executions $E_k, E_{k-1},\ldots, E_1, E_0$
	inductively as follows:
	
%

Execution $E_k$ of $\AW$ is identical to $E$ except that $w$ crashes
	at some time $t^k_c$, where $t^k_w < t^k_c <t^{k+1}_w$;
	that is,
	 $w$ crashes after executing all the $\pv$
	steps of $write(v)$, including $s^k$, but before the operation $write(v)$ returns (see Figure~\ref{ek}).

Since $w$ completes all the $\pv$
	steps of $write(v)$ before crashing, $E_k$ and $E$ are indistinguishable to all the processes in $P$, including $p$.
So $p$ behaves in $E_k$ as it did in $E$:
	at time $r^k_s = r_s$,
	process $p$ starts reading $\REG$,
	and at time $r^k_e = r_e$ this operation
	completes and returns $v$.

For $i \in \{1, \ldots, k\}$,
	$E_{i-1}$ is obtained from $E_{i}$ by making process $w$ crash
	one $\pv$ step earlier, i.e., just before executing step $s^{i}$ (see Figures~\ref{ei-1}). 
More precisely, $E_{i-1}$ is as follows: 

\begin{itemize}  

\item All the processes in $P'$ are crashed: they take no steps.

\item All the processes in $P$ are correct.



\item Process $w$ behaves exactly as in execution $E_{i}$ until it crashes at some time $t^{i-1}_c$, where  $t^{i-1}_w < t^{i-1}_c <t^{i}_w$,
	so $w$ crashes 
	\emph{before} executing $\pv$ step $s^{i}$.

\item All the processes in $P$ behave exactly as in execution $E_{i}$ up to and including time~$t^{i-1}_w$.

%
	
	
\item At some time $r^{i-1}_s >  t^{i-1} _c$,
	process $p$ starts reading $\REG$, and at time $r^{i-1}_e$ this operation
	completes and returns some $v_{i-1} \in \{v_0,v\}$.


\end{itemize}

Note that in execution $E_0$, process $w$ crashes at some time  $t^{0}_c$, where  $t^{0}_w < t^{0}_c <t^{1}_w$,
	before executing its first $\pv$ step $s^1$ (see Figure~\ref{e0}).
Since $w$ crashes before executing any communication step, processes in $P$ cannot distinguish execution $E_0$
	from one where
	$w$ crashes \emph{before starting any $write()$ operation}.
Thus, when $p$ reads $\REG$ in $E_0$, it reads the initial value of $\REG$, namely $v_0$.

\begin{claim}\label{biv}
There is an $i \in \{1, \ldots, k\}$ such that
	process $p$ reads $v_0$ from $\REG$ in $E_{i-1}$,
	and 
	process $p$ reads $v$ from $\REG$ in $E_{i}$.
\end{claim}

\begin{proof}
This is because $\forall i,  0 \le i \le k$,
	$p$ reads either $v_0$ or $v$ from $\REG$ in $E_{i}$,
	and $p$ reads $v_0$ from $\REG$ in $E_0$ and reads $v$ from $\REG$ in $E_{k}$.
\end{proof}

Henceforth, let $j \in \{1, \ldots, k\}$ be such that
	$p$ reads $v_0$ from $\REG$ in $E_{j-1}$
	and
	$p$ reads $v$ from $\REG$ in $E_{j}$
	(see Figures~\ref{ej} and~\ref{ej-1}).


\begin{claim}\label{toP}
The step $s^j$ of $w$ in execution $E_j$
	is one of the following two types:
	$w$ sends a message to a process in $P$,
	or $w$ writes a shared register that a process in $P$ can read.
\end{claim}

\begin{proof}
Note that:
(1)    the $\pv$ steps executed by $w$ in $E_{j-1}$ and $E_{j}$ differ only
	in that $w$ executes $s^j$ in $E_{j}$, but crashes before executing $s^j$ in $E_{j-1}$;
(2)	process $p$ is able to distinguish between $E_{j-1}$ and $E_{j}$
	(because $p$ reads $v_0$ from $\REG$ in $E_{j-1}$,
	while it reads $v$ from $\REG$ in $E_{j}$).
	
From the definition of $\pv$ steps, step $s^j$ of $w$ is one of the following:
	$w$ receives a set of messages,
	$w$ sends a message,
	$w$ reads a shared register, 
	or
	$w$ writes a shared register.
From (1) and (2), it is clear that $s^j$ cannot be a message receipt or a read step.
Furthermore, since all the processes in $P'$ take no steps (in both $E_{j-1}$ and $E_{j}$),
	$s^j$ must be either
	the sending of a message to \emph{a process in $P$}, or
	the writing of a register that can be read by \emph{a process in $P$}.
\end{proof}

\begin{figure}[H]
    \centering 
    \includegraphics[width=0.9\textwidth]{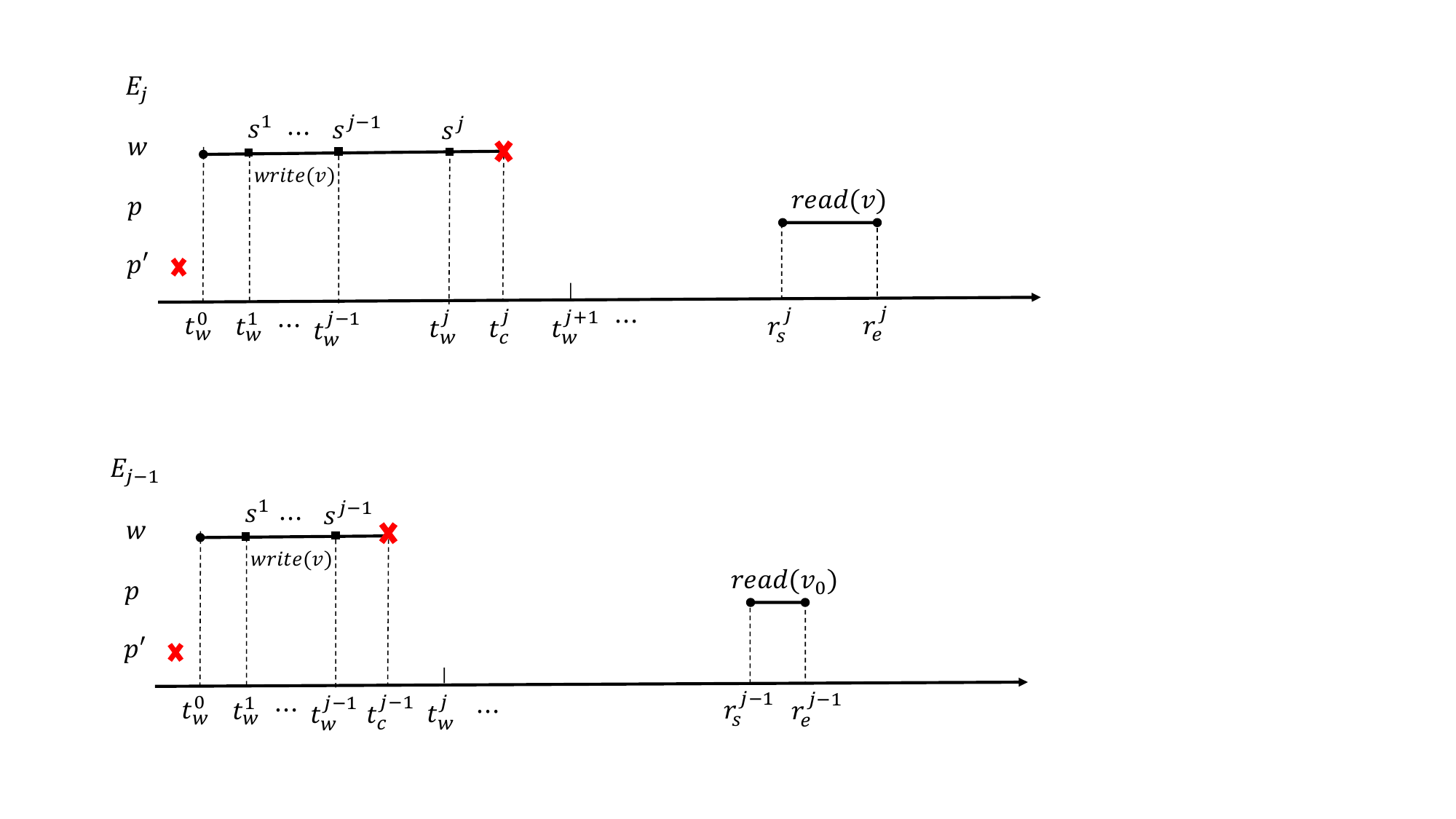}
    \caption{Execution $E_{j}$} 
    \label{ej}
\end{figure}

\vspace{-3mm}

\begin{figure}[H]
    \centering 
    \includegraphics[width=0.9\textwidth]{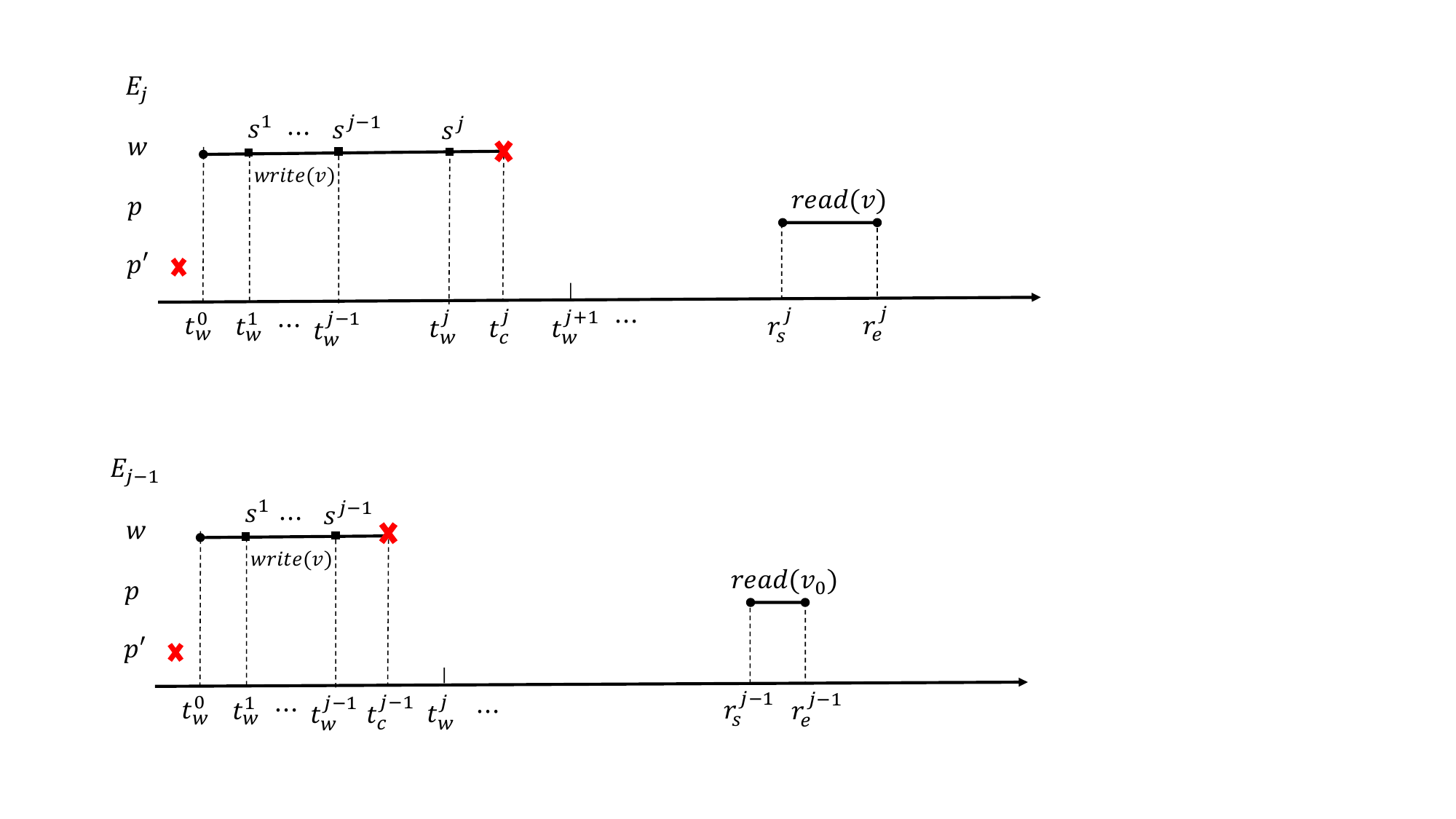}
    \caption{Execution $E_{j-1}$} 
    \label{ej-1}
\end{figure}

\vspace{-3mm}

\begin{figure}[H]
    \centering 
    \includegraphics[width=0.9\textwidth]{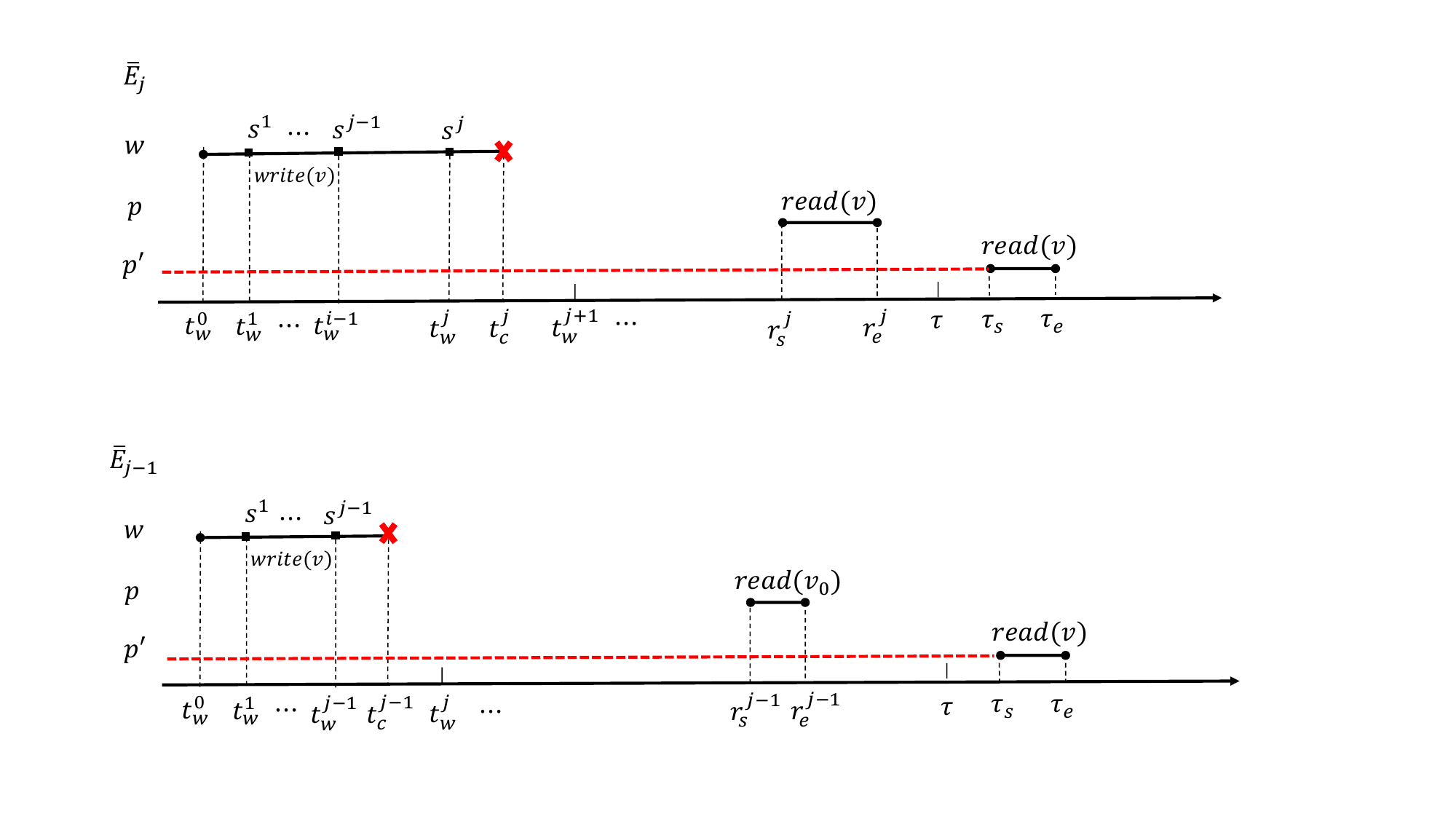}
    \caption{Execution $\bar{E}_j$} 
    \label{barej}
\end{figure}

\vspace{-3mm}

\begin{figure}[H]
    \centering 
    \includegraphics[width=0.9\textwidth]{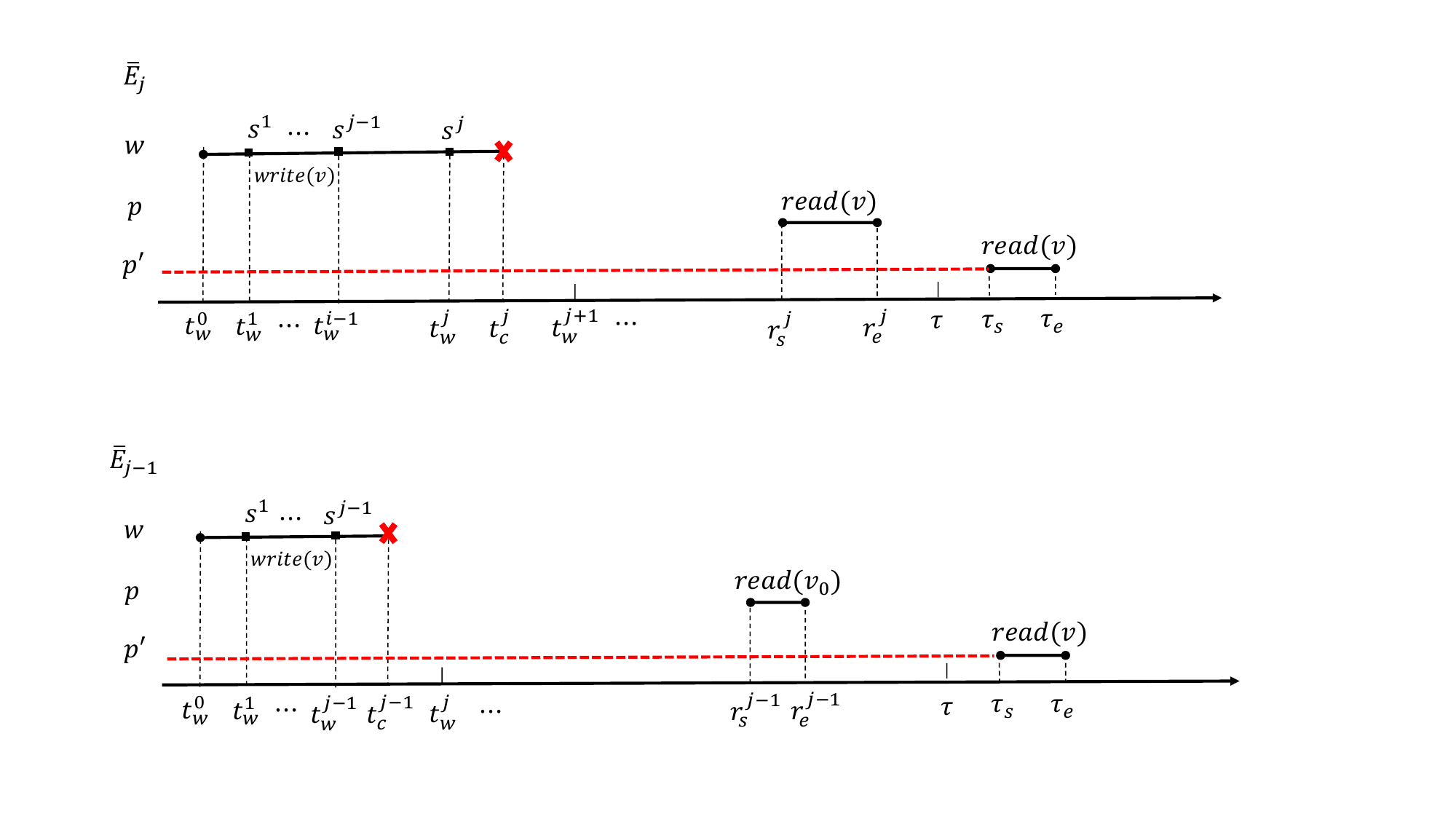}
    \caption{Execution $\bar{E}_{j-1}$} 
    \label{barej-1}
\end{figure}

\vspace{-3mm}

\begin{figure}[H]
    \centering 
    \includegraphics[width=0.9\textwidth]{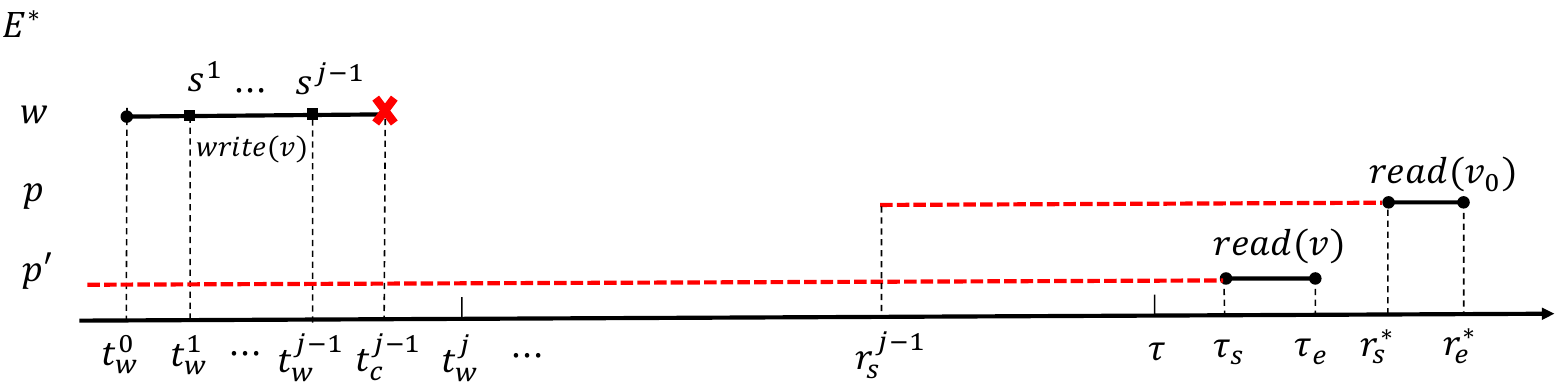}
    \caption{Execution $E^*$} 
    \label{estar}
\end{figure}

We now construct execution $\bar{E}_j$: roughly speaking, this execution
	is identical to $E_j$ except that all the processes in $P'$ (which were crashed in $E_j$) now ``wake up''
	after $p$ reads $v$ from $\REG$;
	and, after waking up, $p' \in P'$ reads $v$ from $\REG$ (see Figure~\ref{barej}). 
	
More precisely, $\bar{E}_j$ is as follows:

\begin{itemize}   

\item All processes behave exactly as in $E_j$ up to some time $\tau > \max( r^{j-1}_e, r^j_e)$.
	
		Recall that at time $r^{j-1}_e$, process $p$ completes its read of $v_0$ from $\REG$ in $E_{j-1}$;
		and that at time $r^{j}_e$, process $p$ completes its read of $v$ from $\REG$ in $E_{j}$.

	So, as in $E_j$:
	
	(i) Process $w$ crashes at time $t^j_c$, where  $t^j_w < t^j_c <t^{j+1}_w$.
	
	(ii) Process $p$ reads $v$ from $\REG$, and this operation starts at time $r^j_s$ and completes at time $r^j_e$.
	
	(iii) No process in $P'$ takes any step before time $\tau$.

\item At time $\tau$, all the processes in $P'$ start taking steps.

	After waking up, the processes in $P'$ receive all the messages that $w$
	and processes in $P$ sent to them up to and including time $t^{j-1}_w$; let $\cal{M}$ be this set of messages.
	But all the messages sent to them by $w$ and processes in $P$ \emph{after time $t^{j-1}_w$}
	are delayed until a time to be determined later.



	Recall that by (**), in system $\mathcal{S}_{L}$ no process in $P$ can write to a register that a process in $P'$ can read\MP{To generalize Theo: change to ``no process in $P$ can communicate with a process in $P'$ via a shared object"}.
	Thus, as long as we delay the receipt of the messages that processes in $P$ send to processes in $P'$
	after time $t^{j-1}_w$,
	for processes in $P'$ this execution is indistinguishable from one in which all the processes
	in $P$ and $Q$ have crashed by time $t^{j-1}_w$.
	Note that by ($\dagger$), 
	$|P \cup Q|=t$,
	so this is number of crashes is possible.

\item At some time $\tau_s$ after processes in $P'$ have received all the messages in $\cal{M}$,
	process $p'$ starts reading $\REG$.
	Since this is after $p$ completed its read of $v$ from $\REG$ at time $r^j_e$,
	$p'$ also reads $v$ from $\REG$.
Let $\tau_e$ be the time when $p'$ completes this read operation.	

\item After time $\tau_e$, all the processes in $P'$ receive all the delayed messages.

\end{itemize}


We now construct execution $\bar{E}_{j-1}$ which is obtained from $E_{j-1}$ in the same way that we obtained $\bar{E}_{j}$ from $E_{j}$:
	$\bar{E}_{j-1}$
	is identical to $E_{j-1}$ except that all the processes in $P'$ (which were crashed in $E_{j-1}$) now ``wake up''
	after $p$ reads $v_0$ from $\REG$;
	and after waking up, 
	$p' \in P'$ reads $\REG$ (see Figure~\ref{barej-1}). 
More precisely, $\bar{E}_{j-1}$ is as follows:

\begin{itemize}   

\item All processes behave exactly as in $E_{j-1}$ up to time $\tau$.
		Recall that $\tau > \max( r^{j-1}_e, r^j_e)$. 
	So, as in $E_{j-1}$:
	
	(i) Process $w$ crashes at time $t^{j-1}_c$, where $t^{j-1}_w < t^{j-1}_c <t^j_w$,
	so $w$ crashes ``just before'' executing step $s^{j}$.
	
        (ii) Process $p$ reads $v_0$ from $\REG$, and this operation starts at time $r^{j-1}_s$ and completes at time $r^{j-1}_e$.

	(iii) No process in $P'$ takes any step before time $\tau$.

\item At time $\tau$, all the processes in $P'$ start taking steps.

	After waking up, the processes in $P'$ receive all the messages that $w$
	and processes in $P$ sent to them up to and including time $t^{j-1}_w$; let $\cal{M}'$ be this set of messages.
	But all the messages sent to them by $w$ and processes in $P$ \emph{after time $t^{j-1}_w$}
	are delayed until a time to be determined later.
	
	Recall that $\cal{M}$ is the set of messages that $w$
	and processes in $P$ sent to processes in $P'$ up to and including time $t^{j-1}_w$ in execution $\bar{E}_{j}$.

%

\begin{claim}\label{samemsg}
	$\cal{M}' = \cal{M}$.
\end{claim}

\begin{proof}
All the messages that $w$ sends up to and including time $t^{j-1}_c$ are the same in $\bar{E}_{j-1}$  and $\bar{E}_{j}$. 
Furthermore, by the construction of $\bar{E}_{j-1}$ from $E_{j-1}$ and of $\bar{E}_{j}$ from $E_{j}$,
it is clear that
	up to and including time $t^{j-1}_w$,
 	all the processes in $P$ behave the same in $\bar{E}_{j-1}$ and $E_{j-1}$,
	and they also behave the same in $\bar{E}_{j}$ and $E_{j}$.
Furthermore,
 	by the construction of $E_{j-1}$ from $E_{j}$,
 	up to and including time $t^{j-1}_w$,
 	all the processes in $P$ behave the same in $E_{j-1}$ and $E_{j}$.
So up to and including time $t^{j-1}_w$
	all the processes in $P$ behave the same in $\bar{E}_{j-1}$ and $\bar{E}_{j}$.
Thus,
	all the messages that processes in $P$ send to processes in $P'$
	up to and including time $t^{j-1}_w$ are the same in $\bar{E}_{j-1}$ and $\bar{E}_{j}$. 

From the above, 
	and the definition of $\cal{M}'$ and $\cal{M}$,
	it is now clear that $\cal{M}' = \cal{M}$.
\end{proof}

	Recall that in system $\mathcal{S}_{L}$, no process in $P$ can write to a register
	that a process in $P'$ can read\MP{To generalize Theo: change to ``no process in $P$ can communicate with a process in $P'$ via a shared object"}.
	Moreover, by Claim~\ref{toP},
	step $s^j$ is not a write to a register that any process in $P'$ can read
	(because $P \cap P' = \emptyset$ and no set $
	S_i$ contains a node in $P$ and a node in $P'$).
	Thus,
	from Claim~\ref{samemsg},
		as long as we delay the receipt of the messages that processes in $P$ send to processes in $P'$ after time
		$t^{j-1}_w$,
	for processes in $P'$,
	this execution is indistinguishable from $\bar{E}_j$.

\item At time $\tau_s$ after processes in $P'$ have received all the messages in $\cal{M}'=\cal{M}$,
	process $p'$ starts reading $\REG$.
Since for processes in $P'$ this execution is indistinguishable from $\bar{E}_j$ (so far),
	$p'$ reads $v$ from $\REG$ as in $\bar{E}_j$,
	and this read operation completes at time $\tau_e$ as in $\bar{E}_j$.
	
\item After time $\tau_e$, all the processes in $P'$ receive all the delayed messages.
\end{itemize}

Finally, we construct the execution $E^*$ that yields the contradiction.
Roughly speaking,
	$E^*$ is obtained from $\bar{E}_{j-1}$ by delaying the read operation of $p$: 
	in $\bar{E}_{j-1}$, the read operation of $p$ \emph{completes before} the read of $p'$ starts,
	while in $E^*$, the read operation of $p$ \emph{starts after} the read of $p'$ completes (see Figure~\ref{estar}).
More precisely, $E^*$ is as follows:

\begin{itemize}   

\item Process $w$ behaves exactly as in execution $\bar{E}_{j-1}$.

	
\item Processes in $P$ behaves the same as in $\bar{E}_{j-1}$, up to but \emph{not} including time $r^{j-1}_s$;
	at time $r^{j-1}_s$ they pause (we will see later when they will resume taking steps).
	In particular, process $p$ does \emph{not} invoke the read of $\REG$ at time $r^{j-1}_s$.
	
\item Every process in $P'$ behaves exactly as in $\bar{E}_{j-1}$ up to and including time $\tau_e$. In particular:

	(1) No process in $P'$ takes any step before time $\tau$.
	
	(2) At time $\tau$, all the processes in $P'$ start taking steps.

	After waking up, the processes in $P'$ receive all the messages that $w$
	and processes in $P$ sent to them up to and including time $t^{j-1}_w$
	(i.e., they receive all the messages in $\cal{M'}=\cal{M}$).
	
	(3) After processes in $P'$ have received all these messages,
	at time $\tau_s$ process $p'$ starts reading $\REG$,
	$p'$ reads $v$ from $\REG$,
	and this read operation completes at time $\tau_e$.

	Note that each process in $P'$ behaves exactly as in $\bar{E}_{j-1}$ up to and including time $\tau_e$ since it cannot ``notice''
	that in $E^*$ (in contrast to $\bar{E}_{j-1}$) processes in $P$ paused from time $r^{j-1}_s$.
	This is because:
	(a) in $\bar{E}_{j-1}$,
	all the messages that $w$ and processes in $P$ send to processes in $P'$
	\emph{after time $t^{j-1}_w$} 
	are delayed and received \emph{after} time $\tau_e$, and
	(b) by (**),
		 in system $\mathcal{S}_{L}$, no process in $P$ can write to a register
	that a process in $P'$ can read.\MP{To generalize Theo: change to ``no process in $P$ can communicate with a process in $P'$ via a shared object"}

\item All the messages that processes in $P'$ send after they wake up at time $\tau$ are delayed until a time to be determined later.
	
\item After $p'$ completes reading $v$,
	at some time $r_s^* > \tau_e$
	all the processes in $P$ resume taking steps,
	and the steps that they take from time $r_s^*$
	are exactly the same as those that they take in $\bar{E}_{j-1}$ from time $r^{j-1}_s$:
	so these steps are just delayed by $\delta = r_s^* - r^{j-1}_s$.
	Intuitively, processes in $P$ do not ``notice'' that this delay occurred.
	More precisely, processes in $P$ cannot distinguish between
	$E^*$ and $\bar{E}_{j-1}$ up to and including time $r_e^* = r^{j-1}_e + \delta$ ($\ddag$).
	To see why, note that:
	(a) up to and including time $r_e^*$,
		processes in $P$ do not receive any message from processes in $P'$, exactly as
		in execution $\bar{E}_{j-1}$ up to and including time $r^{j-1}_e$;
	and 
	(b) in system $\mathcal{S}_{L}$, no process in $P'$ can write to a register
	that a process in $P$ can read.\MP{To generalize Theo: change to ``no process in $P'$ can communicate with a process in $P$ via a shared object"}

	By ($\ddag$), $p \in P$ starts reading $\REG$ at time $r_s^* = r^{j-1}_s + \delta$ (just as it did at time $r^{j-1}_s$ in $\bar{E}_{j-1}$),
	it reads $v_0$ and completes this read at time $r_e^* = r^{j-1}_e + \delta$
	(just as it did at time $r^{j-1}_e$ in $\bar{E}_{j-1}$).

\item After time $r_e^*$, all the processes in $P$ and $P'$ receive all the delayed messages.
\end{itemize}

Note that in execution $E^*$,
	process $p$ reads (the ``old'' value) $v_0$ from $\REG$ but this read starts at time $r_s^*$ \emph{after} the time $\tau_e$ when process $p'$ completes reading (the ``new'' value) $v$ from $\REG$.
This new-old inversion violates the atomicity of register $R$ --- a contradiction.

\end{proof}

Note that the proof of Theorem~\ref{thm-lb} does not depend on the \emph{type} or \emph{number} of registers
	shared by the processes in each set $S_i$ of the bag $L$.
In fact, the proof of Theorem~\ref{thm-lb} does not even depend on the type of objects
	that are shared by the processes in each set $S_i$; for example these objects could include
	queues, stacks, and consensus objects.
So the result of this theorem applies not only to $\mathcal{S}_{L}$ but also to every 
	m\&m system $\mathcal{S}$ in $\UM_{L}$
	where the processes in each $S_i$ share any number of objects of any type among themselves (and only among themselves).
Hence we have the following stronger result:

\begin{theorem}\label{thm-lb-on-steroid}
Consider any 
	m\&m system $\mathcal{S}$ in $\UM_{L}$ induced
	by a bag $L =  \{S_1,\dots ,S_m\}$ of subsets of $\Pi = \{ p_1 , p_2 , \ldots, p_n \}$.
If $t_L+1 <n$ processes may crash in $\mathcal{S}$,
	then
	for every process $w$ in $\mathcal{S}$,
	there is no algorithm that implements an atomic SWMR 
	register writable by $w$ and readable by all processes in $\mathcal{S}$.
\end{theorem}

\section{Atomic SWMR register implementation in uniform m\&m systems}\label{noname}

Let $G=(V,E)$ be an undirected graph where $V = \{ p_1 , p_2 , \ldots, p_n \} $,
	 i.e., the nodes of $G$ are the processes $p_1 , p_2 , \ldots, p_n$.
Let $\mathcal{S}_{G}$ be the uniform m\&m system induced by $G$.
Recall that in $\mathcal{S}_{G}$,
	each process $p_i$ and its neighbours in $G$
	share some atomic SWMR registers that can be read by (and only by) them.

We now use $G$ to determine the maximum number of process
	crashes that may occur in $\mathcal{S}_{G}$ such that
	it is possible to implement a shared atomic SWMR register readable by all processes in $\mathcal{S}_G$.
To do so, we first recall the definition of the \emph{square of the graph $G$}: 
$G^2 = (V,E')$ where $\mbox{$E' = \{ (u,v) ~|~ (u,v) \in E$}$ or  $\mbox{$\exists k \in V$} \mbox{ such that }
	(u,k) \in E \mbox{ and } (k,v) \in E \}$.

\begin{definition}\label{doesnothaveone}
Given an undirected graph $G=(V,E)$ such that $V = \{ p_1 , p_2 , \ldots, p_n \}$,
	$t_G$ is the maximum integer $t$ such that the following condition holds:
For all disjoint subsets $P$ and $P'$ of $V$ of size $n-t$ each,
	some edge in $G^2$ connects a node in $P$ with a node in $P'$; i.e.,
	$G^2$ has an edge $(u,v)$ such that $u \in P$ and $v \in P'$.
\end{definition}

\begin{figure}
\centering
\begin{minipage}{.5\textwidth}
  \centering
 \hbox{\hspace{1cm} \includegraphics[width=.60\linewidth]{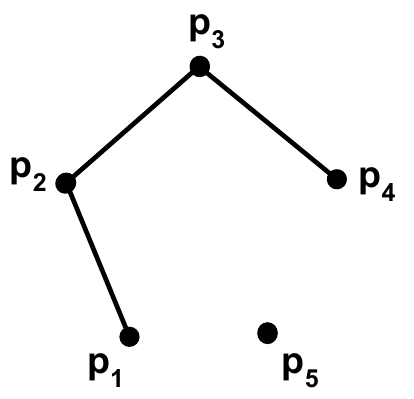}}
  \caption{A graph $G$}
  \label{G-fig2}
\end{minipage}%
\begin{minipage}{.5\textwidth}
  \centering
 \hbox{\hspace{1cm}\includegraphics[width=.6\linewidth]{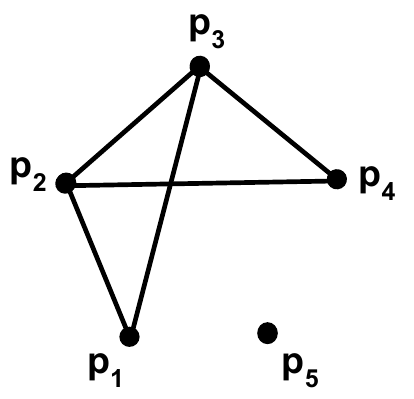}}
  \caption{The square $G^2$ of graph $G$}
  \label{G2-fig}
\end{minipage}
\end{figure}

Note that $\lceil n/2\rceil -1 \le t_G \le n-1$.
Moreover,
	in a pure message-passing system (where $G$ and $G^2$ have no edges)
	$t_G = \lceil n/2\rceil -1$.
	
To illustrate the above definition of $t_G$, 
	consider the graph $G$ in Figure \ref{G-fig2} where $V= \{ p_1, p_2, p_3, p_4, p_5 \}$.
Figure~\ref{G2-fig} shows the corresponding $G^2$ graph.
By the above definition of $t_G$:
	(a) $t_G \ge  3$ because for any two disjoint subsets of $V$ of size $5-3=2$ each,
	$G^2$ has an edge that ``connects'' these two subsets
	(e.g., for subsets $P= \{ p_1, p_2 \}$ and $P' = \{p_4, p_5 \}$,
	the edge $(p_2, p_4)$ of $G^2$ connects a node of $P$ to a node of $P'$), and
	(b) $t_ G< 4$ because there are two disjoint subsets $\{ p_1 \}, \{p_5 \}$ of size $5-4=1$ each,
	 such that no edge in $G^2$ connects $p_1$ and $p_5$.
So in this example $n = 5$ and $t_G = 3$. 

In Theorem~\ref{thm44} given below, we show that in the uniform m\&m system $\mathcal{S}_G$ induced by a graph~$G$, it is possible to implement an atomic SWMR register readable by all processes \emph{if~and~only~if} at most $t_G$ processes may crash in $\mathcal{S}_G$.

For example, consider the uniform m\&m system $\mathcal{S}_G$ of 5 processes induced by the graph $G$ in Figure \ref{G-fig2}.
In addition to message-passing links, $\mathcal{S}_G$ has 3 pairwise RDMA connections.
Since $t_G =3$, by Theorem~\ref{thm44},
	we can implement an atomic SWMR register readable by all
	5 processes of $\mathcal{S}_G$ if and only if at most 3 of them may crash.
In contrast, in a pure message-passing system with 5 processes, no implementation of such a register
	can tolerate more than 2 process crashes.

To prove Theorem~\ref{thm44} we first show:

\begin{lemma}\label{relatetgandtl}
Let $\mathcal{S}_{G}$ be the uniform m\&m system induced by
	an undirected graph \linebreak $G=(V,E)$ where $V = \{ p_1 , p_2 , \ldots, p_n \}$.
Let $\mathcal{S}_{L}$ be the general m\&m system
	such that $\mathcal{S}_{L} = \mathcal{S}_{G}$.
Then $t_G = t_L$.
\end{lemma}

\begin{proof}

By Definition~\ref{SG-def},  $\mathcal{S}_{L}$ is the general m\&m system 
	where~$L =  \{  S_1, S_2,\dots, S_n \}$ such that
	$S_i = N^+(p_i)$, i.e., for all $i$, $1 \le i \le n$,
	$S_i$ is the set of neighbours of $p_i$ (including $p_i$) in the graph $G$.
Recall that $t_L$ is the maximum $t$ such that for all disjoint subsets $P$ and $P'$ of $V$ of size $n-t$ each,
	some set $S_i$ in $L$ contains both a node in $P$ and a node in $P'$.

From the definitions of $t_G$ (Definition~\ref{doesnothaveone}) and $t_L$, it is clear that to prove that $t_G = t_L$
	it suffices to show that for all $0 \le t \le n$ and all disjoint subsets $P$ and $P'$ of $V$ of size $n-t$ each, the following holds:
	some edge in $G^2$ connects a node in $P$ with a node in $P'$
	if and only if some set $S_i$ in $L$ contains both a node in $P$ and a node in $P'$.

\noindent
[\textsc{Only If}] Suppose $G^2$ has an edge $(p_i , p_j)$ such that $p_i \in P$ and $p_j \in P'$; since $P$ and $P'$ are disjoint, $p_i$ and $p_j$ are distinct.
By definition of $G^2$, there are two cases: 
\begin{enumerate}
\item $(p_i , p_j) \in E$. 
In this case, $p_j \in N^+(p_i)$ and $p_i \in N^+(p_i)$. 
So the set $S_i = N^+(p_i)$ in $L$ contains both node $p_i \in P$ and node $p_j \in P'$.

\item There is a node $p_k \in V$ such that  $(p_i , p_k) \in E$ and $(p_k , p_j) \in E$.
 In this case, $p_i \in N^+(p_k)$ and $p_j \in N^+(p_k)$.
So the set $S_k = N^+(p_k)$ in $L$ contains both $p_i \in P$ and $p_j \in P'$.
\end{enumerate}
So in both cases, some set $S_{\ell}$ in $L$ contains both a node in $P$ and a node in $P'$.

\noindent
[\textsc{If}] Suppose set $S_k$ in $L$ contains both a node $p_i$ in $P$ and a node $p_j$ in $P'$; since $P$ and $P'$ are disjoint, $p_i$ and $p_j$ are distinct.
Recall that $S_k=N^+(p_k)$ for node $p_k \in V$.

There are two cases:
\begin{enumerate}
\item $p_i$, $p_j$ and $p_k$ are pairwise distinct. 
In this case, since $p_i$ and $p_j$ are in $S_k = N^+(p_k)$, $(p_i,p_k)$ and $(p_k,p_j)$ are edges of $G$, i.e., $(p_i,p_k) \in E$ and $(p_k,p_j) \in E$.
Thus, by definition of $G^2$, $(p_i, p_j)$ is an edge of $G^2$; this edge connects $p_i \in P$ and $p_j \in P'$.

\item $p_k = p_i$ or $p_k = p_j$.
Without loss of generality, assume that $p_k = p_i$. 
Since $p_i$ and $p_j$ are in $N^+(p_k) = N^+(p_i)$, $(p_i,p_j)$ must be an edge of $G$, i.e., $(p_i,p_j) \in E$.
Thus, by definition of $G^2$, $(p_i, p_j)$ is an edge of $G^2$; this edge connects $p_i \in P$ and $p_j \in P'$.
\end{enumerate} 

So in both cases, some edge in $G^2$ connects a node in $P$ with a node in $P'$.
\end{proof}

From Lemma~\ref{relatetgandtl} and Theorem~\ref{thm0}, we have:

\begin{theorem}\label{thm44}
Let $\mathcal{S}_{G}$ be the uniform m\&m system induced by
	an undirected graph \linebreak $G=(V,E)$ where $V = \{ p_1 , p_2 , \ldots, p_n \}$.

\begin{itemize}

\item If at most $t_G$ processes may crash in $\mathcal{S}_{G}$,
	then
	for every process $w$ in~$\mathcal{S}_{G}$,
	it is possible to implement an atomic SWMR register writable by $w$ and
	readable by all processes in~$\mathcal{S}_{G}$.

\item If $t_G +1 <n$ processes may crash in $\mathcal{S}_{G}$,
	then
	for every process $w$ in $\mathcal{S}_{G}$,
	it is impossible to implement an atomic SWMR register writable by $w$ and
	readable by all processes in~$\mathcal{S}_{G}$.

\end{itemize}
\end{theorem}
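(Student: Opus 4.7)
The plan is to derive Theorem~\ref{thm44} as an immediate corollary of Theorem~\ref{thm0}. By Definition~\ref{SG-def}, the uniform system $\mathcal{S}_G$ is precisely the general m\&m system $\mathcal{S}_L$ for the bag $L=\{S_1,\ldots,S_n\}$ with $S_i = N^+(p_i)$. Consequently, once I establish that $t_G$ (defined via edges of $G^2$) coincides with $t_L$ (defined via the sets of $L$) for this particular $L$, both the positive and the negative parts of the theorem follow directly from Theorems~\ref{thm-algo} and~\ref{thm-lb} applied to $\mathcal{S}_L$.

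The only real step, therefore, is to verify the equivalence of the two covering conditions used in the definitions of $t_G$ and $t_L$. Both definitions share the ``intersection'' clause verbatim, so it suffices to show that for any two disjoint subsets $P,P'\subseteq V$,
\[
\bigl(\exists i:\ S_i \text{ contains a process in } P \text{ and a process in } P'\bigr)
\iff
\bigl(G^2 \text{ has an edge } (u,v) \text{ with } u\in P,\ v\in P'\bigr).
\]
For the forward direction, if $p_1\in P$ and $p_2\in P'$ both lie in $S_i = N^+(p_i)$, then each of $p_1, p_2$ equals $p_i$ or is a neighbour of $p_i$ in $G$, so $p_1$ and $p_2$ are at distance at most $2$ in $G$, and hence $(p_1,p_2)$ is an edge of $G^2$. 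For the reverse direction, if $(u,v)$ is an edge of $G^2$ with $u\in P$ and $v\in P'$, then either $(u,v)\in E$, in which case $u,v\in N^+(u)=S_u$, or $u$ and $v$ have a common neighbour $k$ in $G$, in which case $u,v\in N^+(k)=S_k$; in either case some $S_i\in L$ witnesses the condition. This establishes $t_G = t_L$.

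With this identification in hand, Theorem~\ref{thm44} is immediate: if at most $t_G = t_L$ processes crash in $\mathcal{S}_G=\mathcal{S}_L$, Theorem~\ref{thm-algo} yields the required implementation; if more than $t_G=t_L$ processes crash, Theorem~\ref{thm-lb} yields the corresponding impossibility. Since the reduction is essentially syntactic and the graph-theoretic equivalence is a one-line observation, I do not anticipate any substantive obstacle; the only mild care needed is to handle the case when $u$ or $v$ coincides with $p_i$, which is covered by the $+$ in $N^+(p_i)$. (As a minor remark, the two occurrences of ``$t_L$'' in the statement of Theorem~\ref{thm44} should read ``$t_G$'', but by the above identification the statement is correct as written.)
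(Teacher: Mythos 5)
Your proposal is correct and follows essentially the same route as the paper: reduce to Theorem~\ref{thm0} via Definition~\ref{SG-def} and prove $t_G = t_L$ by checking, for disjoint $P,P'$, the equivalence between ``some $S_i=N^+(p_i)$ meets both $P$ and $P'$'' and ``some edge of $G^2$ joins $P$ to $P'$,'' with the same case split on whether the witnessing vertex coincides with an endpoint. Your parenthetical remark that the statement's ``$t_L$'' should read ``$t_G$'' is also apt, and harmless given the identification.
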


\begin{figure}
\begin{minipage}{.5\textwidth}
 \centering
 \hbox{\hspace{1cm}\includegraphics[width=.6\linewidth]{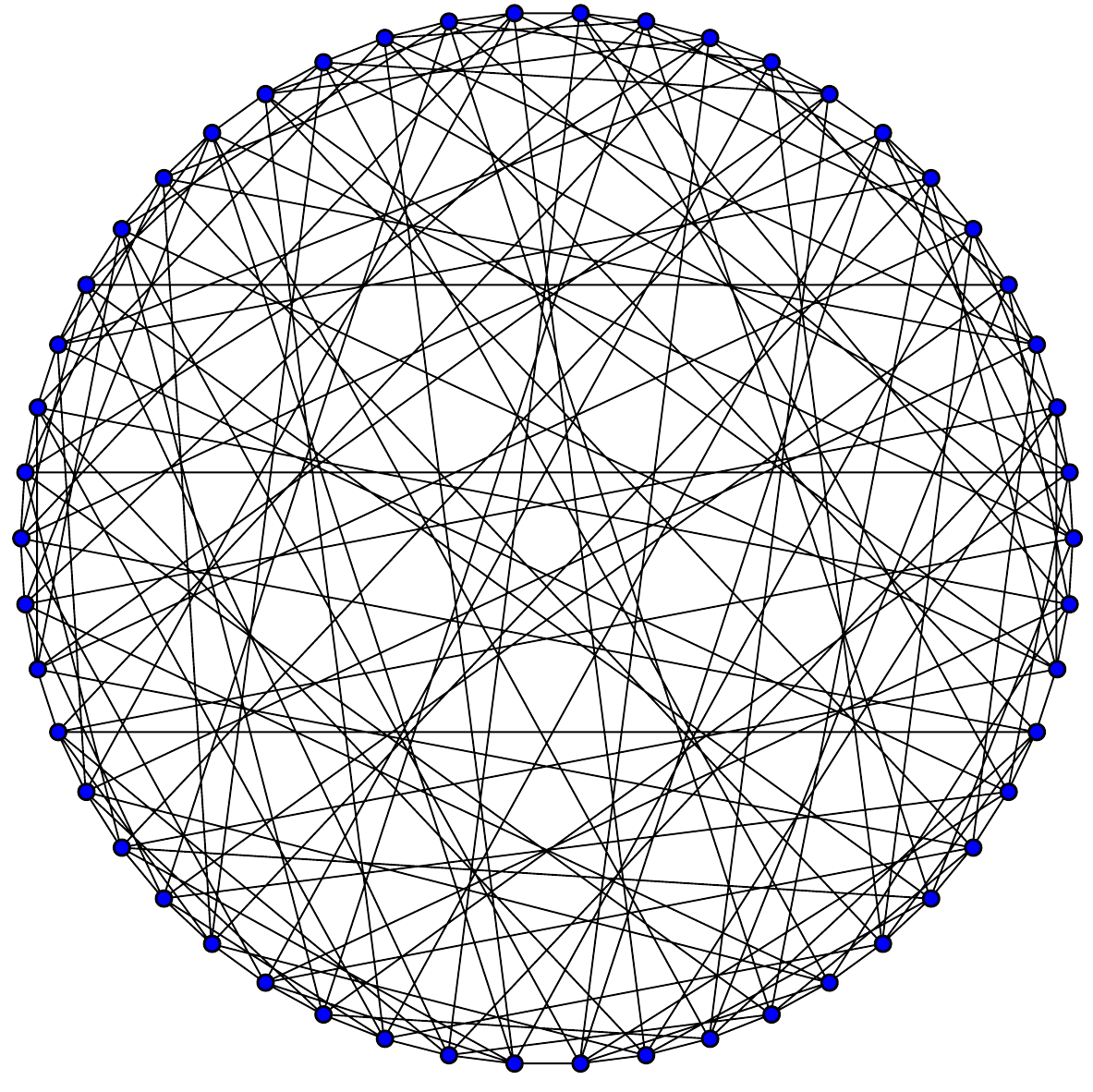}}
   \caption{The Hoffman-Singleton graph}
  \label{G3-fig}
\end{minipage}%
\begin{minipage}{.5\textwidth}
 \centering
 \hbox{\hspace{1cm}\includegraphics[width=.6\linewidth]{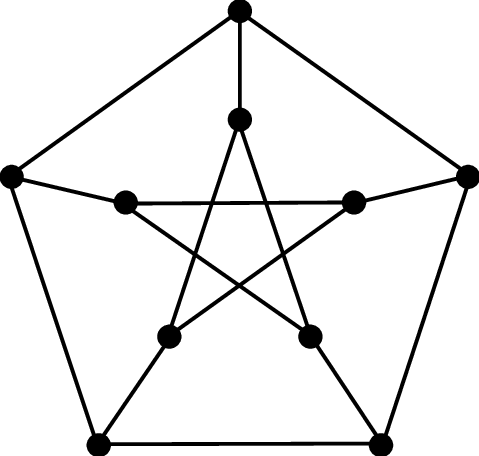}}
  \caption{The Petersen Graph}
  \label{G4-fig}
\end{minipage}
\end{figure}

To illustrate this theorem, we now give three examples.
For our first example, consider a pure message-passing system $\mathcal{S}$ with 50 nodes.
In~$\mathcal{S}$, one can implement an atomic SWMR register $\REG$ (readable by all the processes)
	only if \emph{at most} 24 process crashes may occur.
But if we allow each process of $\mathcal{S}$ to establish 7 pairwise
	 RDMA connections,
	one can implement $\REG$ in a way that tolerates \emph{any number} of process crashes
	(i.e., $\REG$ is wait-free).
This is because there is an undirected graph $G$ with $n=50$ nodes, each with degree 7,
	such that $G^2$ has an edge between \emph{every} pair of nodes
	($G$ is the well-known \emph{Hoffman-Singleton} graph~\cite{HoffmanSingleton} shown in Figure~\ref{G3-fig}~\cite{HS-Fig});
	so $t_G = n-1 =49$, and thus by Theorem~\ref{thm44}
	it is possible to implement $\REG$ in the uniform m\&m system $\mathcal{S}_{G}$
	in a way that tolerates up to 49 process crashes.
Some simple graph theory arguments show that this is optimal in two ways:
	(a) one cannot implement a wait-free register $\REG$ that is shared by 50 processes
	with fewer than 7 RDMA connections per process (more precisely, with any such implementation,
	if a process has fewer than 7 RDMA connections there must be another process with more than 7 RDMA connections),
	and (b) with at most 7 RDMA connections per process,
	one cannot implement a wait-free register $\REG$ that is shared by more than 50 processes.

As another example, consider the uniform m\&m system $\mathcal{S}_{G}$
	with $n=10$ processes and 3 RDMA connections per process induced by \emph{Petersen graph} $G$ shown in Figure~\ref{G4-fig}.\footnote{As with the Hoffman-Singleton graph,
	 Petersen graph is a \emph{Moore Graph} with diameter 2~\cite{HoffmanSingleton}.}
Since $G$ has diameter 2, $G^2$ has an edge between every pair of nodes,
	and so $t_G = n-1 = 9$.
Thus, by Theorem~\ref{thm44}, 
	one can implement an atomic SWMR register $\REG$
	in $\mathcal{S}_{G}$ in a way that tolerates up to 9 process crashes.
In contrast,  in a pure message-passing system
	with 10 processes, no implementation of $\REG$ can tolerate more than 4 process crashes.

As our last example, we show that \emph{expander graphs}
	with high \emph{vertex expansion ratio}~\cite{expandergraphs}
	induce uniform m\&m systems
	that support highly fault-tolerant register implementations.
To do so, first recall the definition of the vertex expansion ratio:

\begin{definition}
Let $G = (V, E)$ be any undirected graph.
\begin{enumerate}
\item The vertex boundary of a subset $S \subseteq V$ is 
$$\delta S = \{ v\in V: (u,v)\in E, u\in S,v \notin S\}$$
\item The vertex expansion ratio of $G$, denoted $h(G)$, is defined as 
$$h(G) = \min_{S \subseteq V: 0< |S|\leq |V|/2}\frac{|\delta S|}{|S|}$$
\end{enumerate}
\end{definition}

\noindent
We now prove that any graph $G$ with high vertex expansion ratio $h$ also has a large $t_G$.

\begin{theorem}\label{exp}
For any undirected graph $G$ with $n$ nodes and vertex expansion ratio $h$,
	$t_G \geq \lceil (1-\frac{1}{h^2+2h+2})n \rceil-1$.
\end{theorem}

\begin{proof}
Let $G=(V,E)$ be an undirected graph with $n$ nodes and vertex expansion ratio $h$,
To show $t_G \geq \lceil (1-\frac{1}{h^2+2h+2})n \rceil-1$,
	we must show that for every $t$, $0 \le t \le \lceil (1-\frac{1}{h^2+2h+2})n \rceil-1$, the following holds.
	For all disjoint subsets $P$ and $P'$ of $V$ of size $n-t$ each:
(*) some edge in $G^2$ connects a node in $P$ to 
	a node in~$P'$.

Let $t$ be such that $0 \le t \le \lceil (1-\frac{1}{h^2+2h+2})n \rceil-1$.
Clearly, $0 \le t < (1-\frac{1}{h^2+2h+2})n$.
Let $P$ and $P'$ be any two disjoint subsets of $V$ of size $m = n-t$ each.
We now show that~(*)~holds.

There are three cases:
	(1) $|P\cup\delta P|\leq n/2$,
	(2) $|P'\cup\delta P'|\leq n/2$, or
	(3) $|P\cup\delta P| > n/2$ and $|P'\cup\delta P'| > n/2$.

\textbf{Case 1}:  $|P\cup\delta P|\leq n/2$.
Since $|P| = m \leq |P\cup\delta P| \leq n/2$,
	by the definition of vertex expansion ratio $h$,
	$h \le |\delta P| / |P|$.
	Since $|P| = m$, we have
	$(h+1)m \leq |P\cup\delta P|$. 
Thus,
	again by the definition of vertex expansion ratio $h$,
	 $(h+1)^2m \leq |P \cup\delta P \cup \delta (P \cup\delta P)|$.

\begin{align*}
\text{By assumption: } ~~~
 t & < (1 - \frac{1}{h^2+2h+2})n\\
\Rightarrow 
\frac{n}{h^2+2h+2}  &< n-t \\
\Rightarrow 
\frac{n}{h^2+2h+2} &<m\\
\Rightarrow  \hspace{1.65cm}
n&< (h^2+2h+2)m\\
\Rightarrow \hspace{1.65cm}
n&<(h+1)^2m + m   
\end{align*}
Since $|P'| = m$, $|P \cup\delta P \cup \delta (P \cup\delta P)| \geq (h+1)^2m $, and $(h+1)^2m + m >n$, 
	the sets $P'$ and $P \cup\delta P \cup \delta (P \cup\delta P)$
  	intersect.
Thus,
	since $P'$ and $P$ are disjoint, 
	there is a node $q$ in $P'$ such that:
	either (i) $q$ is in $\delta P$, so it is connected to a node in $P$ by an edge in $G$, or
	(ii) $q$ is in $\delta (P\cup\delta P)$, so it is connected to a node in $P$ by a two-edge path in $G$.
Thus,
	by the definition of $G^2$,
	(*) holds

\textbf{Case 2}: $|P'\cup\delta P'|\leq n/2$.
	By a symmetric argument to Case 1,
	(*) holds.
	
\textbf{Case 3}: $|P\cup\delta P| > n/2$ and $|P'\cup\delta P'| > n/2$.
Then the sets $P\cup\delta P$ and $P'\cup\delta P'$ intersect.  
Thus,
	since $P$ and $P'$ are disjoint, 
	there are three cases: (1) $P$ and $\delta P'$ intersect, 
	so an edge in $G$ connects a node in $P$ to a node in $P'$,
	or (2) $P'$ and $ \delta P$ intersect, 
	so an edge in $G$ connects a node in $P'$ to a node in $P$,
	or (3) $\delta P'$ and $ \delta P$ intersect,
	so there are nodes $p\in P$ and $p' \in P'$ that are connected by a two-edge path in $G$.
Thus,
	in all cases,
	by the definition of $G^2$,
	(*) holds.
	
Therefore, in all cases, (*) holds.
\end{proof}

By Theorems~\ref{thm44} and~\ref{exp}, we have:

\begin{corollary}\label{corollaryTheo22}
Let $G$ be any undirected graph with $n$ nodes and vertex expansion ratio $h$.
If at most $\lceil (1-\frac{1}{h^2+2h+2})n \rceil-1$ processes crash in $\mathcal{S}_{G}$,
	then
	for every process $w$ in~$\mathcal{S}_{G}$,
	it is possible to implement an atomic SWMR register writable by $w$ and
	readable by all processes in~$\mathcal{S}_{G}$.
\end{corollary}

\section{Optimal randomized consensus in m\&m systems}

In the \emph{consensus problem}, each process \emph{proposes} some value 
	and must \emph{decide} a value such that the following properties hold:
	
\begin{itemize}
	\item \textbf{Validity}: Each decision value is one of the proposal values.
	\item \textbf{Agreement}: No two processes decide different values.
	\item \textbf{Termination}: Every process that does not crash eventually decides a value.
\end{itemize}

This problem cannot be solved
	in asynchronous distributed systems either with message-passing~\cite{flp},
	or with shared registers~\cite{LouiAbuAmara1987},
	but there are \emph{randomized} algorithms that solve
	\emph{randomized consensus}, a weaker version of the consensus problem that
		requires Termination ``only'' with probability 1.
In particular, in shared-memory systems, it is known that randomized consensus can be solved for any number of process crashes,
	but in message-passing systems, it can be solved if and only if fewer than half of the processes may crash.

We now show how to solve randomized consensus in \emph{m\&m systems} with the maximum fault-tolerance possible.
To do so, we combine the randomized consensus algorithm
	by Aspnes and Herlihy~\cite{AH1990consensus} (henceforth the AH algorithm),
	which was designed for shared-memory systems with \emph{atomic} SWMR registers,
	with the \emph{linearizable} implementation of such registers for m\&m systems that we gave in Section~\ref{algo}.
Doing so, however, is not as straightforward as it may seem:
	as pointed out in~\cite{sl1}, 
	a randomized algorithm that works with atomic registers
	does not necessarily work against a strong adversary
	if we replace the atomic registers that it uses
	with linearizable implementations of these registers.
In fact, it was shown that such a substitution may kill
	the termination property of a randomized algorithm~\cite{sltermination}.

In Section~\ref{consalgo}, we explain why we can indeed solve randomized consensus in m\&m systems
	by replacing the atomic registers used by the AH algorithm,
	with the linearizable implementation of atomic registers
	given in Section~\ref{algo}.
In Section~\ref{conslb}, we note that doing so is optimal in the number
	of processes crashes that it can tolerate in (both general and uniform) m\&m systems.
These results are summarized~by:

\begin{theorem}\label{thmcons0}
~
\begin{itemize}

\item Let $\mathcal{S}_{L}$ be the general m\&m system induced by
	a bag $L = \{S_1,\dots ,S_m\}$ of subsets of \linebreak  $\Pi = \{ p_1 , p_2 , \ldots, p_n \}$.
If at most $t_L$ processes may crash, randomized consensus can be solved;
	if $t_L+1 <n$ processes may crash, it cannot be solved.

\item Let $\mathcal{S}_{G}$ be the uniform m\&m system induced by
	an undirected graph $G=(V,E)$ where \linebreak $V = \{ p_1 , p_2 , \ldots, p_n \}$.
If at most $t_G$ processes may crash, randomized consensus can be solved;
	if $t_G+1 <n$ processes may crash, it cannot be solved.
\end{itemize}

\end{theorem}

The above theorem follows directly from Theorem~\ref{thmconsup} (Section~\ref{consalgo}),
	Theorem~\ref{thmconsdown} (Section~\ref{conslb}), and Lemma~\ref{relatetgandtl}.

\begin{figure}
\centering
\includegraphics[width=0.25\textwidth]{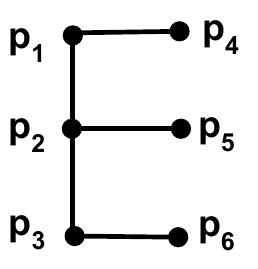}
  \caption{A graph $G$}
  \label{Gc}
\end{figure}

It is worth noting that the (optimal) fault-tolerance achieved
	by our randomized consensus algorithm for uniform m\&m systems
	is better than the fault-tolerance of the
	algorithm given for such systems in~\cite{aguilera-etal18}.\footnote{\cite{aguilera-etal18}
	considers randomized consensus algorithms only for uniform m\&m systems.}
For example, consider the undirected graph $G$ in Figure~\ref{Gc} and the corresponding m\&m system~$\mathcal{S}_G$.
It turns out that the randomized consensus algorithm given in~\cite{aguilera-etal18}
	tolerates at most 3 process crashes in system $\mathcal{S}_G$,
	but our algorithm tolerates up to $4$ process crashes in this system (because $t_G =4$ for this~$G$).

As another example, consider the Hoffman-Singleton graph $G$ (Section~\ref{noname}, Figure~\ref{G3-fig})
	and the corresponding m\&m system $\mathcal{S}_G$ with 50 processes.
As we explained in Section~\ref{noname}, our randomized consensus algorithm is wait-free,
	i.e., it tolerates up to $t_G = 49$ process crashes in $\mathcal{S}_G$.
In contrast, it can be shown that the algorithm given in~\cite{aguilera-etal18}
	cannot tolerate more than 45 process crashes in $\mathcal{S}_G$.

As a final example, consider any graph $G$ with $n$ nodes and expansion ration $h$.
The randomized consensus algorithm in \cite{aguilera-etal18} can tolerate
	at least $\lceil (1-\frac{1}{2h+2})n\rceil -1$ process crashes in the m\&m system $\mathcal{S}_G$ (Theorem 4.3 in~\cite{aguilera-etal18}).
In contrast,
	by Theorems~\ref{thmcons0} and Theorem~\ref{exp} we have:

\begin{corollary}\label{expansion}
For any undirected graph $G$ with $n$ nodes and vertex expansion ratio $h$,
	there is a randomized consensus algorithm that tolerates at least $\lceil(1-\frac{1}{h^2+2h+2})n\rceil -1$ process crashes in
	the uniform m\&m system $\mathcal{S}_G$.
\end{corollary}

\subsection{Solving randomized consensus}\label{consalgo}
The randomized consensus algorithm by Aspnes and Herlihy~\cite{AH1990consensus}
	was originally proved to work against a strong adversary
	in a shared-memory system
	under the assumption that the SWMR registers that it uses are atomic (i.e., instantaneous).
As we mentioned before, replacing the atomic registers of a randomized consensus
	algorithm with \emph{linearizable implementations} of atomic registers
	may kill the (probabilistic) termination property of this algorithm against a strong adversary~\cite{sltermination}:
	to preserve termination may require \emph{strongly linearizable} implementations,
	rather than just \emph{linearizable} implementations~\cite{sl1}.
As we show in Appendix~\ref{nsl}, however,
	our atomic register implementation for m\&m systems
	is \emph{not} strongly linearizable.\footnote{This also applies to the ABD register implementation for message-passing systems.}
 So \emph{prima facie}, 
	combining the AH algorithm with our implementations
	of atomic registers may not work against a strong adversary in m\&m systems.
	
It was recently shown~\cite{hadzilacos2020randomized},  however,
	that if we replace the atomic SWMR registers used by the AH algorithm
	with any linearizable implementation of atomic registers
	(such as the one that we give for m\&m systems in Section~\ref{algo}),
	we obtain a randomized consensus algorithm that does work against a strong adversary.
So, from Theorem~\ref{thm-algo} in Section \ref{algo} and Theorem 20\RMP{this 20 is hardwired: check!} in~\cite{hadzilacos2020randomized},
	we have:

\begin{theorem}\label{thmconsup}
Let $\mathcal{S}_{L}$ be the general m\&m system induced by
	a bag $L = \{S_1,\dots ,S_m\}$ of subsets of $\Pi = \{ p_1 , p_2 , \ldots, p_n \}$.
By replacing the atomic SWMR registers used by the randomized consensus algorithm given in~\cite{AH1990consensus}
	with the linearizable implementation of such registers for system $\mathcal{S}_{L}$ given in Section~\ref{algo},
	we obtain an algorithm that solves randomized consensus in $\mathcal{S}_L$ 
	and tolerates up to $t_L$ process crashes.
\end{theorem}

It is worth noting that \cite{hadzilacos2020randomized} proved
	that the AH algorithm does not need register atomicity or linearizability to work:	
	in fact this algorithm works against a strong adversary
	even with \emph{regular} SWMR registers~\cite{Lam86}.
In contrast to atomic SWMR registers,
	each operation of a regular SWMR register spans an interval that starts with an \emph{invocation}
	and terminates with a \emph{response}.
Moreover, in contrast to linearizable implementations of atomic SWMR registers,
	a regular SWMR register satisfies only Property~\ref{p1} but not Property~\ref{p2} (Section~\ref{model}), and so
	it allows ``new-old'' inversions~\cite{Lam86}.

\subsection{Lower bound}\label{conslb}

A fault-tolerance lower bound on solving consensus in uniform m\&m systems
	was given in~\cite{aguilera-etal18} (Theorem 4.4).
A simple generalization of this result
	shows that the randomized consensus algorithm of Theorem~\ref{thmconsup} is optimal in the number
	of process crashes that it can tolerate in general m\&m systems.
More precisely:

\begin{theorem}\label{thmconsdown}
Let $\mathcal{S}_{L}$ be the general m\&m system induced by
	a bag $L = \{S_1,\dots ,S_m\}$ of subsets of $\Pi = \{ p_1 , p_2 , \ldots, p_n \}$.
If $t_L +1<n$ processes may crash, randomized consensus can not be solved in $\mathcal{S}_L$.
\end{theorem}

\begin{proof} [Proof Sketch]
As in~\cite{aguilera-etal18}, the proof for general general m\&m systems is by a standard partition argument. 
Suppose, for contradiction, that there is a randomized consensus algorithm $\cal A$
	that tolerates $t = t_L+1 <n $ process crashes in $\mathcal{S}_L$ (*).
Since $t>t_L$,
	by the Definition~\ref{def-tl} of $t_L$ there are two disjoint subsets $P$ and $P'$
	of $\Pi$,
	of size $n-t$ each, such that:
	no set $S_i$ in $L$ contains both a process in $P$ and a process in $P'$~(**).
Since $t<n$ each of $P$ and $P'$ contains at least one process.
Since $P$ and $P'$ are disjoint, the sets $P$, $P'$, and $Q=\Pi-(P\cup P')$
	form a partition of $\Pi$ (see Figure~\ref{gsx2}).
	
Consider the following execution of $\cal A$.
Processes in $Q$ take no steps (they crash at the start of this execution).
Processes in $P$ and $P'$ propose $0$ and $1$, respectively.
Processes in $P$ ``think'' that all the processes $P' \cup Q$ are crashed from the start,
while processes in $P'$ ``think'' that all the processes $P \cup Q$, because:
	\begin{itemize}
	\item Each of $P$ and $P'$ contains $n-t$ processes and up to $t$ process can crash in $\mathcal{S}_L$.

	\item All the messages between processes in $P$ and $P'$ are delayed, and

	\item by (**):
	\begin{itemize}
	\item 	 no value written by any process in $P'$ on a shared register can be read by any process in $P$.
	\item 	 no value written by any process in $P$ on a shared register can be read by any process in $P'$.
	\end{itemize}

	\end{itemize}

Since the consensus algorithm $\cal A$ tolerates $t$ crashes and terminates with probability~1,
	every process in $P$ and $P'$ eventually decides $0$ and $1$,
	respectively (all the delayed messages between them are received only after they decide);
	this violates the Agreement property of consensus.
\end{proof}

\section{Number of RDMA connections versus fault-tolerance degree}

Recall that in a pure message-passing systems with $n$ nodes,
	one can implement a SWMR atomic register,
	and solve randomized consensus, for up to $\lceil \frac{n}{2}\rceil -1$ crashes;
	so obtaining this degree of fault-tolerance does not require any shared memory or RDMA connection.
This raises the following question:
	what is the minimum number of RDMA connections required to tolerate \emph{more than} $\lceil \frac{n}{2}\rceil -1$ failures in a uniform m\&m system?\footnote{Recall that such a system is modeled by a graph $G$ where nodes represent processes and each edge between two processes represents an RDMA connection between these processes
	which allows them to share some SWMR registers.}
In this section we show that
	$m$ RDMA connections are necessary and sufficient to tolerate $m$ process crashes,
	for every $m$ such that $\lceil \frac{n}{2}\rceil -1 < m \leq n-1$.

\begin{lemma}\label{bl1}
For all $n \ge 2$, and every undirected graph $G$ with $n$ nodes and $\lceil \frac{n}{2} \rceil-1$ edges,
	we can partition the nodes of $G$ into two sets of nodes $P$ and $\overline{P}$ of size
	$\lfloor \frac{n}{2} \rfloor$ and $\lceil \frac{n}{2} \rceil$, respectively,
	such that there is no edge between any node of $P$ and any node of $\overline{P}$.
\end{lemma}
\begin{proof}
We prove the theorem for the two possible cases:
	$n$ is even (Claim~\ref{bc1}) and $n$ is odd (Claim~\ref{bc2}).

\begin{claim}\label{bc1}
For all $k \ge 1$, and every undirected graph $G$ with $n = 2k$ nodes and $\lceil \frac{n}{2} \rceil-1 = k-1$ edges,
	we can partition the nodes of $G$ into two sets of nodes $P$ and $\overline{P}$ of size
	$\lfloor \frac{n}{2} \rfloor = \lceil \frac{n}{2} \rceil = k$,
	such that there is no edge between any node of $P$ and any node of $\overline{P}$.
\end{claim}
\begin{proof}
We prove this by induction on $k$.

\textsc{Base case:} $k=1$. Clearly, for any graph with $n=2$ nodes and $0$ edges, the two nodes can be partitioned so that there is no edge between them.

\textsc{Inductive step:}
Let $k \ge 1$.
Assume that for every undirected graph with $2k$ nodes and $k-1$ edges,
	the nodes can be partitioned into sets $P$ and $\overline{P}$ of size $k$ each,
	with no edge between them (Induction Hypothesis).
We now show for every undirected graph with $2k+2$ nodes and $k$ edges,
	the nodes can be partitioned into sets $P$ and $\overline{P}$ of size $k+1$ each,
	with no edge between them

Let $G$ be any undirected graph with $2k+2$ nodes and $k$ edges.
Since each edge can ``decrease'' the number of singleton nodes (i.e., nodes with degree 0) by at most 2,
	it is clear that $G$ has at least two singleton nodes, say $u$ and $v$.

\textbf{Case 1}: $G$ contains no cycle.
Since $G$ has $k \ge 1$ edge and no cycle, it must have at least one node $a$ with degree 1.
Let $(a,b)$ be the (only) edge incident to $a$ in $G$.
Let $G'$ be the graph obtained by removing nodes $u$ and $v$ and the edge $(a,b)$ from $G$. 
So $G'$ has $2k$ nodes and $k-1$ edges.

By our Induction Hypothesis,
	the nodes $G'$ can be partitioned into sets $P'$ and $\overline{P'}$ of size $k$ each,
	 with no edge between them.
There are two subcases:
	
\emph{Case (i):} nodes $a,b$ are in the same partition.
Without loss of generality, suppose $a,b$ are in $P'$.
We partition the nodes of $G$ into sets $P$ and $\overline{P}$ as follows:
	$P = P' \cup \{u\}$ and $\overline{P} =\overline{P'} \cup \{v\}$.
Clearly, $P$ and $\overline{P}$ are of size $k+1$ each.
Furthermore, $G$ has no edge between $P$ and $\overline{P}$ since
	(1) by Induction Hypothesis, $G'$ has no edge between
	$P'$ and $\overline{P'}$,
	(2) both $a,b$ are in $P'$, and
	(3) $u$ and $v$ are singleton nodes.	
	
\emph{Case (ii):} nodes $a,b$ are not in the same partition.
Without loss of generality, assume node $a$ is in $P'$ and node $b$ is in $\overline{P'}$.
We partition the nodes of $G$ into sets $P$ and $\overline{P}$ as follows:
	$P = P'  \cup \{u,v\} - \{a\}$ and $\overline{P} =\overline{P'} \cup \{a\}$.
Clearly, $P$ and $\overline{P}$ are of size $k+1$ each.
Furthermore, $G$ has no edge between $P$ and $\overline{P}$ since
	(1) by Induction Hypothesis, $G'$ has no edge between
	$P'$ and $\overline{P'}$,
	(2) both $a,b$ are in $P'$, and
	(3) $u$ and $v$ are singleton nodes.	
	 
Therefore, in all subcases of Case 1,
	the nodes of $G$ can be partitioned into sets of nodes $P$ and $\overline{P}$ of size $k+1$ each, with no edge between them. 
	
\textbf{Case 2}: $G$ contains cycles. 
Let $(a,b)$ be any edge in a cycle.
Let $G'$ be the graph obtained by removing nodes $u$ and $v$ and the edge $(a,b)$ from $G$. 
So $G'$ has $2k$ nodes and $k-1$ edges.

By our Induction Hypothesis,
	the nodes $G'$ can be partitioned into sets $P'$ and $\overline{P'}$ of size $k$ each,
	 with no edge between them.
Since nodes $a,b$ are in a cycle in $G$,
	after removing $(a,b)$,
	there is still a path from $a$ to $b$ in $G'$.
Since $G'$ has no edge between any node of $P'$ and any node of $\overline{P'}$,
	nodes $a,b$ must be in the same partition of nodes, say $P'$.
	
The proof now proceeds as in Case 1(i).
We partition the nodes of $G$ into sets
	$P = P' \cup \{u\}$ and $\overline{P} =\overline{P'} \cup \{v\}$
	of size $k+1$ each.
Clearly $G$ has no edge between $P$ and $\overline{P}$ because:
	(1) by Induction Hypothesis, $G'$ has no edge between
	$P'$ and $\overline{P'}$,
	(2) both $a,b$ are in $P'$, and
	(3) $u$ and $v$~are~singleton~nodes.	

So in all cases,	the nodes of $G$ can be partitioned into sets of nodes $P$ and $\overline{P}$ of size $k+1$ each, with no edge between them. 
\end{proof}	
	
\begin{claim}\label{bc2}
For all $k \ge 1$, and every undirected graph $G$ with $n = 2k+1$ nodes and $\lceil \frac{n}{2} \rceil-1 = k$ edges,
	we can partition the nodes of $G$ into two sets of nodes $P$ and $\overline{P}$ of size
	$\lfloor \frac{n}{2} \rfloor = k $ and $\lceil \frac{n}{2} \rceil = k+1$, respectively,
	such that there is no edge between any node of $P$ and any node of $\overline{P}$.
\end{claim}

\begin{proof}
Let $k \ge 1$ and $G$ be any undirected graph with $2k+1$ nodes and $k$ edges.
	We now show that
	the nodes of $G$ can be partitioned into two sets $P$ and $\overline{P}$ of size
	$k $ and $k+1$, respectively,
	such that there is no edge between them.

Let $G'= G +\{x\}$ where $x$ is a new singleton node.
Clearly, 
	 $G'$ has $2k+2$ nodes and $k$ edges.
By Claim~\ref{bc1},
	 the nodes of $G'$ can be partitioned into two sets $P'$ and $\overline{P'}$ of size $k+1$ each, with no edge between them in $G'$.
Without loss of generality, assume node $x$ is in $P'$.
We partition the nodes of $G$ into the two sets
	$P = P' - \{x\}$ and $\overline{P} =\overline{P'}$,
	of size $k$ and
	$k+1$, respectively.
Since $G'$ has no edge between
	$P'$ and $\overline{P'}$,
	$G$ has no edge between $P$ and $\overline{P}$.
\end{proof}
The lemma follows from Claim~\ref{bc1}) ($n$ is even) and Claim~\ref{bc2} ($n$ is odd).
\end{proof}

\begin{theorem}\label{bt1}
For all $n \ge 2$,  every undirected graph $G$ with $n$ nodes and $\lceil \frac{n}{2} \rceil-1$ edges has $t_G \leq \lceil \frac{n}{2}\rceil -1$.
\end{theorem}

\begin{proof}
Consider any undirected graph $G$ with $n$ nodes and $\lceil \frac{n}{2} \rceil-1$ edges where $n \ge 2$.
By Lemma~\ref{bl1},
	the nodes of $G$ can be partitioned into two sets $P$ and $\overline{P}$ of size at least $\lfloor \frac{n}{2} \rfloor$ each
	such that $G$ has no edge between them.
So $G^2$ does not have any edge between any node in $P$ and any node in $\overline{P}$.
By the definition of $t_G$ (Definition~\ref{doesnothaveone}),
	this implies that $t_G < n-\lfloor \frac{n}{2} \rfloor$.
Therefore  $t_G \le n-\lfloor \frac{n}{2} \rfloor -1 = \lceil \frac{n}{2}\rceil -1$.
\end{proof}

\begin{theorem}\label{rdma-min}
For all $n\geq 2$ and
	all $\lceil \frac{n}{2}\rceil -1 \leq m \leq n-1$:

\begin{enumerate}
\item Every graph $G$ with $n$ nodes and $m$ edges has $t_G \leq m$.

\item Some graph $G$ with $n$ nodes and $m$ edges has $t_G = m$.

\end{enumerate}

\end{theorem}

\begin{proof}

Let $n\ge 2$.

\begin{enumerate}

\item We prove Part 1 by induction on the number of edges $m$.

\textsc{Base case:} $m = \lceil \frac{n}{2} \rceil -1$. 
By Theorem~\ref{bt1}, 
	 every undirected graph $G$ with $n$ nodes and $\lceil \frac{n}{2} \rceil-1$ edges has $t_G \leq \lceil \frac{n}{2}\rceil -1$.
	
\textsc{Inductive step:}
Let $\lceil \frac{n}{2} \rceil -1 \leq k < n-1$.
Assume every graph $G$ with $n$ nodes and $k$ edges has $t_G \leq k$ (Induction Hypothesis).
Consider any graph $G =(V,E)$ with $n$ nodes and $k+1$ edges.
We must show that $G$ has $t_G \leq k+1$.
To prove this, by the definition of $t_G$ (Definition~\ref{doesnothaveone}), we must show that: (*) $G$ has two disjoint sets of nodes $P$ and $Q$ of size $n-(k+2)$ each such that $G^2$ has no edge between a node in $P$ and a node in $Q$.

For any set of nodes $S$ of $G$, let $(\delta S)_G$ be the \emph{set of neighbours of $S$ in $G$},
i.e., $(\delta S)_G = \{ v\in V: (u,v)\in E, u\in S,v \notin S\}$.
Note that proving (*) is equivalent to proving that $G$ has
	 two disjoint sets of nodes $P$ and $Q$ of size $n-(k+2)$ each such that
	$P \cup (\delta P)_G$ and $Q \cup (\delta Q)_G$ are disjoint.

\begin{observation}\label{bo1}
For any two disjoint sets of nodes $P'$ and $Q'$ in $G$, 
	if $P' \cup (\delta P')_G$ and $Q' \cup (\delta Q')_G$ are disjoint,
	then for any subsets $P \subseteq P'$ and $Q \subseteq Q'$,
	 $P \cup (\delta P)_G$ and $Q \cup (\delta Q)_G$ are also disjoint.
\end{observation}

Let $e$ be any edge of graph $G$ (this edge exists because, $n\ge 2$, $k \ge \lceil \frac{n}{2} \rceil -1 \ge 0$, and $G$ has $k+1$ edges).
Let $G'$ be the graph obtained by removing edge $e$ from $G$.
Thus, $G'$ has $n$ nodes and $k$ edges.
By the induction hypothesis, $t_{G'} \leq k$.
So, by the definition of $t_G$, $G'$ has two disjoint sets of nodes $P'$ and $Q'$ of size $n-(k+1)$ each such that $G'^2$ has no edge 	between a node in $P'$ and a node in $Q'$.
This implies $P' \cup (\delta P')_{G'}$ and $Q'\cup (\delta Q')_{G'}$ are disjoint (*).
There are two cases:

\textbf{Case 1}: $e$ is between two nodes in $P'$ or between two nodes in $Q'$.
In this case it is clear that $(\delta P')_G =(\delta P')_{G'} $ and $(\delta Q')_G = (\delta Q')_{G'}$.
Since by (*), $P' \cup (\delta P')_{G'}$ and $Q'\cup (\delta Q')_{G'}$ are disjoint,
	$P' \cup (\delta P')_G$ and $Q'\cup (\delta Q')_G$ are disjoint.
Let $P$ and $Q$ be any two subsets of $P'$ and $Q'$, respectively, of size $n-(k+2)$ each.
By Observation~\ref{bo1},
	$P \cup (\delta P)_G$ and $Q \cup (\delta Q)_G$ are also disjoint.
	
\textbf{Case 2}: $e$ is not between two nodes in $P'$ or between two nodes in $Q'$.
So $e$ connects at most one node in $P'$ and at most one node in $Q'$.
Thus, since $|P'|=|Q'|=n-(k+1)$,
	there exist subsets of nodes $P \subseteq P'$, $Q \subseteq Q'$
	such that $|P|=|Q|=n-(k+2)$ and no endpoint of $e$ is in $P$ or $Q$.
By (*) and Observation~\ref{bo1},
	$P \cup (\delta P)_{G'}$ and $Q \cup (\delta Q)_{G'}$ are disjoint.
Since $G$ differs from $G'$ only by having the extra edge $e$,
	and no endpoint of $e$ is in $P$ or $Q$,
	it is clear that $(\delta P)_G = (\delta P)_{G'}$ and $(\delta Q)_G =  (\delta Q)_{G'}$.
So, $P \cup (\delta P)_G$ and $Q \cup (\delta Q)_G$ are disjoint.

Since in all possible cases
	$G$ has
	 two disjoint sets of nodes $P$ and $Q$ of size $n-(k+2)$ each such that
	$P \cup (\delta P)_G$ and $Q \cup (\delta Q)_G$ are disjoint, 
 	Part 1 holds.
	
\item Let $m$ be such that $\lceil \frac{n}{2}\rceil -1 \leq m \leq n-1$.
To show Part 2,
	we now describe a graph $G$ with $n$ nodes and $m$ edges that has $t_G=m$.

The $n$ nodes of $G$ are $v_0, v_1 ,..., v_{n-1}$,
	and $G$ has an edge between $v_0$ and $v_i$ for every $1 \le i \le m$
	(so there are $n - m -1$ singleton nodes, namely  $v_{m+1},..., v_{n-1}$).

\begin{claim}\label{bc3}
$t_G \geq m$.
\end{claim}

\begin{proof}
By the definition of $t_G$, we must prove that for any two disjoint sets
	of nodes $P$ and $Q$ of size $n-m$ each, $G^2$
	has an edge between a node in $P$ and a node in $Q$.
	
Consider any two disjoint sets
	of nodes $P$ and $Q$ of size $n-m$ each.
Since there are only $n- m -1$ nodes in $\{ v_{m+1},..., v_{n-1} \}$,
	each of $P$ and $Q$ must have at least one node in $\{ v_0, ..., v_m \}$; say $v_i \in P$ and $v_j \in Q$.
Since $G$ has an edge between $v_0$ and $v_k$ for every $1 \le k \le m$,
	$G^2$ has an edge between $v_i \in P$ and $v_j \in Q$, as we needed to show.
\end{proof}
By Claim~\ref{bc3} and Part 1,
	$t_G = m$.
\end{enumerate}
\end{proof}

We can now answer the following question:
	what is the minimum number of RDMA connections required to tolerate \emph{more than} $\lceil \frac{n}{2}\rceil -1$ failures in a uniform m\&m system?
The answer is given by combining Theorem~\ref{rdma-min} with Theorems~\ref{thm44} and~\ref{thmcons0}:
	$m$ RDMA connections are necessary and sufficient to tolerate $m$ process crashes,
	for every $m$ such that $\lceil \frac{n}{2}\rceil -1 < m \le n-1$.
More precisely:

\begin{theorem}
Let $n\geq 2$ and
	 $\lceil \frac{n}{2}\rceil -1 < m$.

\begin{enumerate}
\item If $m \le n-1$, for some graph $G$ with $n$ nodes and $m$ edges,
	in the uniform m\&m system $\mathcal{S}_{G}$ induced by $G$:

	- there is an $m$-tolerant implementation of an atomic SWMR register for any writer $w$.
	
	- there is an $m$-tolerant randomized consensus algorithm.

\item  If $m < n-1$, for every graph $G$ with $n$ nodes and $m$ edges,
	in the uniform m\&m system $\mathcal{S}_{G}$ induced by $G$:

	- there is no $(m+1)$-tolerant implementation of an atomic SWMR register for any writer $w$.
	
	- there is no $(m+1)$-tolerant randomized consensus algorithm.

\end{enumerate}

\end{theorem}

\section{Concluding remarks}
Hybrid systems that combine message passing and shared memory
	have long been a subject of study in the systems community
	\cite{treadmarks,barrelfish,munin,DGY04,argodsm,alewife,grappa,shasta}.
To the best of our knowledge, however, such systems have only recently been examined from a theoretical point of view.
Aguilera \emph{et al.} gave a rigorous model for hybrid systems, namely the m\&m model,
	and studied how the combination of message passing
	and shared memory can be harnessed to improve solutions to certain fundamental problems:
In particular, they show that, compared to a pure message-passing system, a hybrid system can improve the fault tolerance of randomized consensus algorithms and reduce the synchrony necessary to elect a leader \cite{aguilera-etal18}.
A more recent paper by Aguilera \emph{et al.} extends the m\&m model to Byzantine failures,
	and shows how to improve the inherent trade-off between fault tolerance
	and performance for consensus, for both Byzantine and crash failures~\cite{AguileraBGMZ19}.
The present paper is another contribution to the theoretical study of hybrid systems:
	whereas the well-known ABD algorithm
	implements an atomic SWMR register
	with optimal fault tolerance in a pure message-passing system \cite{attiya1995sharing},
	here we implement such registers with optimal fault tolerance in m\&m systems.
We also show how to solve randomized consensus with optimal fault tolerance in such systems.
Extending our results to hybrid systems with Byzantine failures
	is a subject for future research.
	
Another possible extension to this work regards the design of uniform m\&m systems that maximize the fault-tolerance
	under some constraints on RDMA connections.
In this paper, we proved that to implement SWMR registers (or solve randomized consensus) in uniform m\&m systems,
	$m$ RDMA connections are necessary
	and sufficient for tolerating $m$ process crashes.
	In the ``sufficient'' part of our proof, however, there is a process that has an RDMA connection
	to every other process in the system;
	the corresponding graph $G$ is a ``star'' graph where one node has degree $n-1$ and every other node has degree 1.
In practice it is often desirable to limit the number of RDMA connections per process to some $k$.
So this raises the following question:
	what is the maximum fault tolerance that can be achieved in
	uniform m\&m systems $S_G$ induced by graphs $G$ \emph{of degree $k$}?
For $k=1$, it is easy to see that the m\&m system $S_G$ induced by the graph $G$ that consists of pairs of connected nodes is optimal,
	but its fault tolerance is quite low:
	with $n$ nodes, $S_G$ tolerates up to $t_G =  {n}/{2}$ process crashes if $n =2(2i+1)$ for some $i\ge0$,
	but only $t_G =  \lceil {n}/{2}\rceil -1$ crashes otherwise (which is no better than a pure message-passing system).
For $k=2$, the system $S_G$ induced by the graph $G$
	consisting of a simple cycle of $n$ nodes is optimal; with $n$ nodes,
	$S_G$ tolerates up to $t_G=\lceil n/2 \rceil +1$ process crashes (see Appendix~\ref{Two}).
The following is an open problem:
for each $k \ge 3$, find a graph $G$ of degree $k$
	that maximizes the number of process crashes
	tolerated by the induced m\&m system $S_G$.

\bibliographystyle{abbrv}
\bibliography{New}

\begin{appendices}

\section{Algorithm~\ref{algo1} is not strongly linearizable}\label{nsl}

We now prove that our implementation of atomic SWMR registers for m\&m systems given in Section \ref{algo}
	is \emph{not} strongly linearizable.
To do so, we show that the ABD algorithm~\cite{attiya1995sharing} that implements atomic SWMR registers for pure message-passing systems is not strongly linearizable; recall that the ABD algorithm
	is a special case of Algorithm~\ref{algo1}.

First recall the definition of \emph{strong linearizability} \cite{sl1}:
\begin{definition}
Let $\mathcal{H}$ be a prefix-closed set of histories. 
$\mathcal{H}$ is \emph{strongly linearizable} if there exists a function f mapping histories in $\mathcal{H}$ to sequential histories, such that
\begin{itemize}
\item for any $H \in \mathcal{H}$, $f(H)$ is a linearization of $H$, and
\item for any $G,H\in \mathcal{H}$, if $G$ is a prefix of $H$, then $f(G)$ is a prefix of $f(H)$.
\end{itemize}
\end{definition}

\begin{definition}
An implementation of a shared object type is strongly linearizable if
	the set of histories of the implementation is strongly linearizable.\footnote{In a history of an object implementation, we omit all steps other than the invocation and response steps on that object.}
\end{definition}

\begin{algorithm}[!ht]
\caption{ The ABD implementation of an atomic SWMR register writable by process~$w$ and readable by all processes in a message-passing system $ \mathcal{S}$, provided that at most $\lceil n/2\rceil -1$ processes~crash.}
\label{algoa}
~\\
\vspace{1mm}
$R[p]:$ local register writable and readable only by $p$ ;\\
\hspace*{1.1cm}initialized to $\langle 0 , \rinit \rangle$.

\vspace{5mm}

\begin{algorithmic}[1]
\Statex

\textsc{\wu($\langle sn_w,u \rangle$):} \Comment{executed by the writer $w$}
\Indent
\State $\textbf{send } \langle \textrm{W},\langle sn_w, u\rangle \rangle \textbf{ to every process } p\mbox{ in } \mathcal{S}$\label{communicatea}
\State \textbf{wait for} $\langle \textrm{ACK-W},sn_w \rangle \textbf{ messages from } \lceil \frac{n+1}{2}\rceil \textbf{ distinct processes }$\label{wait1a}
\State \Return
\EndIndent
\Statex
\Statex

\Comment{executed by every process $p \mbox{ in } \mathcal{S}$}\\
\textbf{upon receipt of a }$\langle \textrm{W},\langle sn_w, u\rangle \rangle$ \textbf{message from process $w$}:
\Indent
\State\label{wr1a} $\langle sn,-\rangle \gets R[p]$
\State\label{wr2a} \textbf{if} {$sn_w> sn$}\textbf{ then}
\Indent
\State\label{wr3a} $R[p]\gets \langle sn_w,u\rangle$
\EndIndent
\State\label{writeack} \textbf{send} $\langle\textrm{ACK-W}, sn_w \rangle$ \textbf{to process} $w$
\EndIndent
\Statex
\Statex

\textsc{\ru():}\Comment{executed by any process $q$}
\Indent
\State $sn_r\gets sn_r+1$
\State $\textbf{send } \langle \textrm{R}, sn_r\rangle \textbf{ to every process } p \mbox{ in } \mathcal{S}$\label{reada}
\State \textbf{wait for} $\langle\textrm{ACK-R}, sn_r, \langle -, - \rangle \rangle \textbf{ messages from } \lceil \frac{n+1}{2}\rceil  \textbf{ distinct processes }\label{wait2a}$
\State $\langle seq, val \rangle \gets \max\{ \langle r\_sn,r\_u \rangle~|$ received a $\langle\textrm{ACK-R}, sn_r,\langle r\_sn,r\_u \rangle \rangle$ message\} \label{adopta}
\State\label{callWUbyReadera} \textsc{\wu}{$(\langle seq, val \rangle)$}
\State \label{readreturn}\Return $\langle seq, val \rangle$
\EndIndent
\Statex
\Statex

\Comment{executed by every process $p \mbox{ in } \mathcal{S}$}\\
\textbf{upon receipt of a} $ \langle \textrm{R}, sn_r\rangle$ \textbf{message from a process $q$}:
\Indent
\State$\langle r\_sn,r\_u\rangle \gets R[p]$\label{readlocala}
\State \textbf{send} $\langle \textrm{ACK-R}, sn_r,\langle r\_sn,r\_u \rangle\rangle$ \textbf{to process} $q$\label{sendtoreadera}
\EndIndent
\Statex
\end{algorithmic}
\end{algorithm}

\begin{theorem}
The ABD implementation of an atomic SWMR register in pure message-passing systems (shown in Algorithm~\ref{algoa}) is \emph{not} strongly linearizable. 
\end{theorem}
\begin{proof}
Consider a pure message-passing system $\mathcal{S}$ with $3$ processes,
	namely, $w,p,q$.
Let $\REG$ be the atomic SWMR register implemented by Algorithm \ref{algoa} in $\mathcal{S}$.
$\REG$ can be written by $w$ and read by all processes of $\mathcal{S}$.

Let $\mathcal{H}$ be the set of histories of the Algorithm \ref{algoa} 
	(in these histories we omit all steps other than the invocations and responses of read and write operations on $\REG$).
To prove that Algorithm~\ref{algoa} is not strongly linearizable, 
		we show that $\mathcal{H}$ is not strongly linearizable. 
More precisely,
	we prove that for any function $f$ that maps histories in $\mathcal{H}$ to sequential histories,
	there exist histories $G,H\in \mathcal{H}$ such that $G$ is a prefix of $H$ but $f(G)$ is not a prefix of $f(H)$.
	
\begin{figure}[!ht]
\centering
\includegraphics[width=0.6\textwidth]{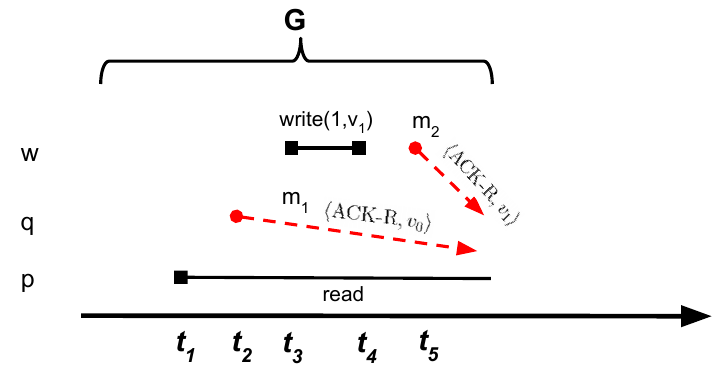}
  \caption{History $G$}
  \label{Hg}
\end{figure}

Let $f$ be a function that maps histories in $\mathcal{H}$ to sequential histories.
Consider the following history $G \in \mathcal{H}$ (shown in Figure~\ref{Hg}):
\begin{itemize}
\item Initially, $\REG$ contains $v_0$, and so all the local registers $R[-]$ contain the value $v_0$.
\item At time $t_1$,
	process $p$ starts an operation $\R$ to read $\REG$.
According to line \ref{reada} of Algorithm \ref{algoa},
	$p$ first sends $\langle \textrm{R}, sn_r\rangle$ to all processes,
	then:
	\begin{itemize}
	\item $p$ receives $\langle \textrm{R}, sn_r\rangle$ from itself,
	  reads $\langle 0, v_0 \rangle$ from $R[p]$ (line~\ref{readlocala}),
		and sends $\langle  \textrm{ACK-R}, sn_r,\langle 0, v_0 \rangle\rangle$ to itself (line~\ref{sendtoreadera}).
And so $p$ receives $\langle \textrm{ACK-R}, sn_r,\langle 0, v_0 \rangle\rangle$ from itself.
	\item let $m_0$ denote the message $\langle \textrm{R}, sn_r\rangle$ from $p$ to $w$;  delay $m_0$.
Since $w$ does not receive $m_0$, 
	$w$ takes no step.
	\item $q$ receives the message $\langle \textrm{R}, sn_r\rangle$ from $p$ and
	 reads $\langle 0, v_0 \rangle$ from $R[q]$ (line \ref{readlocala}).
Then $q$ sends back $\langle \textrm{ACK-R}, sn_r,\langle 0, v_0 \rangle\rangle$ to $p$ (line \ref{sendtoreadera}), 
	say at time $t_2$.
Let $m_1$ denote the message $\langle \textrm{ACK-R}, sn_r,\langle 0, v_0\rangle\rangle$ from $q$ to $p$ and delay $m_1$.
	\end{itemize}

\item At some time $t_3 > t_2$,
	the writer $w$ starts an operation $\W$
	to write the value $v_1$ into $\REG$ with sequence number $1$, 
	for some $v_1 \neq v_0$.
Process $w$ first sends the message $\langle \textrm{W},\langle 1, v_1\rangle \rangle$ to all processes (line \ref{communicatea}) including itself, but the message to $p$ is delayed.
Processes $w$ and $q$ receive $\langle \textrm{W},\langle 1, v_1\rangle \rangle$ from $w$,
	and since $R[w]$ and $R[q]$ contain $\langle 0, v_0\rangle$,
	by line~\ref{wr2a} of Algorithm~\ref{algoa},
	both $w$ and $q$ write $\langle 1, v_1\rangle$ to $R[w]$ and $R[q]$ respectively (line~\ref{wr3a}), and they
	send $\langle \textrm{ACK-W}, 1\rangle$ to $w$ (line~\ref{writeack}).
So $w$ receives $\langle \textrm{ACK-W}, 1\rangle$ from $w$ and $q$.
By line \ref{wait1a},
	the write operation $\W$ terminates,
	say at time $t_4$.
\item After time $t_4$,
	$w$ receives the delayed message $m_0$ from $p$.
Since now $R[w]$ contains $\langle 1, v_1\rangle$,
	$w$ reads $\langle 1, v_1\rangle$ in line~\ref{readlocala}.
And so $w$ sends $\langle \textrm{ACK-R}, sn_r,\langle  1, v_1  \rangle\rangle$ to $p$ (line \ref{sendtoreadera}), say at time $t_5$.
Let $m_2$ denote the message $\langle \textrm{ACK-R}, sn_r,\langle  1, v_1  \rangle\rangle$ from $w$ to $p$; delay $m_2$.
\end{itemize}

Note that in $G$,
	messages $m_1 = \langle \textrm{ACK-R}, sn_r,\langle 0, v_0 \rangle\rangle$ and $m_2 = \langle \textrm{ACK-R}, sn_r,\langle  1, v_1  \rangle\rangle$ are sent to $p$ but not yet received by $p$.
	As we will see,
		the order $p$ will receive these two messages determines the value that $p$ will read,
		 and hence determines how $p$'s read is linearised with respect to $w$'s write.

Since the write operation $\W$ terminates in $G$ and $f(G)$ is a linearisation of $G$,
	$\W$ is in $f(G)$.
Since the read operation $\R$ is concurrent with $\W$,
there are two cases: (1) $\R$ is before $\W$ in $f(G)$,
(2) $\R$ is not before $\W$ in $f(G)$.

\begin{figure}[!ht]
\centering
\includegraphics[width=0.6\textwidth]{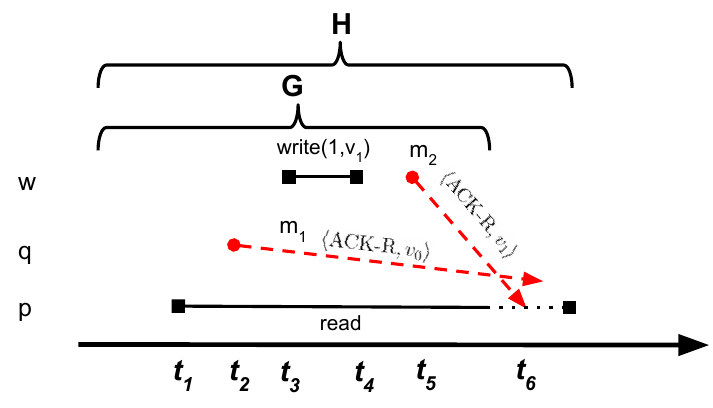}
  \caption{History $H$ of Case 1}
  \label{Hh1}
\end{figure}

\textbf{Case 1}: $\R$ is before $\W$ in $f(G)$.
Consider the following history $H \in \mathcal{H}$ (Figure~\ref{Hh1}):
\begin{itemize}
\item $H$ is an extension of $G$, i.e., $G$ is a prefix of $H$.
\item  At time $t_6> t_5$,
	$p$ receives the delayed message $m_2$ from $w$. 
Since $p$ receives $\langle  0, v_0 \rangle$ from itself and receives $\langle  1, v_1  \rangle$ from $w$,
 	by line \ref{adopta},
	$p$ selects $\langle  1, v_1  \rangle$, 
	writes back $\langle  1, v_1  \rangle$ in line~\ref{callWUbyReadera} 
	and returns $\langle 1, v_1  \rangle$ in line~\ref{readreturn},
i.e.,
	the read operation $\R$ of $p$ returns $v_1$.
\end{itemize}

Since the read operation $\R$ of $p$ returns $v_1$ in $H$, and $f(H)$ is a linearisation of $H$, 
		by Property~\ref{p1} of linearizable atomic SWMR register implementation,
	 	$\R$ is after $\W$ in $f(H)$.
However, since, by assumption,
	 $\R$ is before $\W$ in $f(G)$,
	$f(G)$ is not a prefix of $f(H)$.

\begin{figure}[!ht]
\centering
\includegraphics[width=0.6\textwidth]{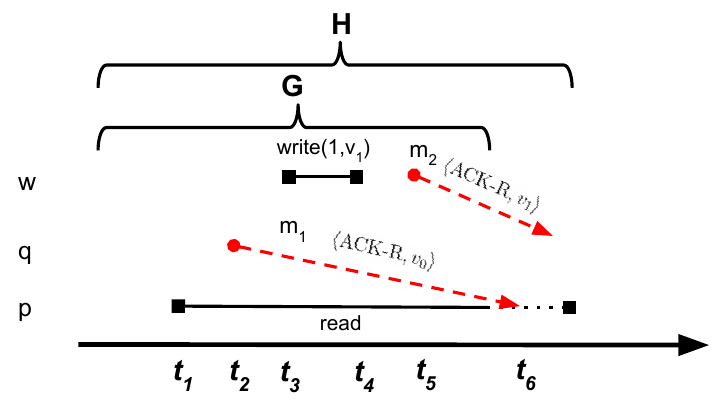}
  \caption{History $H$ of Case 2}
  \label{Hh2}
\end{figure}
\textbf{Case 2}: $\R$ is not before $\W$ in $f(G)$.
Consider the following history $H \in \mathcal{H}$ (Figure~\ref{Hh2}):
\begin{itemize}
\item $G$ is a prefix of $H$.
\item  At time $t_6 > t_5$,
	$p$ receives the delayed message $m_1$ from $q$. 
Since $p$ receives $\langle  0, v_0\rangle$ from both itself and $q$,
 	by line \ref{adopta},
	$p$ selects $\langle  0, v_0  \rangle$,
	 writes back $\langle  0, v_0  \rangle$ in line~\ref{callWUbyReadera},
	and returns $\langle 0, v_0  \rangle$ in line~\ref{readreturn},
 	 i.e.,
	the read operation $\R$ of $p$ returns $v_0$.
\end{itemize}

Since the read operation $\R$ of $p$ returns $v_0$ in $H$, and $f(H)$ is a linearisation of $H$,
	by Property~\ref{p1} of linearizable atomic SWMR register implementation,
	$\R$ is before $\W$ in $f(H)$.
However, since, by assumption, 
	$\R$ is not before $\W$ in $f(G)$,
	$f(G)$ is not a prefix of $f(H)$.

So in both cases, there is a history $H \in \mathcal{H}$
	such that $G$ is a prefix of $H$ but $f(G)$ is not a prefix of $f(H)$.
Therefore
	the theorem holds.
\end{proof}
\section{Optimal uniform m\&m systems limited to 2 RDMA connections per process}\label{Two}

We show that in uniform m\&m systems with $n$ processes,
	if we limit the number of RDMA connections to only two per process,
	then the maximum number of process crashes that can be tolerated
	(for implementing a SWMR register or solving randomized consensus) is $\lceil n/2 \rceil +1$,
	and this can be achieved by connecting the $n$ processes into a simple cycle.
This follows from Theorems~\ref{thm44},~\ref{thmcons0}, and the theorem~below:

\begin{theorem}

\ 
\begin{enumerate}[(1)]

\item Every graph $G$ with $n$ nodes and degree 2 has $t_G \le \lceil n/2 \rceil +1$.

\item The graph $G$ that
	consists of a simple cycle of $n$ nodes has $t_G=\lceil n/2 \rceil +1$.
\end{enumerate}
\end{theorem}

\begin{proof}
~
\newline
(1) Let $G$ be a graph with $n$ nodes and degree $2$.
To prove $t_G < \lceil n/2 \rceil +2$,
	we show that there are two disjoint subsets of nodes of size $n - (\lceil n/2 \rceil +2) = \lfloor n/2 \rfloor -2$ each 
	such that $G^2$ has no edge between~them.
	
First we partition the nodes in $G$ into two subsets $P$ and $Q$ 
	such that $|P| = \lfloor n/2 \rfloor$, 
	$|Q| = \lceil n/2 \rceil$, and there are \emph{at most two} edges in $G$ between nodes in $P$ and nodes in $Q$.
This can be done as follows.
Let $C_1, C_2, \ldots , C_\ell$ be the connected components of $G$, and let $n_i$ be the number of nodes of $C_i$ for $1 \le i \le \ell$.
There must be a component $C_j$ such that either:

\begin{enumerate}[(i)]
\item $n_1 + n_2 +  \cdots + n_{j-1} = n_j + n_{j+1} +  \cdots + n_{\ell}$, or

\item $n_1 + n_2 +  \cdots + n_{j-1} < n_j + n_{j+1} +  \cdots + n_{\ell}$ and  $n_1 + n_2 +  \cdots + n_{j-1} + n_j  > n_{j+1} +  \cdots + n_{\ell}$

\end{enumerate}

In the case (i), $P$ is the set of nodes in $C_1 , C_2 ,  \ldots , C_{j-1}$,
	and $Q$ is the set of nodes in $C_j , C_{j+1} ,  \ldots , C_{\ell}$.
Clearly $|P| = |Q| = n/2$, and there are no edges between the nodes of $P$ and the nodes of $Q$.

In the case (ii), it is easy to see that it is possible to split the $n_j$ nodes
	of component $C_j$ into $n_j^1$ and $n_j^2$ nodes such that
	$n_1 + n_2 +  \cdots + n_{j-1} + n_j^1 =  \lfloor n/2 \rfloor$ and
	$n_j^2 + n_{j+1} +  \cdots + n_{\ell} = \lceil n/2 \rceil$.
Furthermore, since $G$ has degree 2,
	$C_j$ is either a chain or a cycle, and so we can select
	the $n_j^1$ and $n_j^2$ nodes from $C_j$ such that
	$C_j$ has \emph{at most two edges} between these two sets of nodes.
Let $P$ be the set of nodes in $C_1 , C_2 ,  \ldots , C_{j-1}$ and the $n_j^1$ nodes from $C_j$,
	and $Q$ be the set of nodes in $C_j , C_{j+1} ,  \ldots , C_{\ell}$ and the $n_j^2$ nodes from $C_j$.
Clearly $|P| = \lfloor n/2 \rfloor$, $|Q| = \lceil n/2 \rceil$, and there are at most two edges
	between the nodes of $P$ and the nodes of $Q$.

Now we remove the endpoints of the edges between $P$ and $Q$, if such edges exist;
	note that this takes out at most two nodes from $P$ and two nodes from $Q$.
This gives two subsets $P'$ and $Q'$ of $P$ and $Q$
	such that:
	(a)~there are no edges between the nodes of $P'$ and the nodes of $Q'$, and
	(b)~$|P'| \ge \lfloor n/2 \rfloor-2$, $|Q'| \ge \lceil n/2 \rceil -2$.
Note that any node in $P'$ is at least 3 edges away from any node in $Q'$.
So no edge in $G^2$ connects a node in $P'$ and a node in $Q'$. 
Thus, there are two disjoint sets of nodes (namely, $P'$ and $Q'$)
	of size $\lfloor n/2 \rfloor-2$ each
	such that $G^2$ has no edge between them.
Therefore, $t_G < n - (\lfloor n/2 \rfloor-2) = \lceil n/2 \rceil +2$.

\vspace{2mm}
\noindent
(2) Consider the graph $G$ that consists of a simple cycle of $n$ nodes.
We show that $t_G \ge \lceil n/2 \rceil +1$.
For any subset $P$ of nodes of size $n-(\lceil n/2 \rceil +1) = \lfloor n/2 \rfloor -1$ in the cycle $G$,
	$P$ has at least two neighbours in $G$, i.e., $\delta P \ge 2$,
	and so $|P\cup \delta P| \ge \lfloor n/2 \rfloor +1$.
Thus, for every two sets $P$, $Q$ of nodes of size $\lfloor n/2 \rfloor -1$,
	$P\cup \delta P$ and $Q\cup \delta Q$ intersect.
This implies that $G^2$ has an edge between any two disjoint subsets of nodes of size $\lfloor n/2 \rfloor -1$.
Therefore,
	$t_G \ge n - (\lfloor n/2 \rfloor-1) = \lceil n/2 \rceil +1$.
\end{proof}

\end{appendices}
\end{document}